\theoremstyle{plain}
\theoremstyle{plain}\newtheorem{theorem}[thm]{Theorem}
\theoremstyle{plain}
\theoremstyle{plain}\newtheorem{example}[thm]{Example}
\theoremstyle{plain}\newtheorem{lemma}[thm]{Lemma}
\theoremstyle{plain}
\theoremstyle{plain}\newtheorem{proposition}[thm]{Proposition}
\def\eg{{\em e.g.}}
\def\cf{{\em cf.}}
\begin{document}

\title{Path Checking for MTL and TPTL over Data Words}

\author[Shiguang Feng]{Shiguang Feng\rsuper{a,\clubsuit}}	
\address{\lsuper{a}Institut f\"ur Informatik, Universit\"at Leipzig, Germany}	
\thanks{\lsuper{\clubsuit}The author is supported by the German Research Foundation (DFG), GRK 1763.}	

\author[Markus Lohrey]{Markus Lohrey\rsuper{b}}	
\address{\lsuper{b}Department f\"ur Elektrotechnik und Informatik,
  Universit\"at Siegen, Germany}	

\author[Karin Quaas]{Karin Quaas\rsuper{a,\spadesuit}}	
\thanks{\lsuper{\spadesuit}The author is supported by DFG, projects QU~316/1-1 and QU~316/1-2.}	



\keywords{Metric Temporal Logic, Timed Propositional Temporal Logic, Freeze LTL, Path Checking Problem, Deterministic One Counter Machines, Data Words}
\subjclass{F.4 MATHEMATICAL LOGIC AND FORMAL LANGUAGES - F.4.1 Mathematical Logic - Temporal Logic; }
\titlecomment{An extended abstract of this paper has been published at \emph{Developments in Language Theory} (DLT), 2015.}


\definecolor{Gray}{gray}{0.9}
\newcommand{\Op}{\mathsf{Op}}

\newcommand{\op}{\mathsf{op}}
\newcommand{\zero}{\mathsf{zero}}
\newcommand{\add}{\mathsf{add}}
\newcommand{\halt}{\mathsf{halt}}

\newcommand{\turing}{\mathbf{TM}}
\newcommand{\rmodels}{\models^{\textup{rel}}}
\newcommand{\fraisse}{Fra\"{i}ss\'{e}}
\newcommand{\game}{\mathbf{G}}
\newcommand{\tabincell}[2]{\begin{tabular}{@{}#1@{}}#2\end{tabular}}
\newcommand{\witness}{\mathsf{witness}}
\newcommand{\recurrent}{\mathsf{recurrent}}

\newcommand{\fltl}{\freeze}
\def\Karin#1{{\sf \fbox{$\spadesuit$ Karin: #1 $\spadesuit$}}}
\newcommand{\expspace}{\textup{EXPSPACE}}
\newcommand{\aptime}{\textup{APTIME}}
\newcommand{\pspace}{\textup{PSPACE}}
\newcommand{\ptime}{\textup{P}} 
\newcommand{\ls}{\textup{LOGSPACE}}
\newcommand{\alogspace}{\textup{ALOGSPACE}}
\newcommand{\np}{\textup{NP}}
\newcommand{\mync}{\textup{NC}}

\newcommand{\hyperack}{\textup{HyperAckermann}}
\newcommand{\ackermann}{\textup{Ackermann}}

\newcommand{\F}{\finally}
\newcommand{\G}{\glob}
\newcommand{\unary}{\textup{Unary}}
\newcommand{\pos}{\textup{Positive}}
\newcommand{\pun}{\textup{PosUnary}}

\newcommand{\puretptl}{\tptlpure}
\newcommand{\tptlpure}{\pure\tptl}
\newcommand{\tptlpun}{\pun\tptl}
\makeatletter
\newcommand*{\defeq}{\mathrel{\rlap{%
                     \raisebox{0.3ex}{$\m@th\cdot$}}%
                     \raisebox{-0.3ex}{$\m@th\cdot$}}%
                     =}
\makeatother

\makeatletter
\newcommand*{\ndefeq}{\mathrel{\rlap{%
                     \raisebox{0.3ex}{$\m@th\cdot$}}%
                     \raisebox{-0.3ex}{$\m@th\cdot$}%
                     \rlap{%
                     \raisebox{0.3ex}{$\m@th\cdot$}}
                     \raisebox{-0.3ex}{$\m@th\cdot$}}
                     =}
\makeatother

\newcommand{\dom}{\mathbb{D}}
\newcommand{\Z}{\mathbb{Z}}
\newcommand{\N}{\mathbb{N}}
\newcommand{\RP}{\mathbb{R}_{\geq 0}}
\newcommand{\myprop}{\mathbb{P}}

\newcommand{\mtl}{\textup{MTL}}
 \newcommand{\rmtl}{\textup{SMTL}}
 \newcommand{\smtl}{\rmtl}
 \newcommand{\puremtl}{\textup{pureMTL}}
 \newcommand{\mtlpureuna}{\textup{pureUnaMTL}}
 \newcommand{\mtlmodels}{\models_{\tiny{\mtl}}}
 \newcommand{\ltlmodels}{\models_\ltl}

 \newcommand{\tptlmodels}{\models_\tptl}

\newcommand{\rhs}{\mathsf{rhs}}
\newcommand{\val}{\mathsf{val}}

\newcommand{\true}{\textup{true}}
\newcommand{\false}{\textup{false}}
\newcommand{\sep}{\textit{ }|\textit{ }} 
\newcommand{\U}{\textup{U}} 
\newcommand{\p}{p}
\newcommand{\finally}{\textup{F}}
\newcommand{\glob}{\textup{G}}
\newcommand{\X}{\textup{X}} 
\newcommand{\urk}{\mathsf{Rank}} 
\newcommand{\ev}{\bar{\mathsf{0}}} 
\newcommand{\inft}{\mathsf{Lemn}} 

\newcommand{\tptl}{\textup{TPTL}}
\newcommand{\tptlex}{\textup{TPTL_F}}
\newcommand{\tptleq}{\textup{FreezeLTL}}
\newcommand{\tptlpos}{\textup{TPTL}_+}
\newcommand{\tptlexeq}{\textup{FreezeLTL_{F}}}
\newcommand{\regtptlexeq}[1]{\textup{FreezeLTL_{F}^{#1}}}
\newcommand{\regtptleq}[1]{\textup{FreezeLTL^{#1}}}
\newcommand{\regtptlex}[1]{\textup{TPTL_F^{#1}}}
\newcommand{\freeze}{\textup{FreezeLTL}}
\newcommand{\pure}{\textup{Pure}}

\newcommand{\ltl}{\textup{LTL}}
\newcommand{\ltlex}{\textup{LTL_F}}

 %
\newcommand{\problemx}[3]{
\vspace{3mm}
\par\noindent{\bf#1}\par\nobreak\vskip.2\baselineskip
\begingroup\clubpenalty10000\widowpenalty10000
\setbox0\hbox{\bf INPUT:\ \ }\setbox1\hbox{\bf QUESTION:\ \ }
\dimen0=\wd0\ifnum\wd1>\dimen0\dimen0=\wd1\fi
\vskip-\parskip\noindent
\hbox to\dimen0{\box0\hfil}\hangindent\dimen0\hangafter1\ignorespaces#2\par
\vskip-\parskip\noindent
\hbox to\dimen0{\box1\hfil}\hangindent\dimen0\hangafter1\ignorespaces#3\par
\endgroup\vspace{3mm}
}


\newcommand{\ZZ}{\ensuremath{\mathbb{Z}}\xspace}
\newcommand{\NN}{\ensuremath{\mathbb{N}}\xspace}

\newcommand{\Q}{\mathbb{Q}}
\newcommand{\QP}{\mathbb{Q}_{\geq 0}}
\newcommand{\Rat}{\Q}

\newcommand{\A}{\mathcal{A}}
\newcommand{\Aa}{\ensuremath{\mathcal{A}}\xspace}
\newcommand{\aff}{\mathbb{A}}

\newcommand*\ie{\textit{i.e.}\xspace}
\newcommand*\viceversa{\textit{vice versa}\xspace}

\newcommand{\tm}{\mathcal{T}}
\newcommand{\mtlex}{\textup{MTL_F}}
\newcommand{\mtlpos}{\textup{MTL}_+}

 \newcommand{\tcm}{\mathcal{M}}
 \newcommand{\inst}{\mathcal{I}}
 \newcommand{\newinst}[1]{\texttt{I}_{\texttt{#1}}}
 \newcommand{\counter}[1]{\texttt{C}_{\texttt{#1}}}
 \newcommand{\jumpset}[1]{\texttt{S}_{\texttt{#1}}}
 \newcommand{\zerojumpset}[1]{\texttt{S}_{\texttt{#1}}^{\texttt{0}}}
 \newcommand{\onejumpset}[1]{\texttt{S}_{\texttt{#1}}^{\texttt{1}}}

 \newcommand{\R}{\textup{R}} 
\renewcommand{\P}{P} 

\newcommand{\urnk}{\mathsf{Urk}}
\newcommand{\logic}{\mathbb{L}}
\newcommand{\class}{\mathbb{C}}

\newcommand{\blank}{\qed}
\newcommand{\conf}{\gamma}

\newcommand{\dummy}{\mathsf{dummy}}
\newcommand{\proj}{\mathsf{proj}}

\newcommand{\Sequiv}{\simeq^S}

\newcommand{\NotDefInMTL}{\mathsf{Equalin2Positions}}
\newcommand{\Uweak}{\U_{\mathsf{weak}}}

 \newcommand{\comp}{\mathsf{comp}}

\begin{abstract}
  \noindent 
  Metric temporal logic ($\mtl$) and timed propositional temporal logic ($\tptl$) are quantitative extensions of linear temporal logic, which are prominent and widely used in the verification of real-timed systems. 
  It was recently shown that the path-checking problem for $\mtl$, when evaluated over finite timed words, is in 
  the parallel complexity class $\mync$. 
  In this paper, we derive precise complexity results for the path-checking problem for $\mtl$ and $\tptl$ when evaluated over infinite data words over the non-negative integers. Such words may be seen as the behaviours of one-counter machines. For this setting, we give a complete analysis of the complexity of the path-checking problem depending on the number of register variables and the encoding of constraint numbers  (unary
or binary). 
  As the two main results, we prove that the path-checking problem for $\mtl$ is $\ptime$-complete, 
  whereas the path-checking problem for $\tptl$ is
  $\pspace$-complete. 
  The results yield the precise complexity of model checking deterministic one-counter machines against formulas of $\mtl$ and $\tptl$. 
\end{abstract}

\maketitle


\section{Introduction}
{\em Linear temporal logic} ($\ltl$) is nowadays one of the main logical formalisms for describing system behaviour.
Triggered by real-time applications, various timed extensions of {\sf LTL} have been invented. Two of the most prominent examples are $\mtl$ (metric temporal logic) \cite{K90} and $\tptl$ (timed propositional temporal logic) \cite{DBLP:journals/jacm/AlurH94}. In $\mtl$,  the temporal until modality  ($\U$) is annotated with time intervals.
For instance, the formula $p \, \U_{[2,3)} \, q$ holds at time $t$, if there is a time $t'\in [t+2, t+3)$, where $q$ holds, and $p$ holds during the interval
$[t, t')$. $\tptl$ is a more powerful logic~\cite{DBLP:journals/corr/CarapelleFGQ13} that is equipped with a \emph{freeze formalism}. It uses register variables, which can be set to the current time value and later, these register variables can be compared with the current time value. For instance, the above $\mtl$ formula $p\, \U_{[2,3)}\, q$ is equivalent to the $\tptl$ formula $x.(p \,\U\, (q \wedge 2 \leq x < 3))$. Here, the constraint $2 \leq x < 3$ should be read as: The difference of the current time value and the value stored in $x$ is in the interval $[2,3)$. 
Formulas in $\mtl$ and $\tptl$ are evaluated over finite or infinite \emph{timed words} of the form $(a_0,t_0)(a_1,t_1)\dots$, where $t_0,t_1,\dots$ is a  monotonically increasing sequence of real-valued \emph{timestamps}, and the 
	$a_i$ are actions that take place at timestamp $t_i$. 
	In the context of formal verification of timed automata, 
	\emph{satisfiability} and \emph{model checking} for $\mtl$ and $\tptl$ have been studied intensively in the past, see Table~\ref{table:complexity_results} for a summary of the most important results.

	\begin{table}[t]
		\centering
  \scalebox{1}{
    \def\arraystretch{1.2}
	\begin{tabular}{ l|l|l}
		& {\bf Satisfiability} & {\bf Model Checking}  \\
			\hline
			{\bf Real-Timed Words} &  &   \\
			\hspace{5mm}$\mtl$ finite words & $\mathcal{F}_{\omega^\omega}$-complete ~\cite{DBLP:journals/lmcs/OuaknineW07,DBLP:conf/icalp/SchmitzS11} & $\mathcal{F}_{\omega^\omega}$-complete ~\cite{DBLP:journals/lmcs/OuaknineW07,DBLP:conf/icalp/SchmitzS11}   \\
			\hspace{5mm}$\mtl$ infinite words & undecidable ~\cite{DBLP:conf/fossacs/OuaknineW06} & undecidable ~\cite{DBLP:conf/fossacs/OuaknineW06}  \\
			\hspace{5mm}$\tptl$ & undecidable~\cite{DBLP:journals/iandc/AlurH93}  & undecidable~\cite{DBLP:journals/iandc/AlurH93}  \\
			\hline
			{\bf Discrete Timed Words} &  &   \\
			\hspace{5mm}$\mtl$  & $\expspace$-complete ~\cite{DBLP:journals/iandc/AlurH93} & $\expspace$-complete ~\cite{DBLP:journals/iandc/AlurH93} \\
			\hspace{5mm}$\tptl$  & $\expspace$-complete ~\cite{DBLP:journals/jacm/AlurH94} & $\expspace$-complete ~\cite{DBLP:journals/jacm/AlurH94} \\
			\hline
			{\bf Data Words} & &  \\
			\hspace{5mm}$\mtl$ & undecidable~\cite{DBLP:conf/lata/CarapelleFGQ14} & undecidable~\cite{DBLP:conf/lata/Quaas13}  \\
			\hspace{5mm}$\tptl$  & undecidable~\cite{DBLP:conf/lata/CarapelleFGQ14} & undecidable~\cite{DBLP:conf/lata/Quaas13} \\
			\hspace{5mm}$\freeze$ & undecidable~\cite{DBLP:journals/tocl/DemriL09} & undecidable~\cite{DBLP:journals/tcs/DemriLS10} \\
			\hspace{5mm}$\freeze^1$ finite words & $\mathcal{F}_\omega$-complete~\cite{DBLP:journals/tocl/DemriL09,DBLP:conf/lics/FigueiraFSS11} & undecidable~\cite{DBLP:journals/tcs/DemriLS10} \\
			\hspace{5mm}$\freeze^1$ infinite words & undecidable~\cite{DBLP:journals/tocl/DemriL09} &  undecidable~\cite{DBLP:journals/tcs/DemriLS10} 
\end{tabular}
  }
  \caption[Table caption text]{Complexity results for the satisfiability problem and the model-checking problem for the logics $\mtl$, $\tptl$, $\freeze$, and its one-variable fragment $\freeze^1$. 
		By \emph{real-timed words} (\emph{discrete timed words}, respectively) we mean timed words with  timestamps in the non-negative reals (non-negative integers, respectively). Model checking for timed words is done for timed automata, and model checking for data words is done for one-counter machines. We do not distinguish between finite and infinite words if this does not influence the complexity.}
  \label{table:complexity_results}

\end{table}

	 The freeze mechanism from $\tptl$ has also received attention in connection with \emph{data words}. 
	 Data words generalize timed words and are of the form  $(a_0,d_0)(a_1,d_1)\dots$, where the data values $d_0, d_1, \dots$ come from an arbitrary infinite data domain. 
	 Following~\cite{DBLP:journals/tcs/DemriLS10}, we regard data words as computation paths of \emph{one-counter machines}, \emph{i.e.}, we study data words over the domain of the non-negative integers. Note  that, different to timed words, the sequence of data values $d_0, d_1, \dots$ does not have to be monotonically increasing. 
	 Applications for 
	 data words can be seen in areas where data streams of discrete values have to be analyzed, and the focus is on the dynamic variation of the values (e.g. streams of discrete sensor data or stock charts).

	 For reasoning about data words, besides $\mtl$ and $\tptl$ in~\cite{DBLP:conf/lata/CarapelleFGQ14}, a strict fragment of $\tptl$ called  $\fltl$ is studied  in~\cite{DBLP:journals/tocl/DemriL09,DBLP:journals/tcs/DemriLS10}. With formulas in $\fltl$ one can only test \emph{equality} of the value of a variable $x$ and the data value at the current position in a data word. 
	 Table~\ref{table:complexity_results} shows that for all these logics, model checking one-counter machines and the satisfiability problem are undecidable, with the exception of the satisfiability problem for the one-variable fragment of $\fltl$ when evaluated over finite data words. 
	 The situation changes if one considers model checking of \emph{deterministic} one-counter machines: in this case, $\fltl$-model checking is $\pspace$-complete~\cite{DBLP:journals/tcs/DemriLS10}. 
	 Note that model checking of deterministic one-counter machines is a special case of the following \emph{path-checking problem}: given a data word $w$ and some formula $\varphi$, does $w$ satisfy $\varphi$?

	 The path-checking problem plays a key role in 
	 \emph{run-time verification}~\cite{DBLP:journals/entcs/FinkbeinerS01,DBLP:conf/concur/MarkeyS03,DBLP:conf/formats/MalerN04}, specifically in \emph{offline monitoring}, where  the satisfaction of a specification formula is tested only for an individual single computation path of the observed system. 
	 Run-time verification may be the only practical alternative to check that certain temporal properties hold, in situations where the source code of the system under consideration is not available, or where model checking of the system is unfeasible if not undecidable, as it is the case for  one-counter machines.  

 For $\ltl$ and periodic words without data values, it was shown in~\cite{DBLP:conf/icalp/KuhtzF09} that the path-checking problem can be solved using an efficient parallel algorithm. More precisely, the problem belongs to $\textup{AC}^1(\textup{LogDCFL})$, a subclass of $\mync$. This result solved a long standing open problem; the best known lower bound is $\mync^1$, arising from $\mync^1$-hardness of evaluating Boolean expressions. 
 The $\textup{AC}^1(\textup{LogDCFL})$-upper complexity bound was later even established for the path-checking problem for $\mtl$ over finite timed words~\cite{DBLP:conf/icalp/BundalaO14}.

 In this paper, we continue the study of the path-checking problem started in~\cite{DBLP:journals/tcs/DemriLS10} for $\tptl$ and data words over the non-negative integers, \ie, we regard data words as the behaviours of one-counter machines.  
 Note that $\tptl$ is strictly more expressive than $\fltl$ in that, in contrast to the latter, with $\tptl$ one can express that the difference of the current data value in a data word and the value stored in a register belongs to a certain interval. 
 We further investigate path checking for $\mtl$, because, as it was recently proved~\cite{DBLP:journals/corr/CarapelleFGQ13}, $\mtl$ is strictly less expressive than $\tptl$ in the setting of data words over the non-negative integers. 
 More specifically, we investigate the path-checking problems for $\tptl$ over
data words that can be either finite or infinite periodic; in the latter case the
 data word is specified by an initial part, a period, and an offset number, which is added to the data values
 in the period after each repetition of the period.


We show that the $\textup{AC}^1(\textup{LogDCFL})$-membership result of \cite{DBLP:conf/icalp/BundalaO14} for path checking of $\mtl$ over finite timed words is quite sharp in the following sense: 
path checking for
$\mtl$ over (finite or infinite) data words as well as path checking for the one-variable fragment of $\tptl$ evaluated over monotonic
(finite or infinite) data words is $\ptime$-complete. 
Moreover, path checking for $\tptl$ (with an arbitrary number of register variables)
over finite as well as infinite periodic data words becomes
$\pspace$-complete. We also show that $\pspace$-hardness already holds (i) for the fragment of
$\tptl$ with only two register variables and (ii) for full $\tptl$, where all interval borders are encoded in unary (the latter result can be shown by a straightforward adaptation of the $\pspace$-hardness proof in \cite{DBLP:journals/tcs/DemriLS10}). These results yield a rather complete picture on the complexity
 of path checking for $\mtl$ and $\tptl$, see Figure~\ref{fig-paths} at the end of the paper.
We also show that $\pspace$-membership for the path-checking problem for $\tptl$ still holds
if data words are specified succinctly by so called straight-line programs, which have the ability to generate
data words of exponential length.  For ordinary $\ltl$ it was shown in \cite{DBLP:conf/concur/MarkeyS03} 
that path checking is $\pspace$-complete
if paths are represented by straight-line programs. Our result extends the $\pspace$ upper bound from \cite{DBLP:conf/concur/MarkeyS03}.

Since the infinite data word produced by a deterministic one-counter machines is periodic, we can transfer all complexity results for the
infinite periodic case to deterministic one-counter machines, assuming that update numbers are encoded in unary notation.  
 The third author proved recently
 that model checking for $\tptl$ over deterministic one-counter machines is decidable
 \cite{DBLP:conf/lata/Quaas13}, but the complexity remained open. Our results show that
 the precise complexity is $\pspace$-complete.  This also generalizes the $\pspace$-completeness result for
 $\fltl$ over deterministic one-counter machines \cite{DBLP:journals/tcs/DemriLS10}. Our $\pspace$ upper bound for $\tptl$ model 
 checking over deterministic one-counter machines also holds when the update numbers of the one-counter machines are given in binary notation.
These are called {\em succinct one-counter automata} in \cite{DBLP:conf/icalp/GollerHOW10,DBLP:conf/concur/HaaseKOW09}, where the complexity of reachability problems 
and model-checking problems for various temporal logics over succinct nondeterministic one-counter automata were studied.

An extended abstract of this paper without full proofs 
appeared in \cite{FLQ15}.

\section{Temporal Logics over Data Words}

\paragraph{\bf Data Words}
Let $\myprop$ be a finite set of {\em atomic propositions}. A \emph{word} over $\myprop$ is a finite or infinite sequence $P_0 P_1 P_2 \dots$, where $P_i\subseteq \myprop$ for all $i\in\N$. 
If $P_i=\{p_i\}$ is a singleton set for every position $i$ in the word, 
then we may also write $p_0 p_1 p_2 \dots$.   
 A \emph{data word} over $\myprop$ is a finite or infinite sequence
 $(P_0,d_0)(P_1,d_1)(P_2,d_2)\dots$, where  $(P_i,d_i)\in (2^\myprop\times\N)$ for all $i\in\N$.
A data word is {\em monotonic} ({\em strictly monotonic}), if $d_i \leq d_{i+1}$ ($d_i < d_{i+1}$) for all $i\in\N$. 
A {\em pure} data word is a finite or infinite sequence $d_0 d_1 d_2 \dots$ of natural numbers; it can be identified
with the data word  $(\emptyset,d_0)(\emptyset,d_1) (\emptyset,d_2) \dots$.
        We use $(2^\myprop \times\N)^*$ and $(2^\myprop\times\N)^\omega$, respectively, to denote the
        set of finite and infinite, respectively, data words over $\myprop$. 
        We use $|w|$ to denote the \emph{length} of a data word $w$, \ie, the number of all pairs $(P_i,d_i)$ in $w$. If $w$ is infinite, then $|w|=+\infty$.

        Let $w=(P_0,d_0)(P_1,d_1)\dots $ be a data word, and let $i\in\{0,\dots,|w|\}$ be a position in $w$. 
        We define $w[{i\!:}]$ to be the suffix of $w$ starting in position $i$, \ie, ${w[{i\!:}]\defeq(P_i,d_i)(P_{i+1},d_{i+1})\dots}$.
        For an integer $k\in\Z$ satisfying  $d_i+k \geq 0$ for all $i \geq 0$,  we define the data word
        $w_{+k}\defeq(P_0,d_0+k)(P_1,d_1+k)\dots$.        
        
        We use $u_1u_2$ to denote the \emph{concatenation} of two data words $u_1$ and $u_2$, where $u_1$ has to be finite. 
        For finite data words $u_1,u_2$ and $k\in\N$, we define
        $$u_1(u_2)^\omega_{+k} \defeq u_1 u_2 (u_2)_{+k} (u_2)_{+2k} (u_2)_{+3k}\dots$$
        
        For complexity considerations, the encoding of the data values and the offset number $k$ (in an infinite data word) makes a difference.
        We speak of {\em unary} (resp., {\em binary}) encoded data words if all these numbers are given in unary (resp., binary) encoding.

        \paragraph{\bf Linear Temporal Logic}
        Given a finite set $\myprop$ of propositions, the set of formulas of linear temporal logic ($\ltl$, for short) is built up from $\myprop$ by Boolean connectives and the {\em until} modality $\U$ using the following grammar:
        \[
        \varphi \ndefeq \true\sep p \sep \neg\varphi \sep \varphi\wedge\varphi \sep \varphi\U\varphi
        \]
        where $p\in\myprop$.
        Formulas of $\ltl$ are interpreted over \emph{words} over $\myprop$. Let $w=P_0\,P_1\,P_2\dots$ be a word over $\myprop$,
        and let $i$ be a position in $w$. We define the {\em satisfaction relation for $\ltl$} inductively as follows:
        \begin{itemize}
        \item $(w,i) \models_\ltl \true$.
        \item $(w,i) \models_\ltl p$ if, and only if,   $p \in P_i$.
            \item $(w,i) \models_\ltl \neg\varphi$ if, and only if,  $(w,i)\not\models_\ltl \varphi$.
            \item $(w,i) \models_\ltl \varphi_1\wedge\varphi_2$ if, and only if,    $(w,i)\models_\ltl \varphi_1$ and $(w,i)\models_\ltl \varphi_2$.
            \item $(w,i)\models_\ltl \varphi_1\U\varphi_2$ if, and only if,  there exists a position $j>i$ in $w$ such that $(w,j)\models_\ltl\varphi_2$, and $(w,k)\models_\ltl\varphi_1$ for all positions $k$ with $i<k<j$.
        \end{itemize}
        We say that a word \emph{satisfies} an $\ltl$ formula $\varphi$, written $w\models\varphi$, if $(w,0)\models\varphi$.
        
        We use the following standard abbreviations:
        \begin{align*}
        	\false&\defeq\neg\true &  \F\varphi&\defeq\true \U\varphi\\
            \varphi_1\vee\varphi_2 &\defeq\neg(\neg\varphi_1\wedge\neg\varphi_2) &
            \G\varphi& \defeq\neg\F\neg\varphi\\
            \varphi_1\rightarrow \varphi_2 &\defeq \neg\varphi_1\vee\varphi_2 &
            \X\varphi &\defeq \false \U\,\varphi\\
            \varphi_1\,\R\,\varphi_2&\defeq \neg(\neg\varphi_1\U\neg\varphi_2) &
            \X^{m}\varphi &\defeq \underbrace{\X\dots\X}_{m}\varphi
        \end{align*}

The modalities $\X$ (\emph{next}), $\F$ (\emph{finally}) and $\G$ (\emph{globally}), respectively, are \emph{unary} operators, which refer to the \emph{next} position, \emph{some} position in the future and \emph{all} positions in the future, respectively. The binary modality $\R$ is the \emph{release} operator, which is useful to transform a formula into an equivalent \emph{negation normal form}, where the negation operator ($\neg$) may only  be applied to $\true$ or to propositions.

\paragraph{\bf Metric Temporal Logic and Timed Propositional Temporal Logic}
Metric Temporal Logic, $\mtl$ for short, extends $\ltl$ in that the until modality $\U$ may be annotated with an interval over $\Z$. More precisely, the set of $\mtl$ formulas is defined by the following grammar:
$$\varphi \ndefeq \true\sep p \sep \neg\varphi \sep \varphi\wedge\varphi \sep
\varphi\U_{I}\varphi $$
where $p\in\myprop$ and $I \subseteq \Z$ is an interval with endpoints in
$\Z\cup\{-\infty,+\infty\}$.

Formulas in $\mtl$ are interpreted over data words. 
Let $w=(P_0,d_0)(P_1,d_1)\dots$ be a data word over $\myprop$, and let $i$ be a position in $w$. 
We define the {\em satisfaction relation for $\mtl$}, denoted by $\mtlmodels$,
inductively as follows (we omit the obvious cases for $\neg$ and $\wedge$):

 \begin{itemize}
 \item $(w,i) \mtlmodels p$ if, and only if,   $p\in P_i$.
 \item $(w,i)\mtlmodels \varphi_1\U_{I}\varphi_2$ if, and only if,  there exists a position $j>i$ in $w$ such that $(w,j)\mtlmodels\varphi_2$, $d_j - d_i \in I$, and  $(w,k)\mtlmodels\varphi_1$ for all positions $k$ with $i<k<j$.
        \end{itemize}
We say that a data word \emph{satisfies} an $\mtl$ formula $\varphi$, written
$w\mtlmodels\varphi$, if $(w,0)\mtlmodels\varphi$.
We use the same syntactic abbreviations as for $\ltl$, where every temporal operator is annotated with an interval, \eg, 
$\F_I\varphi\defeq\true\U_I\varphi$ and $\X_I\varphi \defeq \false \U_I\varphi$. For annotating the temporal operators, we may also use pseudo-arithmetic expressions of the form $x\sim c$, where ${\sim} \in\{<,\le,=,\ge,>\}$ and $c\in\Z$. For instance, we may write $\F_{=2} p$ as abbreviation for $\F_{[2,2]}p$. 
        If $I=\Z$, then we may omit the annotation $I$ on $\U_I$.

        Some of our results for lower bounds already hold for fragments of $\mtl$, which we explain in the following. 
        We write $\mtl(\F,\X)$ to denote the \emph{unary} fragment of $\mtl$ in which the only temporal modalities allowed are $\F$ and $\X$, and we write
        $\mtl(\F)$ to denote the fragment of $\mtl$ in which $\F$ is the only allowed temporal modality. 
        We write $\pure\mtl$ ($\pure\mtl(\F,\X)$, respectively) to denote the set of $\mtl$ formulas ($\mtl(\F,\X)$ formulas, respectively), in which no propositional variable from $\myprop$ is used.

%


Next, we define Timed Propositional Temporal Logic,  $\tptl$ for short. 
 Let $V$ be a countable set of \emph{register variables}. Formulas of $\tptl$ are built by the following grammar:
        \[
        \varphi \ndefeq \true \sep p \sep x\sim c \sep \neg\varphi \sep \varphi\wedge\varphi
        \sep \varphi\U\varphi \sep x.\varphi
        \]
        where $p\in\myprop$, $x\in V$, $c\in\Z$, and $\sim\,\in\{<,\leq,=,\geq, >\}$. 

        A \emph{register valuation} $\nu$ is a function from $V$ to $\Z$.
        Given a register valuation $\nu$, a data value $d\in\Z$, and a variable $x\in V$,
        we define the register valuations $\nu+d$ and $\nu[x\mapsto d]$ as follows:
        $(\nu+d)(y)=\nu(y)+d$ for every $y\in V$,
        $(\nu[x \mapsto d])(y)=\nu(y)$ for every $y\in V\backslash\{x\}$, and $(\nu[x \mapsto d])(x)=d$.
        Let $w=(P_0,d_0)(P_1,d_1)\dots$ be a data word over $\myprop$, let $\nu$ be a register
        valuation, and let $i$ be a position in $w$.
        The satisfaction relation for $\tptl$, denoted by $\tptlmodels$, is defined inductively in the obvious way; we only give the definitions for the new formulas:
        \begin{itemize}
            \item $(w,i,\nu)\tptlmodels x\sim c$ if, and only if,  $d_i -\nu(x)\sim c$.
            \item $(w,i,\nu) \tptlmodels x.\varphi$ if, and only if,  $(w,i,\nu[x\mapsto d_i])\tptlmodels\varphi$.
        \end{itemize}

        Intuitively, $x. \varphi$, means that  $x$ is \emph{updated} to the data value at the current position of the data word, and $x \sim c$ means that, compared to the last time that $x$ was updated, the data value has changed by at least (at most, or exactly, respectively) by $c$.
        We say that a data word $w$ satisfies a $\tptl$ formula $\varphi$, written
        $w\models\varphi$, if $(w,0,\mathbf{d_0})\tptlmodels\varphi$, where $\mathbf{d_0}$ denotes the valuation that maps all register variables to the initial data value $d_0$ of the data word $w$.

        We use the same syntactic abbreviations as for $\ltl$.
        We define the fragments $\tptl(\F,\X)$, $\tptl(\F)$, and $\pure\tptl$ like the
        corresponding fragments of $\mtl$.
Additionally, we define $\freeze$ to be the subset of $\tptl$ formulas $\varphi$
where every subformula $x\sim c$ of $\varphi$ must be of the form $x=0$. 
Further, for every $r\in\N$, we use $\tptl^r$ to denote the fragment of $\tptl$ in which at most $r$ different register variables occur; similarly for other $\tptl$-fragments.

 For complexity considerations, it makes a difference whether the numbers $c$ in constraints $x \sim c$
     are binary or unary encoded, and similarly for the interval borders in $\mtl$.
     We annotate our logics $\logic$ by an index $\textup{u}$ or $\textup{b}$, \ie, we write $\logic_{\textup{u}}$ respectively $\logic_{\textup{b}}$, to emphasize that
      numbers are encoded in unary (resp., binary) notation.
%
    The \emph{length} of a ($\tptl$ or $\mtl$) formula $\psi$, denoted by $\vert\psi\vert$, is the number of symbols occurring in $\psi$.

    In the rest of the paper, we study the path-checking problems for our logics over data words.
    Data words can be (i) finite or infinite, (ii) monotonic or non-monotonic, (iii) pure or non-pure,
    and (iv) unary encoded or binary encoded.
    For a logic $\logic$ and a class of data words $\class$, we consider the {\em path-checking problem  for $\logic$ over $\class$}: 
    given some data word $w \in \class$ and some formula $\varphi \in \logic$, does $w\models_\logic \varphi$ hold?

\section{Background from Complexity Theory}

We assume that the reader is familiar with the complexity classes $\ptime$ (deterministic 
polynomial time) and $\pspace$ (polynomial space), more background can be found for instance
in \cite{AroBar09}. Recall that, by Savitch's theorem, nondeterministic polynomial space is equal to 
deterministic polynomial space. All completeness results in this paper refer to logspace reductions.

We will make use of well known characterizations of $\ptime$ and $\pspace$  in terms of alternating Turing machines.
An {\em alternating Turing machine} is a nondeterministic Turing machine, whose state set $Q$ is partitioned in four
disjoint sets $Q_{\text{acc}}$ (accepting states), $Q_{\text{rej}}$ (rejecting states),  $Q_\exists$ (existential states), and $Q_\forall$
(universal states). A configuration $c$, where the current state is
$q$, is {\em accepting} if (i) $q\in Q_{\text{acc}}$ or
(ii) $q\in Q_\exists$ and there exists an accepting successor
configuration of $c$ or
(iii) $q\in Q_\forall$ and all successor configurations of
$c$ are accepting. The machine accepts an input $w$ if, and only
if, the initial configuration for $w$ is accepting.
It is well known that the class of all languages that can be accepted
by an alternating Turing machine in polynomial time ($\aptime$) is equal
to $\pspace$, and that the class of all languages that can be accepted
by an alternating Turing machine in logarithmic space ($\alogspace$) is equal
to $\ptime$.

A couple of times we will mention the complexity class $\textup{AC}^1(\textup{LogDCFL})$. 
For completeness, we present the definition: the class $\textup{LogDCFL}$ is the class of all languages that are logspace
reducible to a deterministic context-free language. Then $\textup{AC}^1(\textup{LogDCFL})$
is the class of all problems that can be solved by a logspace-uniform 
circuit family of polynomial size and logarithmic depth, where in addition to ordinary Boolean
gates (NOT, AND, OR) also oracle gates for problems from $\textup{LogDCFL}$ 
can be used. More details can be found in  \cite{DBLP:conf/icalp/KuhtzF09}.
The class  $\textup{AC}^1(\textup{LogDCFL})$ belongs to $\mync$ (the class of all problems
	that can be solved on a parallel random-access machine (PRAM) in polylogarithmic time with polynomially many processors),
which, in turn, is contained in $\ptime$.

\section{Upper Complexity Bounds}
\label{sec-upper-bounds}

In this section we prove the upper complexity bounds for the path-checking problems. We distinguish between (i) unary or binary encoded data words, and (ii) finite and infinite data words.

\subsection{Polynomial Space Upper Bound for $\tptl$} 
For the most general path-checking problem ($\tptl_\textup{b}$ over infinite binary encoded data words) we can devise an alternating
polynomial time (and hence a polynomial space) algorithm by constructing an alternating Turing machine that, given a $\tptl_{\textup{b}}$ formula $\varphi$ and an infinite binary encoded data word $w$, has an accepting run if, and only if, $w\tptlmodels\varphi$. 
 The main technical difficulty is to bound the position
in the infinite data word and the values of the register valuation, so that they can be stored in polynomial space. 
 \begin{theorem}\label{Path check TPTL-upper-PSPACE}
 	 Path checking for $\tptl_\textup{b}$ over infinite binary encoded data words is in $\pspace$.
    \end{theorem}
   Before we give the proof of Theorem~\ref{Path check TPTL-upper-PSPACE}, 
    we introduce some helpful notions and prove some lemmas.

    \paragraph{\bf Relative semantics.}
Let $w$ be a data word, $i\in\N$ be a position in $w$, and let $\delta$ be a register valuation.
For technical reasons, we introduce a \emph{relative satisfaction relation} for $\tptl$, denoted by $\rmodels$, as  follows.
For Boolean formulas, $\rmodels$ is defined like $\models_\tptl$. For the other operators we define:
\begin{itemize}
\item $(w,i,\delta)\rmodels\varphi_{1}\U\varphi_{2}$ if, and only if, there exists a position $j>i$ in $w$ such that 
	$(w,j,\delta+(d_j-d_i))\rmodels\varphi_{2}$, and $(w,k,\delta+(d_t-d_i))\rmodels\varphi_{1}$ for all positions $k$ with $i<k<j$
          \item $(w,i,\delta)\rmodels x\sim c$ if, and only if, $\delta(x)\sim c$
          \item  $(w,i,\delta)\rmodels x.\varphi$ if, and only if, $(w,i,\delta[x\mapsto 0])\rmodels\varphi$.
        \end{itemize}
        We say that the data word $w$ satisfies the formula $\varphi$ under the relative semantics, written $w\rmodels\varphi$, if $(w,0,\mathbf{0})\rmodels\varphi$,
where $\mathbf{0}$ denotes the valuation function that maps all register variables to $0$.

The main advantage of the relative semantics is the following: under the normal $\tptl$ semantics, a constraint $x \sim c$ is true under a valuation $\nu$ at a position with data value
$d$, if $d-\nu(x) \sim c$ holds. In contrast, under the relative semantics,  a constraint $x \sim c$ is true under a valuation $\delta$, if $\delta(x) \sim c$ holds,
\ie, the data value at the current position is not important.
The following lemma implies that $w\tptlmodels\varphi$ if, and only if, $w\rmodels\varphi$, which allows us to work with the relative semantics. 
    \begin{lemma}\label{equivalence of semantics}
    	    Let $w$ be a data word and $d_i$ the data value at position $i$.
	    If $\delta(x)=d_{i}-\nu(x)$ for every register variable $x$, then for every $\tptl$ formula $\varphi$,  $(w,i,\nu)\tptlmodels \varphi$ if, and only if, $(w,i,\delta)\rmodels \varphi$. 
    \end{lemma}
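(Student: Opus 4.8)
The plan is to prove the equivalence by structural induction on the $\tptl$-formula $\varphi$, with the register valuation $\nu$ (and the correspondingly defined $\delta$) varying across the induction. The key observation to keep in mind throughout is the \emph{invariant} relating the two valuations: at a position $i$ with data value $d_i$, the relative valuation $\delta$ must satisfy $\delta(x) = d_i - \nu(x)$ for every register variable $x$. The whole point of the relative semantics is that it tracks the \emph{difference} $d_i - \nu(x)$ directly as the stored value, rather than recomputing it from $\nu(x)$ and the current data value; so the induction must show that this invariant is preserved by every syntactic construct, most importantly by the until modality (which shifts positions) and by the freeze quantifier (which resets a register).

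The routine cases are the Boolean connectives and the atomic proposition $p$, where both semantics agree by definition and the induction hypothesis applies directly without altering the valuation. The genuinely informative base case is the constraint $x \sim c$: here $(w,i,\nu) \tptlmodels x \sim c$ holds iff $d_i - \nu(x) \sim c$, while $(w,i,\delta) \rmodels x \sim c$ holds iff $\delta(x) \sim c$, and these coincide precisely because of the hypothesis $\delta(x) = d_i - \nu(x)$. First I would dispatch this case to make clear that the invariant is exactly what is needed.

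The two steps requiring care are the until and the freeze. For the freeze $x.\varphi$, under $\tptlmodels$ we pass to $\nu[x \mapsto d_i]$, while under $\rmodels$ we pass to $\delta[x \mapsto 0]$; I must check that the updated pair still satisfies the invariant at position $i$, i.e. that $(\delta[x \mapsto 0])(x) = d_i - (\nu[x \mapsto d_i])(x)$, which reads $0 = d_i - d_i$, and that on all other variables the invariant is inherited unchanged. The induction hypothesis at position $i$ with the updated valuations then gives the claim. For the until $\varphi_1 \U \varphi_2$, the two definitions quantify over the same future positions $j > i$, but the inductive subcalls happen at a different position $j$ (or intermediate $t$) with data value $d_j$, so I must verify that the valuation $\delta + (d_j - d_i)$ appearing in the relative semantics is exactly the relative valuation that corresponds to the original $\nu$ at position $j$. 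Concretely, the invariant at position $j$ demands $(\delta + (d_j - d_i))(x) = d_j - \nu(x)$, and substituting $\delta(x) = d_i - \nu(x)$ gives $d_i - \nu(x) + d_j - d_i = d_j - \nu(x)$, as required, while $\nu$ itself is unchanged because $\tptlmodels$ does not shift the valuation on moving to a later position. This alignment of the shift $d_j - d_i$ in the relative semantics with the unshifted $\nu$ in the standard semantics is the main obstacle and the heart of the lemma; once it is established, the induction hypothesis applied at $j$ (for $\varphi_2$) and at each intermediate $t$ (for $\varphi_1$) completes the step, and the global equivalence $w \tptlmodels \varphi \iff w \rmodels \varphi$ follows by instantiating at $i = 0$, where $\ev$ maps every register to $d_0$ and $\tilde{\mathsf{0}}$ maps every register to $0$, so the invariant $\tilde{\mathsf{0}}(x) = d_0 - \ev(x) = d_0 - d_0 = 0$ holds at the start.
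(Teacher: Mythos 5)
Your proposal is correct and takes essentially the same approach as the paper's own proof: a structural induction on $\varphi$ maintaining the invariant $\delta(x)=d_i-\nu(x)$, with the constraint case $x\sim c$ as the informative base case, the freeze case checked via $\delta[x\mapsto 0](y)=d_i-\nu[x\mapsto d_i](y)$, and the until case handled by verifying that $\delta+(d_j-d_i)$ is the relative valuation corresponding to the unshifted $\nu$ at position $j$. If anything, you spell out the invariant verification in the until step more explicitly than the paper, which folds it into the appeal to the induction hypothesis.
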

    \begin{proof}
    	   The proof is by induction on the structure of the formula $\varphi$. We only consider the non-trivial cases.    	    
    	    
    	    \noindent
    	    {\bf Base case.} Assume $\varphi=x\sim c$.
    	    \begin{eqnarray*}
    	    	    (w,i,\nu)\tptlmodels \varphi & \Leftrightarrow  & d_i-\nu(x)\sim c \quad\mbox{ (definition $\tptlmodels$)}\\
    	    	    & \Leftrightarrow & d_i-d_i+\delta(x)\sim c \quad\mbox{ (assumption)}\\
    	    	    & \Leftrightarrow & (w,i,\delta)\rmodels\varphi \quad\mbox{ (definition $\rmodels$)} \, .
\end{eqnarray*}

\noindent
{\bf Induction step.}
    \begin{itemize}    	    	
\item Assume $\varphi=\varphi_1 \U \varphi_2$. We have	
	\[\begin{aligned} 
		& \, (w,i,\nu)\tptlmodels\varphi_{1}\U\varphi_{2} \\
		 \Leftrightarrow  \,  & \,  \mbox{there exists } i<j<|w|. (w,j,\nu)\tptlmodels\varphi_2, \mbox{ and} \\
		& \,  (w,t,\nu)\tptlmodels\varphi_1 \mbox{ for all } i<t<j \quad\mbox{ (definition $\tptlmodels$)}\\
		\Leftrightarrow \,  & \,  \mbox{there exists } i<j<|w|. (w,j,\delta+ (d_j-d_i))\rmodels\varphi_2, \mbox{ and}\\
		& \, (w,t,\delta+(d_t-d_i))\rmodels\varphi_1 \mbox{ for all } i<t<j \quad\mbox{(induction hypothesis)} \\ 
		\Leftrightarrow \, & \,  (w,i,\delta)\rmodels\varphi_{1}\U\varphi_{2} \quad\mbox{(definition $\rmodels$)}
        	\end{aligned}\]
    	    
    \item Assume $\varphi=x.\varphi_1$. By definition 
	\[(w,i,\nu)\tptlmodels x.\varphi_{1} \Leftrightarrow (w,i,\nu[x\mapsto d_{i}])\tptlmodels\varphi_{1},\] and  
	\[(w,i,\delta)\rmodels x.\varphi_1 \Leftrightarrow (w,i,\delta[x\mapsto 0])\rmodels \varphi_1.\]
	Clearly, 
	\[(\delta[x\mapsto 0])(x) = 0 = d_i-d_i = d_i - (\nu[x\mapsto d_i])(x),\]
	and 
	\[(\delta[x\mapsto 0])(y)=\delta(y)= d_i - \nu(y)=d_i- (\nu[x\mapsto d_i])(y)\] for every register variable $y\neq x$.
	The result follows by  induction hypothesis on $\delta[x\mapsto 0]$ and $\nu[x\mapsto d_i]$. 
\qedhere
    \end{itemize}
    \end{proof}

For the next three lemmas, we 
let $w = u_{1}(u_{2})^{\omega}_{+k}$, where $u_1$ and $u_2$ are  finite data words,  
and $k \geq 0$.
Further assume $i \geq |u_1|$, and let $\psi$ be a $\tptl$ formula.
\begin{lemma}\label{lemma_claim1}
	For all register valuations $\delta$, ${(w,i,\delta)\rmodels \psi}$ if and
  only if
  ${(w,i+|u_2|,\delta)\rmodels \psi}$.
\end{lemma}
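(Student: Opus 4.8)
The plan is to prove, by induction on the structure of $\phi$, the slightly stronger statement: \emph{for every position $i\geq|u_1|$ and every register valuation $\delta$, we have $(w,i,\delta)\rmodels\phi$ if, and only if, $(w,i+|u_2|,\delta)\rmodels\phi$.} Strengthening the claim to quantify over \emph{all} positions $i\geq|u_1|$ (rather than fixing the one in the statement) is essential, since the inductive step for the until operator will recurse to other positions in the periodic part.

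First I would record the periodicity of $w=u_1(u_2)^\omega_{+k}$. Writing any position $t\geq|u_1|$ uniquely as $t=|u_1|+n|u_2|+r$ with $n\geq 0$ and $0\leq r<|u_2|$, the definition of $u_1(u_2)^\omega_{+k}$ gives $P_t=Q_r$ and $d_t=e_r+nk$, where $Q_r$ and $e_r$ denote the $r$-th proposition set and data value of $u_2$. Since $t+|u_2|=|u_1|+(n+1)|u_2|+r$, this yields the two facts that drive the whole argument: for every $t\geq|u_1|$,
\[
P_{t+|u_2|}=P_t \qquad\text{and}\qquad d_{t+|u_2|}=d_t+k .
\]

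The base cases are where the relative semantics pays off. For $\phi=p$ the claim is immediate from $P_{i+|u_2|}=P_i$; the decisive case is $\phi=x\sim c$, where by definition of $\rmodels$ both $(w,i,\delta)\rmodels x\sim c$ and $(w,i+|u_2|,\delta)\rmodels x\sim c$ reduce to the single condition $\delta(x)\sim c$, which does not mention the position at all. This is exactly why we switched to $\rmodels$; under the absolute semantics $\tptlmodels$ the truth of $x\sim c$ would depend on $d_i$, which differs from $d_{i+|u_2|}$. The Boolean cases are trivial, and the freeze case $\phi=x.\phi_1$ is handled by unfolding to $(w,\cdot,\delta[x\mapsto 0])\rmodels\phi_1$ and applying the induction hypothesis to $\phi_1$ with the valuation $\delta[x\mapsto 0]$; here the quantification over all $\delta$ in the strengthened statement is used.

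The main obstacle is the until case $\phi=\phi_1\U\phi_2$. I would match witnesses through the bijection $j\mapsto j+|u_2|$ between $\{j:j>i\}$ and $\{j':j'>i+|u_2|\}$, and intermediate positions through $t\mapsto t+|u_2|$. The key computation is that the offsets cancel: for any $j,t>i\geq|u_1|$,
\[
d_{j+|u_2|}-d_{i+|u_2|}=(d_j+k)-(d_i+k)=d_j-d_i ,
\]
so that the register valuations occurring in the relative semantics, namely $\delta+(d_{j+|u_2|}-d_{i+|u_2|})$ and $\delta+(d_j-d_i)$, coincide. All positions appearing in the quantification satisfy $j,t>i\geq|u_1|$, and after subtracting $|u_2|$ the positions $j',t'$ likewise exceed $|u_1|$, so the induction hypothesis applies to $\phi_1,\phi_2$ at each of them, and the two sides of the equivalence translate into one another. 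This closes the induction, and specialising to the given $i$ yields the lemma.
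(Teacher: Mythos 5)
Your proof is correct, but it takes a different route from the paper's. You argue directly by structural induction under the relative semantics $\rmodels$, with the explicit strengthening to all positions $i\geq|u_1|$ and all valuations $\delta$; the two periodicity facts $P_{t+|u_2|}=P_t$ and $d_{t+|u_2|}=d_t+k$, together with the cancellation $d_{j+|u_2|}-d_{i+|u_2|}=d_j-d_i$ in the until case, are exactly what is needed, and your bookkeeping (witness bijection $j\mapsto j+|u_2|$, preservation of the bound $j,t>i\geq|u_1|$ under shifting by $\pm|u_2|$) is sound. The paper instead detours through the absolute semantics: it sets $\nu(x)=d_i-\delta(x)$ and $\nu'(x)=\nu(x)+k$, invokes Lemma~\ref{equivalence of semantics} twice to translate both sides of the claim into $(w,i,\nu)\tptlmodels\phi$ and $(w,i+|u_2|,\nu')\tptlmodels\phi$, observes that the suffixes satisfy $w_{\geq(i+|u_2|)}=(w_{\geq i})_{+k}$, and reduces everything to the shift-invariance statement $(w_{\geq i},0,\nu)\tptlmodels\phi \Leftrightarrow ((w_{\geq i})_{+k},0,\nu')\tptlmodels\phi$, whose proof it leaves as ``a simple induction.'' What each approach buys: the paper's version is modular, reusing the already-proved semantics-equivalence lemma and isolating the periodicity in an intuitively obvious invariance of $\tptlmodels$ under simultaneously shifting the data word and the valuation by $k$; your version is more self-contained and makes explicit precisely the inductive strengthening (over positions and valuations) that the paper's final ``simple induction'' would also silently require, and it showcases why the relative semantics was introduced, since the atomic case $x\sim c$ becomes position-independent outright rather than being mediated through the valuation translation $\nu\mapsto\nu+k$.
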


  \begin{proof}
        	 Let $\delta$ be a register valuation. 
  	  Define $\nu=d_i-\delta$ and $\nu'=\nu+k$. 
  	  Lemma~\ref{equivalence of semantics} yields
  	  \[(w,i,\delta)\rmodels \psi \, \Leftrightarrow \, (w,i,\nu)\tptlmodels \psi,\]
  	  and, together with $d_{i+|u_2|}=d_i+k$, 
  	  \[(w,i+|u_2|,\delta)\rmodels \psi  \, \Leftrightarrow \,(w,i+|u_2|,\nu')\tptlmodels \psi.\]
        	We prove that
        	$(w,i,\nu)\tptlmodels \psi \, \Leftrightarrow \, (w,i+|u_2|,\nu')\tptlmodels \psi$;
        	the claim then follows.
        	For $l\ge 0$, let $w_{\ge l}$ denote the suffix of $w$ starting in position $l$. Since $i\ge |u_1|$, we have 
        	$w_{\ge (i+|u_2|)}=w_{\ge i}+k$. Hence, we only need to show that $(w_{\ge i},0,\nu)\tptlmodels\psi \, \Leftrightarrow \, ((w_{\ge i})_{+k},0,\nu')\tptlmodels\psi$. This can be shown by a simple induction on the structure of $\psi$.
        \end{proof}

 \begin{lemma}	\label{lemma_aux}
        Let $\delta_1,\delta_2$ be two register valuations.
        	If for every $j\geq i$ and every subformula $x\sim c$ of $\psi$ we have
        	\begin{equation*} 
        (w,j,\delta_1+(d_j-d_i))\rmodels x\sim c  \ \Leftrightarrow \ (w,j,\delta_2+(d_j-d_i))\rmodels x\sim c,
        \end{equation*}
then we also have
	$(w,i,\delta_1)\rmodels\phi \ \Leftrightarrow \ (w,i,\delta_2)\rmodels\phi.$
        \end{lemma}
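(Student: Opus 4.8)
The plan is to prove the statement by structural induction on $\phi$. Since the until and freeze cases force us to change both the evaluation position and the register valuations, I would prove the slightly more general claim that the implication holds \emph{for every} position $i\ge |u_1|$ and every pair $\delta_1,\delta_2$ satisfying the stated hypothesis relative to that $i$ and $\phi$; the induction hypothesis is then applied at possibly different positions and with modified valuations. The fact I will exploit throughout is that, under the relative semantics, a constraint is evaluated purely through the register values, i.e.\ $(w,j,\delta)\rmodels x\sim c$ iff $\delta(x)\sim c$. Hence the hypothesis of the lemma just says that $\delta_1$ and $\delta_2$, after the uniform shift by $d_j-d_i$, agree on all atomic constraints of $\phi$ at all positions $j\ge i$.

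The base cases are immediate: for $\true$ and $p$ the truth value does not depend on the valuation, and for $\phi=x\sim c$ I simply instantiate the hypothesis at $j=i$, where the shift $d_j-d_i$ vanishes, to obtain $(w,i,\delta_1)\rmodels x\sim c$ iff $(w,i,\delta_2)\rmodels x\sim c$. The Boolean cases are routine: every atomic constraint occurring in a conjunct (or under a negation) is an atomic constraint of $\phi$, so the hypothesis restricts to each immediate subformula, and the induction hypothesis recombines in the expected way.

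For the freeze case $\phi=x.\psi$ I pass to $\delta_1'=\delta_1[x\mapsto 0]$ and $\delta_2'=\delta_2[x\mapsto 0]$, since $(w,i,\delta_m)\rmodels x.\psi$ iff $(w,i,\delta_m[x\mapsto 0])\rmodels\psi$ for $m\in\{1,2\}$. To invoke the induction hypothesis on $\psi$ I must check that $\delta_1',\delta_2'$ still satisfy the hypothesis for every subformula $y\sim c$ of $\psi$ (which is also a subformula of $\phi$): if $y\ne x$ the reset leaves $\delta_1'(y),\delta_2'(y)$ equal to $\delta_1(y),\delta_2(y)$, so the required equivalence is exactly the original one; and if $y=x$ both valuations yield the identical expression $(d_j-d_i)\sim c$, so the equivalence is trivial. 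For the until case $\phi=\phi_1\U\phi_2$ I argue one direction (the other being symmetric): given a witness position $j>i$ and the intermediate positions $t$ for $\delta_1$, I show the \emph{same} positions witness satisfaction for $\delta_2$, by applying the induction hypothesis to $\phi_2$ at $j$ and to $\phi_1$ at each $t$, with the shifted valuations $\delta_1+(d_j-d_i)$ and $\delta_2+(d_j-d_i)$ (and analogously with $d_t$). The crucial point is the cancellation $(\delta+(d_j-d_i))+(d_{j'}-d_j)=\delta+(d_{j'}-d_i)$, which shows that the hypothesis needed at position $j$ reduces, for all $j'\ge j\ge i$ and all subformulas $y\sim c$ of $\phi_2$, to the hypothesis already assumed for $\phi$ at positions $\ge i$.

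I expect the until case to be the main obstacle, precisely because it is the only place where both the position and the accumulated shift of the valuation change at once; the care lies in the shift bookkeeping and in verifying that every position touched by the recursion remains $\ge i$, which holds since $t>i$ and $j>i$. Note that the periodic structure of $w$ is not actually needed here, so the assumption $i\ge|u_1|$ serves only to match the ambient setup of the three lemmas.
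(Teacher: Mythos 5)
Your proof is correct and follows essentially the same route as the paper's: structural induction on $\phi$, with the constraint case $x\sim c$ handled by instantiating the hypothesis at $j=i$ (where the shift vanishes) and the freeze case by checking that $\delta_1[x\mapsto 0]$ and $\delta_2[x\mapsto 0]$ again satisfy the premise. The only difference is one of detail: the paper dismisses the until case as trivial, whereas you spell out the cancellation $(\delta+(d_j-d_i))+(d_{j'}-d_j)=\delta+(d_{j'}-d_i)$, the generalization of the induction hypothesis over positions and valuation pairs, and the fact that all positions touched remain $\geq i$ --- exactly the bookkeeping the paper leaves implicit.
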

        \begin{proof}
        	The proof is by induction on the structure of $\psi$. We only prove the non-trivial cases. 
        	
        	\noindent
        	{\bf Base case.}
        	Let $\psi=x\sim c$. 
        	Clearly, $\delta_k= \delta_k + (d_i-d_i)$ for $k=1,2$. 
        	Hence 
        	\[\begin{aligned} 
	(w,i,\delta_1)\rmodels x\sim c \, &\Leftrightarrow \, (w,i,\delta_1+(d_i-d_i))\rmodels x\sim c \,  \\
	&  \Leftrightarrow \, 
        		(w,i,\delta_2+(d_i-d_i))\rmodels x\sim c \, \Leftrightarrow \, 
        	(w,i,\delta_2)\rmodels x\sim c.
        	\end{aligned}\] 
        	
\noindent
        	{\bf Induction step.}
        	Assume $\psi=x.\psi_1$. 
        	By definition, \[(w,i,\delta_k)\rmodels x.\psi_1 \, \Leftrightarrow  \, (w,i,\delta_k[x\mapsto 0])\rmodels\psi_1\] for $k=1,2$. 
        	Note that the valuations $\delta_1[x\mapsto 0]$
        	and $\delta_2[x\mapsto 0]$ satisfy the premise of the lemma. The result follows by induction hypothesis. 	
        \end{proof}

        \noindent
        For a $\tptl$ formula $\varphi$ and a finite data word $v$ we define:
\begin{eqnarray}
        C_\varphi &=& \max \{ c\in \Z \mid  x\sim c \textrm{ is a subformula of }\varphi\} \label{C}\\
        M_v  &=& \max\{d_i - d_j \mid \textrm{$d_i$ and $d_j$ are data values in } v\} \geq 0 \label{M}
       \end{eqnarray}
We may always assume that $C_\varphi \geq 0$ (we can add a dummy constraint $x \geq 0$).
        Note that in the infinite data word $v_{+k}^\omega$, for all positions $i < j$ we have
        $d_j - d_i + M_v \geq 0$, where $d_i$ and $d_j$ denote the data values  at positions $i$ and $j$, respectively.
          \begin{lemma}
        	\label{lemma_claim2}
        	Let $\delta$ be a register valuation and define $\delta'$ by  ${\delta'(x)=\min\{\delta(x),C_\psi+M_{u_2}+1\}}$ for all $x$.
	For every  subformula $\theta$ of $\psi$, we have
        	$(w,i,\delta)\rmodels\theta$ if, and only if,  $(w,i,\delta')\rmodels\theta$.
        \end{lemma}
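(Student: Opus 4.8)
The plan is to reduce the statement to Lemma~\ref{lemma_aux}. Fix a subformula $\theta$ of $\phi$ and set $\delta_1 = \delta$ and $\delta_2 = \delta'$. Applying Lemma~\ref{lemma_aux} with $\theta$ in the role of its formula, it suffices to verify the hypothesis of that lemma: for every position $j \geq i$ and every subformula $x \sim c$ of $\theta$,
\[ (w,j,\delta+(d_j-d_i))\rmodels x\sim c \ \Leftrightarrow \ (w,j,\delta'+(d_j-d_i))\rmodels x\sim c . \]
By the definition of $\rmodels$ this is the purely arithmetic equivalence $\delta(x)+(d_j-d_i) \sim c \Leftrightarrow \delta'(x)+(d_j-d_i)\sim c$. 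The key observation to record first is that the cutoff $C_\phi+M_{u_2}+1$ is computed from the whole formula $\phi$, so it is a single uniform bound that works for every subformula $\theta$ simultaneously; in particular every constant $c$ occurring in $\theta$ satisfies $c \leq C_\phi$.

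First I would dispose of the easy case: if $\delta(x) \leq C_\phi + M_{u_2}+1$, then $\delta'(x)=\delta(x)$ by definition of $\delta'$, and the equivalence is trivial. The interesting case is $\delta(x) > C_\phi+M_{u_2}+1$, where $\delta'(x)=C_\phi+M_{u_2}+1$. Here I would exploit the periodicity bound recorded just before the lemma: since $i \geq |u_1|$ and $j \geq i$ lie in the periodic part $(u_2)^\omega_{+k}$, we have $d_j - d_i \geq -M_{u_2}$ (this also covers $j=i$, where $d_j-d_i=0 \geq -M_{u_2}$ as $M_{u_2}\geq 0$). Consequently
\[ \delta(x)+(d_j-d_i) > (C_\phi+M_{u_2}+1)-M_{u_2} = C_\phi+1 > c , \]
and likewise $\delta'(x)+(d_j-d_i) = (C_\phi+M_{u_2}+1)+(d_j-d_i) \geq C_\phi+1 > c$. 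Thus both effective values strictly exceed $c$, so for each of the five comparisons $\sim\,\in\{<,\leq,=,\geq,>\}$ the two sides have the same truth value ($>$ and $\geq$ true, $=,<,\leq$ false). This establishes the hypothesis of Lemma~\ref{lemma_aux}, which then yields $(w,i,\delta)\rmodels\theta \Leftrightarrow (w,i,\delta')\rmodels\theta$, as required.

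The main point to get right is the calibration of the cutoff against the data-difference bound: the summand $M_{u_2}$ must be built into the cutoff precisely so that, even after the valuation has drifted downward by as much as $M_{u_2}$ when passing from position $i$ to a later position $j$, both the capped value and the original value remain strictly above every relevant constant $c \leq C_\phi$. The freeze operator poses no difficulty, since it is handled entirely inside Lemma~\ref{lemma_aux}: resetting a variable to $0$ keeps the two valuations in agreement on that variable and therefore preserves the hypothesis along the induction there. Everything else, namely the Boolean connectives and the until modality, is already absorbed into Lemma~\ref{lemma_aux} and needs no separate treatment.
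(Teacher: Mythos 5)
Your proposal is correct and follows essentially the same route as the paper: both reduce the claim to Lemma~\ref{lemma_aux} with $\delta_1=\delta$, $\delta_2=\delta'$, split on whether $\delta(x)$ exceeds the cutoff $C_\phi+M_{u_2}+1$, and in the nontrivial case use the bound $d_j-d_i\geq -M_{u_2}$ (valid since $i\geq|u_1|$ places all relevant positions in the periodic part) to show both shifted values are at least $C_\phi+1$, hence above every constant $c\leq C_\phi$. Your only additions --- spelling out that the uniform cutoff lets Lemma~\ref{lemma_aux} be applied to each subformula $\theta$ separately, and that two values strictly exceeding $c$ agree on all five comparisons --- are details the paper leaves implicit.
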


        \begin{proof}
 We prove that the premise of Lemma~\ref{lemma_aux} holds for $\delta_1 = \delta$ and $\delta_2 = \delta'$.
 The claim then follows from Lemma~\ref{lemma_aux}. So let $j\geq i$, and let $x\sim c$ be a subformula of $\psi$.
 If $\delta(x)\leq C+M_{u_2}+1$, and hence $\delta'(x)=\delta(x)$, then it is clear that the premise of Lemma~\ref{lemma_aux} is satisfied.
 So assume $\delta(x)>C+M_{u_2}+1$, and hence $\delta'(x)=C+M_{u_2}+1$.
 Then
 \[\delta(x)+d_j - d_i> C+M_{u_2}+1+d_j - d_i\geq C+1\] and \[\delta'(x)+d_j - d_i=C+M_{u_2}+1+d_j - d_i\geq C+1.\]
        Thus, the premise of Lemma~\ref{lemma_aux} holds.
 \end{proof}

 \noindent
 {\em Proof of Theorem~\ref{Path check TPTL-upper-PSPACE}.}
        Fix two finite data words $u_1,u_2$, a  number $k\in \N$ and a $\tptl$ formula $\psi$, and let $w=u_{1}(u_{2})^{\omega}_{+k}$.
        We show that one can decide in  $\aptime=\pspace$ whether $w \models_\tptl \psi$ holds.
        We first deal with the case $k>0$ and later sketch the necessary adaptations for the (simpler) case
        $k=0$. Without loss of generality, we further assume $\psi$ to be in negation normal form.
        Define $C \defeq C_\psi$ and $M \defeq M_{u_2}$ by
        \eqref{C} and \eqref{M}.

        The non-trivial cases in our alternating polynomial time algorithm are  the ones for $\psi = \varphi_{1}\U\varphi_{2}$
        and $\psi = \varphi_{1}\R\varphi_{2}$.
        Consider a position $i$ and a register valuation $\delta$.
        We have $(w,i,\delta)\rmodels\varphi_{1}\U\varphi_{2}$ if, and only if, 
        there exists some position $j>i$ in $w$ such that $(w,j,\delta+(d_j-d_i)\rmodels\varphi_2$, and 
        for all $t$ with $i< t<j$ we have $(w,t,\delta + d_{t}-d_{i})\rmodels\varphi_{1}$.
        Because $w$ is an infinite word, $j$ could be arbitrarily large. Our first goal is to derive a bound on $j$.
        Suppose that $0\leq i\leq|u_{1}|+|u_{2}|-1$; this is no restriction by Lemma~\ref{lemma_claim1}.
        Define
        \begin{eqnarray}
        m_\delta &=& \min\{\delta(x)\mid x\text{ is a register variable in }\psi\}, \label{eq-mdelta} \\
        m_{1} &=& \max\{d_i - d_j \mid \textrm{$d_i$ and $d_j$ are data values in } u_1u_2\}  \text{ and } \label{eq-m1} \\
        m_{2} &=& \min\{d \mid d \text{ is a data value in }u_{2}\}  \label{eq-m2}.
        \end{eqnarray}
        Let $n_\delta \geq2$ be the minimal number such that $m_\delta+m_{2}+(n_\delta-1)k-d_{i}\geq C+M+1$,
        \ie, (here we assume $k > 0$),
        \begin{equation} \label{eq-n}
        n_\delta=\max\{2,\left\lceil \frac{C+M+1+d_{i}-m_\delta-m_{2}}{k}\right\rceil +1\}.
        \end{equation}
        If $h\geq|u_{1}|+(n_\delta-1)|u_{2}|$, then for every register variable $x$ from $\psi$ we have
        $$
        \delta(x)+d_{h}-d_{i} \geq m_\delta + d_h - d_i \geq m_\delta+m_{2}+(n_\delta-1)k-d_{i}\geq C+M+1 .
        $$
        By Lemmas~\ref{lemma_claim1} and \ref{lemma_claim2}, for every $h\geq|u_{1}|+(n_\delta-1)|u_{2}|$ we have
        $$(w,h,\delta+d_{h}-d_{i})\rmodels\varphi_{2} \Leftrightarrow (w,h+|u_{2}|,\delta + d_{h+|u_{2}|}-d_{i})\rmodels\varphi_{2}.$$
        Therefore, the position $j$ witnessing $(w,j,\delta + d_{j}-d_{i})\rmodels\varphi_{2}$
        can be bounded by $|u_{1}|+n_\delta |u_{2}|$. Similarly, we get the same result for $\varphi_{1}\R\varphi_{2}$.

        We sketch an alternating Turing machine $\tm$ that, given a $\tptl_{\textup{b}}$ formula $\psi$ and a data word $w$, has an accepting run if, and only if,  $w\models_\tptl\psi$.
        The machine $\tm$ first computes and stores the value $C+M+1$. In every configuration, $\tm$ stores a triple
        $(i,\delta, \varphi)$, where $i$ is a position in the data word, $\delta$ is a register valuation (with respect to the relative semantics), and $\varphi$ is a subformula of $\psi$.
        By Lemma~\ref{lemma_claim1}, we can restrict $i$ to the interval $[0,|u_{1}|+|u_{2}|)$, and by Lemma~\ref{lemma_claim2}, we can restrict the range of $\delta$ to the interval
        $[-m_1, \max\{m_1, C+M+1\}]$. The machine
        $\tm$ starts with the triple $(0,\mathbf{0},\psi)$, where $\mathbf{0}(x)=0$ for each register variable $x$.
        Then, $\tm$ branches according to the following rules, where we define the function $\rho : \mathbb{N} \to [0, |u_{1}|+|u_{2}|)$ by
        $\rho(z) = z$ for $z  < |u_{1}|$ and $\rho(z) = ((z-|u_{1}|)\text{ mod }|u_{2}|)+|u_{1}|$ otherwise.

\medskip\noindent
  If $\varphi$ is of the form $p$, $\neg p$, or $x\sim c$, then accept if $(w,i,\delta)\rmodels\varphi$, and reject otherwise.

\medskip\noindent
  If $\varphi=\varphi_{1}\wedge\varphi_{2}$, then branch universally to $(i,\delta,\varphi_{1})$ and $(i,\delta,\varphi_{2})$.

\medskip\noindent
  If $\varphi=\varphi_{1}\vee\varphi_{2}$, then branch existentially to $(i,\delta,\varphi_{1})$ and $(i,\delta,\varphi_{2})$.

\medskip\noindent
  If $\varphi=x.\varphi_1$, then go to $(i,\delta[x \mapsto 0],\varphi_1)$.

\medskip\noindent
 If $\varphi=\varphi_{1}\U \varphi_{2}$, then branch existentially to  the following two alternatives.
                \begin{itemize}
                \item Go to $(i, \delta, \varphi)$.
                \item  Compute the value $n_\delta$ according to \eqref{eq-mdelta}, \eqref{eq-m2}, and \eqref{eq-n}, then
                          branch existentially to each value $j \in (i+1,  |u_{1}|+n_\delta|u_{2}|]$, and finally
                          branch universally to each triple from $\{(\rho(t),\delta_t,\varphi_{1})\mid i< t<j\} \cup \{(\rho(j),\delta_j,\varphi_{2})\}$, where for all $x$:
                            \[
                            \begin{array}{l}
                            \delta_j(x) =\begin{cases}
                                \min\{\delta(x)+d_{j}-d_{i},C+M+1\} & \textrm{if }j\geq|u_{1}|,\\
                                \delta(x)+d_{j}-d_{i} & \textrm{otherwise,}
                                \end{cases}\\
                            \\
                            \delta_t(x) =\begin{cases}
                                \min\{\delta(x)+d_{t}-d_{i},C+M+1\} & \textrm{if }t\geq|u_{1}|,\\
                                \delta(x)+d_{t}-d_{i} & \textrm{otherwise.}
                                \end{cases}
                            \end{array}
                            \]
                \end{itemize}
                
\medskip\noindent
    If $\varphi=\varphi_{1}\R \varphi_{2}$, then compute the value $n_\delta$ according to \eqref{eq-mdelta}, \eqref{eq-m2}, and \eqref{eq-n} and
     branch existentially to  the following two alternatives:

                    \begin{itemize}
                      \item branch universally to all triples from $\{(\rho(j),\delta_j,\varphi_{2})\mid i\leq j\leq |u_{1}|+n_\delta |u_{2}|\}$, where
                           \[
                            \begin{array}{l}
                            \delta_j(x)  =\begin{cases}
                                \min\{\delta(x)+d_{j}-d_{i},C+M+1\} & \textrm{if }j\geq|u_{1}|,\\
                                \delta(x)+d_{j}-d_{i} & \textrm{otherwise.}
                                \end{cases}
                            \end{array}
                            \]
                       \item branch existentially to each value $j \in [i+1,  |u_{1}|+n_\delta |u_{2}|]$, and then
                       branch universally to all triples from $\{(\rho(t),\delta_t,\varphi_{2})\mid i< t\leq j\}\cup \{(\rho(j),\delta_j,\varphi_{1}) \}$, where for all $x$:
                             \[
                            \begin{array}{l}
                            \delta_j(x) =\begin{cases}
                                \min\{\delta(x)+d_{j}-d_{i},C+M+1\} & \textrm{if }j\geq|u_{1}|,\\
                                \delta(x)+d_{j}-d_{i} & \textrm{otherwise,}
                                \end{cases}\\
                            \\
                            \delta_t(x) =\begin{cases}
                                \min\{\delta(x)+d_{t}-d_{i},C+M+1\} & \textrm{if }t\geq|u_{1}|,\\
                                \delta(x)+d_{t}-d_{i} & \textrm{otherwise.}
                                \end{cases}
                            \end{array}
                            \]
                      \end{itemize}
                    The machine $\tm$ clearly works in polynomial time.

                    Let us briefly discuss the necessary changes for the case $k=0$ (\ie, $w = u_{1}(u_{2})^{\omega}$).
         The main difficulty in the above algorithm is to find the upper bound of the witnessing position $j$ for the formulas
         $\varphi_{1}\U\varphi_{2}$ and $\varphi_{1}\R\varphi_{2}$. If $k=0$, then it is easily seen that for every $i\geq|u_{1}|$,
         formula $\varphi$ and valuation $\nu$, $(w,i,\nu)\models_\tptl\varphi$ if, and only if,  $(w,i+|u_{2}|,\nu)\models_\tptl\varphi$. One can easily see that the witnessing position
         $j$ can be bounded by $|u_{1}|+2|u_{2}|$.
         It is straightforward to implement the necessary changes in the above algorithm. \qed

         Note that one can easily adapt the proof of Theorem~\ref{Path check TPTL-upper-PSPACE} to obtain the following result for \emph{finite} data words:
           \begin{theorem}\label{PathcheckTPTL-upper-PSPACE_finite}
           	   Path checking for $\tptl_{\textup{b}}$ over finite binary encoded data words is in $\pspace$.
        \end{theorem}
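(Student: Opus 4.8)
The plan is to reuse the alternating polynomial-time machine from the proof of Theorem~\ref{Path check TPTL-upper-PSPACE}, stripped of exactly the machinery that was needed only to tame the infinite repetition. Let $w = (P_0,d_0)\cdots(P_{n-1},d_{n-1})$ be a finite binary-encoded data word of length $n = |w|$, and let $\psi$ be a $\tptl_b$-formula, again assumed to be in negation normal form. The crucial simplification is that every witnessing position for an until- or release-subformula now lies in $\{0,\dots,n-1\}$ by definition, so the whole apparatus for bounding the witness $j$ --- the periodicity Lemmas~\ref{lemma_claim1} and \ref{lemma_claim2} and the quantity $n_\delta$ from \eqref{eq-n} --- becomes superfluous. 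Consequently one may even work directly with the absolute semantics $\tptlmodels$ instead of the relative semantics, because for a finite binary-encoded word all data values, and hence all register contents that a freeze operation $x.\varphi$ can ever produce, are among the finitely many $d_0,\dots,d_{n-1}$ and therefore have polynomial bit-length. (Keeping the relative semantics and reusing Lemma~\ref{equivalence of semantics} would work equally well, since then register contents are differences of data values, which are again polynomial.)

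First I would fix the shape of a configuration. Each configuration is a triple $(i,\nu,\varphi)$, where $i \in \{0,\dots,n-1\}$ is a position (storable in $O(\log n)$ bits, with $d_i$ read off the input), $\nu$ is a register valuation storing, for each of the at most $|\psi|$ registers of $\psi$, one data value of $w$, and $\varphi$ is a subformula of $\psi$; thus every configuration fits in polynomial space. The machine starts in $(0,\bar{\mathsf{0}},\psi)$ and branches analogously to the proof of Theorem~\ref{Path check TPTL-upper-PSPACE}, with two changes: the self-looping alternative for until is dropped, and every existential guess of a position ranges over $\{i+1,\dots,n-1\}$. Concretely, for $\varphi = \varphi_1 \U \varphi_2$ it existentially guesses $j$ with $i < j < n$ and then branches universally to check $\varphi_2$ at $j$ and $\varphi_1$ at every $t$ with $i < t < j$; for $\varphi = \varphi_1 \R \varphi_2$ it existentially chooses between verifying $\varphi_2$ at all positions in $\{i,\dots,n-1\}$ and guessing a release witness $j$ with $i < j < n$ at which $\varphi_1$ holds while $\varphi_2$ holds throughout $\{i+1,\dots,j\}$; the cases $x.\varphi_1$ and $x \sim c$ are handled by updating $\nu$ to $\nu[x \mapsto d_i]$ and by evaluating $d_i - \nu(x) \sim c$ directly.

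For the time bound I would argue that along any single computation branch the current subformula strictly decreases in size after each temporal step, since the machine immediately descends into a proper subformula of $\varphi$; hence every branch has depth $O(|\psi|)$, while each individual step only guesses a position and a register valuation of polynomial size and performs polynomial integer arithmetic on binary-encoded numbers (the large universal branchings affect only the branching factor, not the length of a single branch). Therefore the machine runs in polynomial time, and by $\aptime = \pspace$ it decides $w \tptlmodels \psi$ in $\pspace$. The only point to verify carefully --- and the step that in the infinite case was genuinely delicate --- is precisely the one that here trivialises: in Theorem~\ref{Path check TPTL-upper-PSPACE} the witness $j$ could be arbitrarily large and had to be bounded via $n_\delta$, together with the truncation $\min\{\delta(x),C+M+1\}$ keeping valuations polynomial, whereas over a finite word $j < n$ holds a priori and register contents are polynomial for free, so nothing beyond this straightforward bookkeeping remains.
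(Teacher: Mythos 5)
Your proposal is correct and is essentially the adaptation the paper itself intends: for Theorem~\ref{PathcheckTPTL-upper-PSPACE_finite} the paper gives no separate argument but merely remarks that the proof of Theorem~\ref{Path check TPTL-upper-PSPACE} adapts easily, and your proposal carries out exactly that adaptation --- keeping the alternating polynomial-time machine with configurations $(i,\nu,\varphi)$ while discarding the periodicity machinery (Lemmas~\ref{lemma_claim1} and \ref{lemma_claim2}, the bound $n_\delta$, and the valuation truncation), which is superfluous since witnesses satisfy $j<n$ a priori and register contents are data values of $w$ of polynomial bit-length. Your observation that one may then work with the absolute semantics $\tptlmodels$ directly, rather than the relative semantics via Lemma~\ref{equivalence of semantics}, is a harmless simplification that the finite setting indeed permits.
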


    \paragraph{\bf Straight-line programs for data words.}
    
In this section we prove an extension of Theorem~\ref{Path check TPTL-upper-PSPACE}, where the data words
are succinctly specified by so called straight-line programs. Straight-line programs allow to represent data
words of exponential length. In Section~\ref{sec-one-counter} we will apply the result of this section to the model-checking problem
for deterministic one-counter machines with binary encoded updates.
    
 A {\em straight-line program}, briefly SLP, is a tuple $\mathcal{G}= (V,A_0,\rhs)$, where $V$ is a finite
 set of variables, $A_0 \in V$ is the output variable, and $\rhs$ (for right-hand side)
 is a mapping that associates with every variable $A \in V$ an expression $\rhs(A)$
 of the form $BC$, $B+d$, or $(P,d)$, where $B,C \in V$, $d \in \mathbb{N}$, and $P \subseteq \myprop$.
 Moreover, we require that the relation $\{ (A,B) \in V \times V \mid B \text{ occurs in } \rhs(A) \}$
 is acyclic. This allows to assign to every variable $A$ inductively a data word $\val_{\mathcal{G}}(A)$:
 \begin{itemize}
 \item  if $\rhs(A) \defeq (P,d)$ then $\val_{\mathcal{G}}(A) = (P,d)$, 
 \item if $\rhs(A) \defeq B+d$ then $\val_{\mathcal{G}}(A) = \val_{\mathcal{G}}(B)_{+d}$,  and
 \item  if $\rhs(A) \defeq BC$ then $\val_{\mathcal{G}}(A) = \val_{\mathcal{G}}(B)\val_{\mathcal{G}}(C)$. 
 \end{itemize}
 Finally, we set $\val(\mathcal{G}) = \val_{\mathcal{G}}(A_0)$.
 We assume that natural numbers appearing in 
 right-hand sides are binary encoded.  The size of the SLP $\mathcal{G}$ is defined as the sum of the 
 sizes of all right-hand sides, where a right-hand side of the form $BC$ (resp. $(P,d)$ or $B+d$) has size $1$
 (resp., $\lceil \log_2 d\rceil$).
 
 \begin{example}
Consider the straight-line program $\mathcal{G} = (\{ A_0, A_1, A_2, A_3, A_4,A_5\}, A_0, \rhs)$, where
$\rhs$ is defined as follows:
\begin{gather*}
\rhs(A_5) = (\{a,b\}, 2), \ \rhs(A_4) = (\{b,c\}, 3), \rhs(A_3) = A_5 A_4, \\
 \rhs(A_2) = A_3 A_3, \ \rhs(A_1) = A_2 + 4, \ \rhs(A_0) = A_2 A_1 .
\end{gather*}
We get
\begin{eqnarray*}
\val_{\mathcal{G}}(A_3) &=& (\{a,b\}, 2) (\{b,c\}, 3) \\
\val_{\mathcal{G}}(A_2) &=& (\{a,b\}, 2) (\{b,c\}, 3) (\{a,b\}, 2) (\{b,c\}, 3) \\
\val_{\mathcal{G}}(A_1) &=& (\{a,b\}, 6) (\{b,c\}, 7) (\{a,b\}, 6) (\{b,c\}, 7) \\
\val_{\mathcal{G}}(A_0) &=& (\{a,b\}, 2) (\{b,c\}, 3) (\{a,b\}, 2) (\{b,c\}, 3) (\{a,b\}, 6) (\{b,c\}, 7) (\{a,b\}, 6) (\{b,c\}, 7) \\
 &=&  \val(\mathcal{G}).
\end{eqnarray*}
\end{example}
 \begin{lemma}   \label{lemma-min-max}
 	 For a given SLP $\mathcal{G}$ we can compute the numbers $\min(\val(\mathcal{G}))$ and 
 $\max(\val(\mathcal{G}))$ in polynomial time.
 \end{lemma}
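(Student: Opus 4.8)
The plan is to compute, for every variable $A \in V$, the two numbers $a_A := \min(\val_{\mathcal{G}}(A))$ and $b_A := \max(\val_{\mathcal{G}}(A))$ by a bottom-up dynamic program over the acyclic structure of $\mathcal{G}$; the desired outputs are then $a_{A_0}$ and $b_{A_0}$. The crucial point is that, although $\val(\mathcal{G})$ may have length exponential in the size of $\mathcal{G}$, the extremal data values can be read off directly from the right-hand sides, without ever unfolding the word.

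First I would observe that $a_A$ and $b_A$ satisfy simple recurrences mirroring the definition of $\val_{\mathcal{G}}$. If $\rhs(A) = (P,d)$, then $\val_{\mathcal{G}}(A)$ is the single pair $(P,d)$, so $a_A = b_A = d$. If $\rhs(A) = B + d$, then $\val_{\mathcal{G}}(A) = \val_{\mathcal{G}}(B)_{+d}$ shifts every data value by $d$, whence $a_A = a_B + d$ and $b_A = b_B + d$. If $\rhs(A) = BC$, then the data values of $\val_{\mathcal{G}}(A)$ are exactly those of $\val_{\mathcal{G}}(B)$ together with those of $\val_{\mathcal{G}}(C)$, so $a_A = \min\{a_B, a_C\}$ and $b_A = \max\{b_B, b_C\}$. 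Since the relation $\{(A,B) \mid B \text{ occurs in } \rhs(A)\}$ is acyclic, I can order the variables so that every $B$ occurring in $\rhs(A)$ is processed before $A$, and then evaluate the recurrences in that order, each step using only a constant number of additions, comparisons and assignments.

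The only point requiring care is to verify that the computation runs in polynomial time, i.e. that the manipulated numbers stay polynomially bounded in bit length; this is the main (and essentially the only) obstacle, since the word itself is far too long to inspect. Unfolding the recurrences shows that $b_A$ equals the maximum, over all directed paths in the DAG of $\mathcal{G}$ from $A$ to a leaf variable with right-hand side $(P,d')$, of $d'$ plus the sum of the increments $d$ encountered at the $+d$-steps along that path; symmetrically for $a_A$. Because the relation is acyclic, each such path visits every variable at most once and hence has length at most $|V|$, so $b_A$ (and likewise $a_A$) is bounded by $d_{\max} \cdot (|V|+1)$, where $d_{\max}$ is the largest constant occurring in any right-hand side. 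Its bit length is therefore linear in the size of $\mathcal{G}$, and the same bound holds for all intermediate values; hence each arithmetic operation takes polynomial time, and the whole procedure, performing $O(|V|)$ operations, runs in polynomial time.
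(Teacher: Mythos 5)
Your proof is correct and takes essentially the same approach as the paper: a bottom-up evaluation of the obvious recurrences $a_A = b_A = d$ for $\rhs(A)=(P,d)$, $a_A = a_B + d$ for $\rhs(A)=B+d$, and $a_A = \min\{a_B,a_C\}$ for $\rhs(A)=BC$ (dually for $\max$), exploiting acyclicity to process variables in topological order. The only difference is that you additionally verify the polynomial bit-length bound on the intermediate values, a point the paper's proof leaves implicit; this is a welcome extra care but not a different method.
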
   
 
 \begin{proof}
 Let $\mathcal{G}= (V,A_0,\rhs)$.
We compute $\min(\val_{\mathcal{G}}(A))$ for every $A \in V$ bottom-up according to the following rules:
\begin{itemize}
\item  If $\rhs(A) = (P,d)$, then $\min(\val_{\mathcal{G}}(A)) \defeq d$.
\item If $\rhs(A) = B+k$, then $\min(\val_{\mathcal{G}}(A)) \defeq \min(\val_{\mathcal{G}}(B)) + k$.
\item If $\rhs(A) = BC$, then $\min(\val_{\mathcal{G}}(A)) \defeq \min\{ \min(\val_{\mathcal{G}}(B)), \min(\val_{\mathcal{G}}(C)) \}$.
\end{itemize}
Of course, for computing $\max(\val(\mathcal{G}))$ in polynomial time one can proceed analogously.
\end{proof} 

 The following result extends Theorem~\ref{Path check TPTL-upper-PSPACE} to SLP-encoded infinite data words, \ie,
 infinite data words $u_{1}(u_{2})^{\omega}_{+k}$ that are succinctly represented by two SLPs for $u_1$ and $u_2$, respectively,  
 and the binary encoding of the number $k$.
 
  \begin{theorem}\label{Path check TPTL-upper-PSPACE-SLP}
  	  Path checking for $\tptl_{\textup{b}}$ over infinite (resp., finite) SLP-encoded data words  is in $\pspace$.
    \end{theorem}
    \begin{proof}
 
 %
%
 %
%
        We only show the statement for infinite data words.
        The proof is almost identical to the proof of Theorem~\ref{Path check TPTL-upper-PSPACE}.
        We explain the necessary adaptations. The input consists of a $\tptl$ formula $\psi$ (without loss
        of generality in negation normal form), two
        SLPs $\mathcal{G}_1$ and $\mathcal{G}_2$ and a binary encoded number $k$. Let $u_1 = \val(\mathcal{G}_1)$,
        $u_2 = \val(\mathcal{G}_2)$ and $w=u_{1}(u_{2})^{\omega}_{+k}$. We have to check whether 
        $w \models_\tptl \psi$ holds. For this we use the alternating polynomial time algorithm from 
        the proof of Theorem~\ref{Path check TPTL-upper-PSPACE}. We only consider the more difficult case
        $k > 0$. Define $C \defeq C_\psi$ by \eqref{C}, 
        $M \defeq M_{u_2}$ by \eqref{M}, $m_1$ by \eqref{eq-m1}, and $m_2$ by \eqref{eq-m2}.
        Note that the binary encodings of these four numbers can be computed in polynomial
        time by Lemma~\ref{lemma-min-max}. 
        
        As before, in every configuration, $\tm$ stores a triple
        $(i,\delta, \varphi)$, where $\varphi$ is a subformula of $\psi$, $i$ is a position in the data word from the interval $[0,|u_{1}|+|u_{2}|)$, and
        $\delta$ is a register valuation (with respect to the relative semantics) whose range is $[-m_1, \max\{m_1, C+M+1\}]$. For each such 
        $\delta$ we define $m_\delta$ as in \eqref{eq-mdelta} and $n_\delta$ as in \eqref{eq-n}. Given $\delta$, one can compute
        the binary encodings of these numbers in polynomial time. Furthermore, note that a triple $(i,\delta, \varphi)$ with the above 
        properties can be stored in polynomial space.

        The machine $\tm$ starts with the triple $(0,\mathbf{0},\psi)$, where $\mathbf{0}(x)=0$ for each register variable $x$.
        Then, $\tm$ branches according to the rules from the proof of Theorem~\ref{Path check TPTL-upper-PSPACE}. 
        The numbers $j$ and $t$ guessed there can be still stored in polynomial space, and all arithmetic manipulations
        can be done in polynomial time. In particular, 
        the function $\rho : \mathbb{N} \to [0, |u_{1}|+|u_{2}|)$ defined by
        $\rho(z) = z$ for $z  < |u_{1}|$, and $\rho(z) = ((z-|u_{1}|)\text{ mod }|u_{2}|)+|u_{1}|$ otherwise, can be computed in 
        polynomial time on binary encoded numbers.
\end{proof}
        
\subsection{Polynomial Time Upper Bounds for $\tptl$ with Fixed Number of Registers}
If the number of register variables is fixed and all numbers are unary encoded (or their unary encodings can be computed in polynomial time),
then the alternating Turing-machine in the proof of Theorem~\ref{Path check TPTL-upper-PSPACE}
works in logarithmic space. Since $\alogspace = \ptime$, we obtain the following statement for (i):

\begin{theorem}\label{Path check TPTL-upper-P}
     For every fixed $r \in \mathbb{N}$, path checking for $\tptl^r_{\textup{u}}$ over (i) infinite unary encoded data words or
     (ii)  infinite binary encoded monotonic data words is in $\ptime$.
    \end{theorem}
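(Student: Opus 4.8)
The plan is to reuse the alternating Turing machine $\tm$ from the proof of Theorem~\ref{Path check TPTL-upper-PSPACE} and to show that, under hypothesis (i) or (ii), it runs (after a minor modification in case (ii)) in \emph{logarithmic} space; the theorem then follows from $\alogspace=\ptime$. Recall that a configuration of $\tm$ is a triple $(i,\delta,\varphi)$ consisting of a subformula $\varphi$ of $\psi$ (a pointer of $O(\log|\psi|)$ bits), a position $i\in[0,|u_1|+|u_2|)$ (and $|u_1|+|u_2|$ is at most the input length $n$, so $i$ needs $O(\log n)$ bits), and a register valuation $\delta$. Since the number of registers equals the fixed constant $r$, the valuation $\delta$ is a tuple of at most $r$ integers, so it suffices to bound a single register entry by $O(\log n)$ bits; this is exactly where fixing $r$ is essential, as otherwise $\delta$ alone would require $\Theta(|\psi|\log n)$ bits, which is what forces the $\pspace$ bound in the unbounded case. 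I would therefore verify two things for each of (i) and (ii): that every retained register entry has magnitude polynomial in $n$, and that every witnessing position $j$ guessed in the $\U$- and $\R$-rules is polynomially bounded.

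For case (i) every data value, the offset $k$ and every constraint number is unary encoded, so $C$, $M$, $m_1$ and $m_2$ from \eqref{C}, \eqref{M}, \eqref{eq-m1}, \eqref{eq-m2} are all bounded by $n$. By Lemma~\ref{lemma_claim2} the entries of $\delta$ may be confined to $[-m_1,\max\{m_1,C+M+1\}]\subseteq[-n,3n]$, costing $O(\log n)$ bits each, and from \eqref{eq-n} one reads off $n_\delta=O(n)$ (the numerator is $O(n)$ and $k\ge1$), so every witnessing position $j\le|u_1|+n_\delta|u_2|$ is of polynomial size. All arithmetic in the transition rules (evaluating $\rho$, computing $n_\delta$, and forming the capped successors $\delta_j,\delta_t$) is then performed on $O(\log n)$-bit numbers and is trivially logspace; the case $k=0$ uses the bound $|u_1|+2|u_2|$ already established in Theorem~\ref{Path check TPTL-upper-PSPACE}. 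Hence $\tm$ runs in logarithmic space.

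For case (ii) the data values and $k$ may be exponentially large, and monotonicity is what rescues both bounds. First I would sharpen Lemma~\ref{lemma_claim2}: for a monotonic word $d_j-d_i\ge0$ whenever $j\ge i$, so its proof goes through with the cap $C+1$ in place of $C+M+1$ (if $\delta(x)>C+1$ then both $\delta(x)+d_j-d_i$ and $(C+1)+d_j-d_i$ exceed $C$ and hence agree on every constraint $x\sim c$ with $c\le C$). Since freezing sets a register to $0$ and only non-negative differences are ever added, every register entry stays in $[0,C+1]\subseteq[0,n+1]$ — here $C\le|\psi|\le n$ because the constraint numbers are unary encoded — so $\delta$ again fits in $O(\log n)$ bits once the machine is modified to cap at $C+1$. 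For the position bound, observe that monotonicity of $u_1(u_2)^\omega_{+k}$ forces $M\le k$ (otherwise the data value would drop at a period boundary), so the numerator of \eqref{eq-n} is $O(C+k)$ and hence $n_\delta=O(C/k)+O(1)=O(n)$; every witnessing position $j\le|u_1|+n_\delta|u_2|$ is again polynomial, and $k=0$ reduces to the constant-period situation of Theorem~\ref{Path check TPTL-upper-PSPACE}.

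The main obstacle is to keep the transition function logspace-computable in case (ii), since the retained configuration is small but the rules still refer to the large values $d_i$, $k$ and the differences $d_j-d_i$ (when forming $\min\{\delta(x)+d_j-d_i,\,C+1\}$ or computing $n_\delta$, which involves $M=\max(u_2)-\min(u_2)$). The point I would stress is that none of these large numbers is ever stored in a configuration: a constraint $x\sim c$ is decided under the relative semantics purely from $\delta(x)$, and every other use is an arithmetic expression ($+$, $-$, $\times$, $\div$, $\max$, $\min$, comparison) in numbers read from the read-only input, each of which is computable in $\ls$. Because $\ls$ is closed under composition and the only values that must be written on the work tape are small — the position, the at most $r$ entries of $\delta$ capped at $C+1$, the index $n_\delta$, and single comparison bits such as ``$d_j-d_i>C$'' — each large operand is recomputed from the input on demand and consumed bitwise, keeping the work tape within $O(\log n)$. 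With configurations and transitions both confined to logarithmic space, $\tm$ witnesses membership in $\alogspace=\ptime$, as required.
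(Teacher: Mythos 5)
Your proof is correct; case (i) is essentially identical to the paper's argument, but for case (ii) you take a genuinely different route. The paper does \emph{not} run the alternating machine on the binary word at all: it preprocesses the input in polynomial time, sorting the data values of $u_1u_2$ and shrinking every gap between consecutive values (and, analogously, the period offset $k$) to at most $C+1$, where $C$ is the largest absolute value of a constraint constant; it then verifies that all differences $e_j-e_i$ of data values land in the same equivalence class (with respect to the intervals cut out by the constraint constants) before and after shrinking, so satisfaction is preserved, and finally it invokes the logspace machine of case (i) on the resulting \emph{unary}-encoded word. Notably, the paper proves this for the weaker hypothesis of quasi-monotonicity, i.e.\ $\max(u_1)\le\max(u_2)\le\min(u_2)+k$. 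You instead modify the machine of Theorem~\ref{Path check TPTL-upper-PSPACE} directly: your sharpening of Lemma~\ref{lemma_claim2} to the cap $C+1$ is sound, since monotonicity gives $d_j-d_i\ge 0$ everywhere — and note that this also lets you drop the standing proviso $i\ge|u_1|$ under which Lemma~\ref{lemma_claim2} is stated, so that capping may be applied at \emph{all} positions; your invariant $\delta(x)\in[0,C+1]$ tacitly requires this, and your one-line justification is indeed position-independent, so this is only a presentational gloss. Your observations that $M\le k$ and $d_i-m_2\le M$ under monotonicity, whence $n_\delta=O(n)$ in \eqref{eq-n}, are exactly the quantitative facts that also make the paper's shrinking map correct, and your appeal to logspace arithmetic on binary operands recomputed on demand is legitimate (the paper itself invokes logspace arithmetic, citing Hesse, Allender and Barrington, in the proof of Lemma~\ref{lemma-tptl-1}). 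Comparing the two: your argument is self-contained and avoids having to define and verify a value-preserving transformation of the data word, while the paper's reduction keeps the alternating machine untouched, reuses case (i) as a black box, and yields the slightly stronger quasi-monotonic statement — to which your direct argument does not extend as written, since for merely quasi-monotonic words $d_j-d_i$ can be negative inside $u_2$, forcing the cap back to $C+M+1$ with $M$ of exponential magnitude, which breaks the logarithmic bound on register entries.
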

    \begin{proof}
    	    We start with the proof of the statement for (i).
In the algorithm from the proof of Theorem~\ref{Path check TPTL-upper-PSPACE},
if all numbers are given in unary, then the numbers $C+M+1$, $m_1$, $m_2$ and $n$ can be computed in
logarithmic space and are bounded polynomially in the input size. Moreover, a configuration of the form $(i, \delta, \varphi)$ needs only logarithmic space:
   clearly, the position  $i \in [0,|u_{1}|+|u_{2}|)$ and the subformula $\varphi$ only need logarithmic space. The valuation $\delta$ is an $r$-tuple
   over $[-m_1, \max\{m_1, C+M+1\}]$ and hence needs logarithmic space too, since $r$ is a constant. Hence, the alternating machine from
   the proof of Theorem~\ref{Path check TPTL-upper-PSPACE} works in logarithmic space. The theorem follows, since $\alogspace = \ptime$.
   
   Let us now prove statement (ii) in Theorem~\ref{Path check TPTL-upper-P}. We actually prove a slightly stronger statement for so called
   quasi-monotonic data words instead of monotonic data words.

 For a finite data word $u$, let $\min(u)$ (resp., $\max(u)$) be the minimal (resp., maximal) data value that occurs
 in $u$.  Given $k\in \N$, and two finite data words $u_1$ and $u_2$, we say that the infinite data word
 $u_1(u_2)^\omega_{+k}$ is {\em quasi-monotonic} if $\max(u_1) \leq \max(u_2) \leq  \min(u_2)+k$. Note that if $u_1(u_2)^\omega_{+k}$ is monotonic,
 then $u_1(u_2)^\omega_{+k}$ is also quasi-monotonic.
 
 Let us now prove statement (ii) (with ``monotonic'' replaced by ``quasi-monotonic'').
        Suppose that $k$, $u_1$ and $u_2$ are binary encoded, and $u_1(u_2)^\omega_{+k}$ is quasi-monotonic. The idea is that we construct in
        polynomial time two unary encoded finite data words $v_1,v_2$ and $l \in \N$ encoded in unary notation such that $u_1(u_2)^\omega_{+k}\models_\tptl \psi$ if, and only if, 
        $v_1(v_2)^\omega_{+l}\models_\tptl \psi$, where $\psi$ is the input $\tptl^r_{\textup{u}}$ formula. Then we can apply the alternating logarithmic space algorithm from the above
        proof for (i).

        Let $x_1\sim_1 c_1,\dots,x_m\sim_m c_m$ be all constraint formulas in $\psi$. Without loss of generality, we suppose that $c_i\leq c_{i+1}$ for $1\leq i<m$.
        We define an equivalence relation $\equiv_\psi$ on $\N$ such that $a\equiv_\psi b$ if $a=b$ or $a$ and $b$ both belong to one of the intervals
        $(-\infty,c_1)$, $(c_m,+\infty)$, $(c_i,c_{i+1})$ for some $1\leq i<m$.
        Define $C=\max\{|c_{1}|,\dots,|c_{m}|\}$.

        Suppose that $|u_1|=n_1$ and $|u_2|=n_2$.
        Let $d_1,\dots,d_{n_{1}+n_{2}}$ be an enumeration of all data values in $u_1u_2$  such that $d_j\leq d_{j+1}$ for $1\leq j < n_1+n_2$.
        Without loss of generality, we suppose that $d_1=0$.  For $1 < i \leq n_1+n_2$ let $\delta_i = d_i - d_{i-1}$.
        We define a new sequence $d'_1,\dots,d'_{n_{1}+n_{2}}$ inductively as follows: $d'_1 = 0$ and for all $1 < i \leq n_1+n_2$,
        $$
        d'_i =
        \begin{cases}
         d'_{i-1} +  \delta_i & \text{ if } \delta_i \leq C, \\
         d'_{i-1} +  C+1 & \text{ if } \delta_i > C .
         \end{cases}
        $$
        Intuitively, the data values $d'_j$ are obtained by shrinking the
        $d_j$ so that the largest difference between two different data values is bounded by $C+1$.

        We obtain the new data words $v_1$ and $v_2$ by replacing in $u_1$ and $u_2$ every data value $d_j$ by $d'_j$
        for every $j\in [1,n_1+n_2]$. Note that $d'_{n_1+n_2} \leq (C+1) \cdot (n_1+n_2-1)$.
        Since $C$ is given in unary notation, we can compute
        in polynomial time the unary encodings of the numbers $d_1',\dots,d_{n_{1}+n_{2}}'$.

       To define the number $l$, note that $\delta : =  \min(u_2)+k-\max(u_2)$ is the difference between the smallest
        data value in $(u_2)_{+k}$ and the largest data value in $u_2$ (which is the largest data value of $u_1u_2$).
        Since $u_1(u_2)^\omega_{+k}$ is quasi-monotonic,
        we have $\delta \geq 0$.
        We define the number $l$ as
        $$
        l =  \begin{cases}
          \max(v_2)-\min(v_2)+\delta & \text{ if } \delta \leq C, \\
         \max(v_2)-\min(v_2)+C+1 & \text{ if } \delta > C .
         \end{cases}
         $$
         Again,
        the unary encoding of $l$ can be computed in polynomial time.
        Let $e_i$ (resp., $e_i'$) be the data value in the $i$-th position of $u_1(u_2)^\omega_{+k}$ (resp., $v_1(v_2)^\omega_{+l}$).
        Then, for every $j>i$ we have $e_j-e_i\equiv_\psi e_j'-e_i'$.
        This implies that  $u_1(u_2)^\omega_{+k}\models_\tptl \varphi$ if, and only if,  $v_1(v_2)^\omega_{+l}\models_\tptl \varphi$. Finally, we can use the alternating logarithmic space algorithm to check whether $v_1(v_2)^\omega_{+l}\models_\tptl \varphi$ holds.
    \end{proof}
    For finite data words, we obtain a polynomial time algorithm also for binary encoded data words (assuming again a fixed
 number of register variables): 
 \begin{theorem} \label{thm-finite-path-binary}
 	 For every fixed $r \in \mathbb{N}$, path checking for $\tptl^r_{\textup{b}}$ over finite binary encoded data words is in  $\ptime$.
 \end{theorem}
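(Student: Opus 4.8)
The plan is to reuse the alternating algorithm underlying Theorem~\ref{PathcheckTPTL-upper-PSPACE_finite}, but to re-encode its configurations so that, for a fixed number $r$ of registers, the whole computation runs in \emph{logarithmic} rather than polynomial space; the claim then follows from $\alogspace=\ptime$. Note that the unary-compression trick used for statement~(ii) of Theorem~\ref{Path check TPTL-upper-P} is of no help here: there the new data values are bounded by $(C+1)\cdot(n-1)$, and for binary encoded formulae the constant $C=C_\psi$ may be exponential, so the resulting unary data word would be of exponential size. Instead I would keep the data values binary encoded and exploit the bounded number of registers directly.

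The key observation concerns the shape of the register valuations arising in the relative semantics $\rmodels$. Along any run of the algorithm the value $\delta(x)$ is set to $0$ exactly when an operator $x.$ is processed (at the current position), and otherwise $\delta$ is shifted by $d_{j}-d_{i}$ whenever the current position moves from $i$ to $j$. Telescoping these shifts shows that at a configuration with current position $j$ one always has $\delta(x)=d_{j}-d_{p_x}$, where $p_x$ is the position at which $x$ was most recently frozen (and $p_x=0$ if $x$ has not yet been frozen). Hence, instead of storing the possibly exponentially large binary numbers $\delta(x_1),\dots,\delta(x_r)$, the machine only needs to store the \emph{freeze positions} $p_{x_1},\dots,p_{x_r}\in\{0,\dots,|w|-1\}$. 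A configuration then consists of the current position $j$, the $r$ freeze positions, and a subformula of $\psi$, which for fixed $r$ can be stored in $O(r\log|w|+\log|\psi|)=O(\log(|w|\cdot|\psi|))$ bits. When the machine has to decide a constraint $x\sim c$, it computes $d_j-d_{p_x}$ and compares it with $c$; since subtraction and comparison of binary numbers are computable in logarithmic space, this step stays within the space bound. The remaining branching rules are the ones from the proof of Theorem~\ref{Path check TPTL-upper-PSPACE} specialised to the finite word $w$: for $\varphi_1\U\varphi_2$ (resp.\ $\varphi_1\R\varphi_2$) the witnessing (resp.\ counter) position $j$ simply ranges over the finite interval $(i,|w|)$, so guessing it costs only $\log|w|$ bits, moving the current position leaves all freeze positions unchanged, and processing $x.\varphi_1$ sets $p_x$ to the current position.

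Correctness of this re-encoded machine is immediate from Lemma~\ref{equivalence of semantics} together with the identity $\delta(x)=d_j-d_{p_x}$. Moreover, on a finite data word every transition strictly decreases the size of the stored subformula, so every branch of the alternating computation halts after at most $|\psi|$ steps; in particular no stack is needed and each single configuration occupies only logarithmic space. Thus the procedure is an alternating logarithmic space algorithm, and $\alogspace=\ptime$ yields the $\ptime$ upper bound. I expect the main obstacle to be exactly the point where the unary approach breaks: one must argue that the exponentially large binary data values and constants never have to be written onto the work tape, which is precisely what the freeze-position encoding achieves, together with the routine but essential fact that the required comparisons $d_j-d_{p_x}\sim c$ on binary inputs can be carried out in logarithmic space.
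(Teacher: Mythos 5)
Your proposal is correct, but it takes a genuinely different route from the paper's own proof. The paper argues for Theorem~\ref{thm-finite-path-binary} by direct deterministic dynamic programming: since along any evaluation each register holds one of the data values $d_1,\dots,d_n$ occurring in $w$, only the $n^r$ valuations $\nu:\{x_1,\dots,x_r\}\to\{d_1,\dots,d_n\}$ are relevant, and one fills in, bottom-up over the subformulae of $\psi$, a table of truth values indexed by valuation, subformula and position --- polynomially many entries for fixed $r$, each computable in polynomial time, with no appeal to alternation or to the relative semantics. Your freeze-position encoding is precisely the top-down, alternating-logspace rendering of the same core observation: the invariant $\delta(x)=d_j-d_{p_x}$ is the relative-semantics counterpart of the statement that every reachable valuation maps each register to a data value of $w$. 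Your argument is sound: the telescoping of shifts and the reset of $p_x$ at a freeze quantifier establish the invariant, constraints reduce to logspace comparisons $d_j-d_{p_x}\sim c$ on binary numbers, the specialisation of the $\U$/$\R$ rules to a finite word (witness positions ranging over $(i,|w|)$) is exactly the easy adaptation behind Theorem~\ref{PathcheckTPTL-upper-PSPACE_finite}, and termination holds because every alternation step descends to a strict subformula. The one point you leave implicit --- putting $\psi$ into negation normal form first, as in the proof of Theorem~\ref{Path check TPTL-upper-PSPACE}, since alternating machines do not handle negation directly --- is routine logspace preprocessing. Comparing the two: the paper's table-filling is shorter and entirely self-contained, whereas your argument is uniform with the machinery of Theorems~\ref{Path check TPTL-upper-PSPACE} and~\ref{Path check TPTL-upper-P}(i), correctly diagnoses why the unary-shrinking trick of Theorem~\ref{Path check TPTL-upper-P}(ii) is useless for binary constants, and yields the slightly more informative conclusion that for each fixed $r$ the problem lies in $\alogspace$. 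Note that both algorithms run in time roughly $n^{\Theta(r)}$, so neither is polynomial uniformly in $r$, consistent with the $\pspace$-hardness for unboundedly many registers in Theorem~\ref{theorem-PSPACE-lower-bound}.
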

 \begin{proof}
 	   Let the input data word $w$ be of length $n$ and let $d_1, \ldots, d_n$ be the data values appearing in $w$.
 	   Moreover, let $x_1, \ldots, x_r$ be the register variables appearing in the input formula $\psi$. Then, we only have
 	   to consider the $n^r$ many register valuation mappings $\nu : \{ x_1, \ldots, x_r \} \to \{d_1, \ldots, d_n \}$. For
 	   each of these mappings $\nu$, for every subformula $\varphi$ of $\psi$, and for every position $i$ in $w$ we
 	   check whether $(w,i,\delta) \models_\tptl \varphi$. This information is computed bottom-up (with respect to the
  structure of $\varphi$) in the usual way.
 \end{proof}
    For infinite data words we have to reduce the number of register variables to one in order to get
    a polynomial time complexity for  binary encoded numbers, \cf~Theorem~\ref{theorem-PSPACE-lower-bound-3}:

  \begin{theorem} \label{thm-1-reg-P}
  	  Path checking for $\tptl^1_{\textup{b}}$ over infinite binary encoded data words is in $\ptime$.
  \end{theorem}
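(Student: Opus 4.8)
\emph{Proof strategy.} The plan is to reduce, in polynomial time, an instance of path checking for $\tptl^1_b$ over an infinite binary data word $w=u_{1}(u_{2})^{\omega}_{+k}$ to an \emph{equivalent} instance of path checking for $\tptl^1_u$ over an infinite \emph{unary} data word $w'=v_{1}(v_{2})^{\omega}_{+l}$, and then to invoke the alternating logarithmic-space algorithm of Theorem~\ref{Path check TPTL-upper-P}(i). Since we are compressing a binary instance, we must shrink both the data values \emph{and} the binary constants $c_1<\cdots<c_m$ occurring in the input formula $\psi$; the target formula $\psi'$ is obtained from $\psi$ by replacing the $c_i$ by order-preserving small (unary) representatives. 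It is convenient to first dispose of the case $k=0$: then $w=u_{1}(u_{2})^{\omega}$ takes only finitely many distinct data values, so there are only $O((|u_1|+|u_2|)^2)$ distinct differences $d_j-d_i$; an order-preserving map of this finite set together with the thresholds $c_i$ to an initial segment of $\N$ yields $v_1,v_2$ and a unary $\psi'$ directly. The substantial case is $k>0$.

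\emph{Why one register helps.} Working under the relative semantics $\rmodels$ (justified by Lemma~\ref{equivalence of semantics}), a single register variable means that at every moment of the evaluation there is exactly \emph{one} active reference point: the value of the constraint $x\sim c$ is just a test of the single quantity $\delta(x)=d_{\mathrm{cur}}-d_{\mathrm{reset}}$ against the thresholds. Hence it suffices to preserve, for all positions $i\le j$ of $w$, the $\equiv_\psi$-class of $d_j-d_i$, where $\equiv_\psi$ is the equivalence relation on $\Z$ induced by $c_1<\cdots<c_m$ from the proof of Theorem~\ref{Path check TPTL-upper-P}(ii). The decisive structural fact is that, writing a position via its period offset and its block number, one has $d_j-d_i=\Delta+(b_j-b_i)k$, where $\Delta$ ranges over the polynomially many differences of values occurring in $u_1u_2$ and $b_j-b_i\ge 0$. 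For fixed offsets this is a \emph{single} arithmetic progression of step $k$ in the block gap $g=b_j-b_i$, whose $\equiv_\psi$-class is monotone and saturates once it passes $c_m$; thus the whole class structure is described by polynomially many progressions, each with at most $m{+}1$ crossing events. (This is exactly the one-dimensional situation that breaks with two registers, where two \emph{independent} block gaps produce a two-dimensional crossing pattern that cannot be compressed simultaneously with a single offset $l$; this is the source of the $\pspace$-hardness of Theorem~\ref{theorem-PSPACE-lower-bound-3}.)

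\emph{The construction and the key lemma.} Using the polynomial crossing description, I build $v_1,v_2$ by the shrinking procedure from the proof of Theorem~\ref{Path check TPTL-upper-P}(ii) (bounding consecutive gaps by $C_\psi{+}1$, now with the \emph{unary} compressed constants) and choose the new offset $l$ so that each compressed progression $\Delta'+g\,l$ saturates after polynomially many blocks and crosses the compressed thresholds in the same order, at the same $g$-\emph{steps where a crossing actually occurs}, as the original progression $\Delta+g\,k$. All the quantities $C_\psi$, $M_{u_2}$, $m_1$, $m_2$ (defined in \eqref{C}, \eqref{M}, \eqref{eq-m1}, \eqref{eq-m2}) are first compressed; by Lemma~\ref{lemma-min-max}-style bottom-up computation they are obtained in polynomial time, and the resulting unary word and formula have polynomial size. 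The correctness rests on an invariance lemma, proved by induction on the formula exactly in the style of Lemmas~\ref{lemma_aux} and~\ref{lemma_claim2}: if two data words induce the same $\equiv_\psi$-class of $d_j-d_i$ for all $i\le j$, then they satisfy the same $\tptl^1$-formulae under $\rmodels$. Combined with Lemma~\ref{lemma_claim1}, this gives $w\models_\tptl\psi$ iff $w'\models_\tptl\psi'$, and Theorem~\ref{Path check TPTL-upper-P}(i) then decides the latter in $\ptime$.

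\emph{Main obstacle.} The delicate point is reconciling the fact that in the original word a progression may cross a threshold only after \emph{exponentially many} blocks (when $k$ is small and $c_m$ large in binary) with the requirement that the compressed word saturate within \emph{polynomially many} blocks. The resolution — and the heart of the argument — is that with a single saturating progression the truth of $\psi$ depends only on the \emph{eventual} class and on the \emph{order} in which crossings are encountered along the word, not on the exact block gaps at which they occur; the until/release and finally/globally operators are insensitive to the multiplicity of repeated $(\text{proposition},\text{class})$ configurations as long as the eventually-periodic pattern is preserved. Making this invariance precise (in particular checking that $\psi'$ in negation normal form is correctly evaluated for both $\U$ and $\R$, and that the quasi-monotonic-style bound on $l$ indeed yields polynomial saturation) is the technical core; the remaining bookkeeping that the construction runs in logspace-computable polynomial time on binary input with unary output is routine.
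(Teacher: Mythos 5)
There is a genuine gap, and it sits exactly where you locate ``the heart of the argument.'' Your correctness lemma --- if two words induce the same $\equiv_\psi$-class of $d_j-d_i$ for \emph{all} $i\le j$, then they agree on all $\tptl^1$-formulae under $\rmodels$ --- is fine (it is a variant of Lemma~\ref{lemma_aux}, and it is what the paper's proof of Theorem~\ref{Path check TPTL-upper-P}(ii) verifies pointwise via $e_j-e_i\equiv_\psi e'_j-e'_i$). But your construction cannot satisfy its hypothesis: when $k$ is small and $c_m$ is exponential, a progression $\Delta+gk$ stays in one class for exponentially many blocks, i.e.\ the run lengths of the class pattern along $w$ are exponential in the input size, whereas in any unary word $v_1(v_2)^\omega_{+l}$ of polynomial size all crossings occur within polynomially many positions. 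So the positionwise class equality fails by counting, and you implicitly retreat to the weaker claim that satisfaction depends ``only on the eventual class and the order of crossings, not on the exact block gaps.'' That weaker claim is false as stated: $\tptl$ subsumes $\X$ (via $\false\,\U\,\varphi$), which is not stuttering-invariant, so multiplicities of repeated $(\text{proposition},\text{class})$ configurations matter up to roughly $\vert\psi\vert$. The correct truncation statement --- $uv^Nw\models\psi$ iff $uv^{\vert\psi\vert}w\models\psi$ for $N\ge\vert\psi\vert$ --- is exactly the EF-game Lemma~\ref{lemma-ltl-path} of the paper, and even after importing it you would still have to show that a \emph{single} offset $l$ and a \emph{single} additively consistent assignment of compressed data values and compressed constants realizes, simultaneously for all pairs of period offsets and all thresholds, the original crossing order together with the exact sub-$\vert\psi\vert$ gaps. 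This simultaneous-realizability claim is asserted but not proved, and it is not obviously true: the crossing times are $\lceil (c-\Delta)/k\rceil$ where the $\Delta$'s are constrained by the additive structure of differences within $u_1u_2$, so you are solving a coupled system, not compressing independent progressions. (Your parenthetical remark about two registers and Theorem~\ref{theorem-PSPACE-lower-bound-2} correctly explains why such compressions \emph{must} fail with two registers, but it does not show they succeed with one.)

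For comparison, the paper's proof avoids any global compression. It exploits that with one register every subformula $x.\varphi$ is \emph{closed}, so its truth at a position is valuation-independent (Claim~1) and periodic beyond $u_1$ (Claim~2); it then computes, bottom-up over the freeze quantifiers, the truth set of each $x.\varphi$ at the polynomially many positions $i\in[0,|u_1|+|u_2|)$, replacing inner freeze subformulae by fresh propositions. Each resulting instance is freeze-quantifier-free, hence (Lemma~\ref{lemma-tptl-1}) the constraints $x\sim_j c_j$ refer to the single anchor $d_0$ of the suffix and switch monotonically along the period; this yields a word of the shape $u'_1 v_1^{N_1}\cdots v_l^{N_l}$ with polynomially many interval-shaped blocks and binary exponents, to which the EF-game truncation applies. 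In other words, the paper runs one small reduction \emph{per freeze subformula and per start position}, where your plan demands one compressed word that is correct for all reference points at once --- that uniformity is precisely the unproven step. Your structural observations (one active reference point; monotone saturation of constraint classes along the period) are the right ingredients and match the paper's; to close the gap you should either prove the simultaneous realizability of the compressed crossing pattern with the $\vert\psi\vert$-threshold refinement, or restructure the argument along the paper's per-subformula decomposition.
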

    For the proof of Theorem~\ref{thm-1-reg-P} we first show some auxiliary results of independent interest.

\begin{lemma} \label{lemma-ltl-path}
For a given $\ltl$ formula $\psi$, words $u_1, \ldots, u_n, u \in (2^{\myprop})^*$ and binary encoded
numbers $N_1, \ldots, N_n \in \mathbb{N}$, the question whether $u_1^{N_1} u_2^{N_2} \dots u_n^{N_n}  u^\omega \models \psi$ holds,
belongs to $\ptime$ (more precisely,  $\textup{AC}^1(\textup{LogDCFL})))$.
\end{lemma}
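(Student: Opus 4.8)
The plan is to reduce this succinct instance to an \emph{explicitly presented} ultimately periodic word of polynomial length and then to apply the path-checking algorithm of Kuhtz and Finkbeiner~\cite{DBLP:conf/icalp/KuhtzF09}, which decides $v \models \psi$ for an explicitly given ultimately periodic word $v$ in $\mathsf{AC}^1(\mathsf{LogDCFL})$. The difficulty is that $w = u_1^{N_1}\cdots u_n^{N_n} u^\omega$ has length $\sum_{i} N_i |u_i| + |u|$, which is exponential in the input since the $N_i$ are binary encoded. The heart of the proof is therefore a pumping argument for $\ltl$: I would show that lowering each exponent to $N_i' := \min(N_i,|\psi|)$ preserves satisfaction at position $0$, i.e.\ $u_1^{N_1}\cdots u_n^{N_n}u^\omega \models \psi$ holds if, and only if, $u_1^{N_1'}\cdots u_n^{N_n'}u^\omega \models \psi$ holds. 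The truncated word $v := u_1^{N_1'}\cdots u_n^{N_n'}u^\omega$ then has length at most $|\psi|\cdot\sum_i|u_i| + |u|$ and hence is of polynomial size.

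To establish the pumping step I would first prove the following single-block statement: for every nonempty finite word $B$, every finite word $x$, every $\omega$-word $z$, and every $m \geq \mathrm{td}(\psi)$, the words $x B^m z$ and $x B^{m+1} z$ satisfy exactly the same $\ltl$-formulae of temporal depth at most $\mathrm{td}(\psi)$ at position $0$; here $\mathrm{td}(\psi)$ denotes the nesting depth of the modalities $\X$ and $\U$ in $\psi$, and $\mathrm{td}(\psi) \leq |\psi|$. Applying this repeatedly (descending from $N_i$ to $\mathrm{td}(\psi)$, which is legal because in each step the smaller of the two exponents is still $\geq \mathrm{td}(\psi)$) and treating the blocks one after another---with the already-shortened prefix playing the role of $x$ and the remaining blocks together with $u^\omega$ playing the role of $z$---yields the claimed truncation. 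The single-block statement itself I would prove by induction on the formula, equivalently by describing a Duplicator strategy for the $\mathrm{td}(\psi)$-round Ehrenfeucht--Fra\"iss\'e game for $\ltl$: Duplicator keeps the two pebbled positions either identical inside $x$ or inside $z$, or at the same offset inside a $B$-copy with equally many whole copies remaining, or with both sides having at least as many whole copies remaining as rounds left to play.

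The hard part will be the case $\varphi = \varphi_1 \U \varphi_2$ of this induction. The witnessing position for the until may lie arbitrarily far inside the repeated blocks, so I cannot simply appeal to the induction hypothesis at it. The observations that make the argument go through are that the suffix $B^{\mathrm{td}(\psi)} z$ is literally shared by $x B^m z$ and $x B^{m+1} z$ (since $m \geq \mathrm{td}(\psi)$), and that ``$\varphi_2$ holds somewhere in a copy'' and ``$\varphi_1$ holds at every offset of a copy'' are period-invariant properties of the generic early copies (those that still have $\geq \mathrm{td}(\varphi_i)$ copies ahead of them, to which the induction hypothesis does apply). Hence an until-witness can always be relocated either into the shared suffix or into one of the first $\mathrm{td}(\psi)$ copies, and the single extra generic copy present in $x B^{m+1} z$ neither creates nor destroys a witness, nor can it break the requirement that $\varphi_1$ hold throughout. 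The bookkeeping of offsets and of the strict between-condition of $\U$ is routine, but it is where the care is needed.

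Finally, for the complexity bound I would note that $v$ can be produced from the input by a logspace transducer: each $N_i' = \min(N_i, |\psi|)$ is obtained by one comparison of the binary number $N_i$ with $|\psi|$, and then $u_i$ is emitted $N_i' \leq |\psi|$ many times, which a logspace machine can do while maintaining a copy counter and a position counter. Since $\mathsf{AC}^1(\mathsf{LogDCFL})$ (and \emph{a fortiori} $\ptime$) is closed under logspace reductions, composing this reduction with the Kuhtz--Finkbeiner algorithm on $v$ decides $w \models \psi$ within $\mathsf{AC}^1(\mathsf{LogDCFL})$, as claimed.
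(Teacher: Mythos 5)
Your proposal is correct and follows essentially the same route as the paper's proof: truncate each binary-encoded exponent to $\min(N_i,|\psi|)$, justify the truncation by a pumping argument for $\ltl$ carried out via the Ehrenfeucht--Fra\"iss\'e game of Etessami--Wilke (the paper measures formulae by until rank where you use temporal depth, and compares $v^{m_0}$ with $v^{m_1}$ for $m_0,m_1\ge k$ in one go rather than stepping $m\mapsto m+1$, but the Duplicator strategy is the same), and then invoke the parallel path-checking algorithm of \cite{DBLP:conf/icalp/KuhtzF09}. The one citation-level correction: the Kuhtz--Finkbeiner $\mathsf{AC}^1(\mathsf{LogDCFL})$ bound is for \emph{finite} path checking, so for the ultimately periodic word you should, as the paper does, first apply the logspace reduction from infinite to finite $\ltl$ path checking (Theorem~3.6 of \cite{DBLP:conf/concur/MarkeyS03}) rather than attributing the ultimately periodic case directly to \cite{DBLP:conf/icalp/KuhtzF09}.
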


\begin{proof}
	We first prove that 
	for all finite words $u,v \in  (2^{\myprop})^*$, every infinite word
$w \in (2^{\myprop})^\omega$ and every number $N \geq |\psi|$, we have $u v^N w \models \psi$
if, and only if, $u v^{|\psi|} w \models \psi$. 
For the proof, we use the Ehrenfeucht-Fra\"iss\'e 
game  for $\ltl$ (EF-game, for short), introduced in \cite{DBLP:journals/iandc/EtessamiW00} and briefly explained in the following.

Let $w_0=(P_0,d_0)(P_1,d_1)\dots$ and $w_1=(P'_0,d'_0)(P'_1,d'_1)\dots$ be two infinite data words over $2^\myprop$, and let $k\ge 0$.
The $k$-round EF-game is played by two players, called Spoiler and Duplicator, on $w_0$ and $w_1$. 
A game configuration is a pair of positions $(i_0,i_1)\in \N\times\N$, where $i_0$ is a position in $w_0$, and $i_1$ is a position in $w_1$. 
In each round of the game, if the current configuration is $(i_0,i_1)$, Spoiler  chooses an index $l\in \{0,1\}$ and a position $j_l>i_l$ in the data word $w_l$. Then, Duplicator responds with a position $j_{1-l}>i_{1-l}$ in the other data word $w_{1-l}$. 
After that, Spoiler has two options:
\begin{itemize}
\item He chooses to finish the current round. The current round then finishes with configuration  $(j_0,j_1)$. 
\item He chooses a position $j'_{1-l}$ so that $i_{1-l}<j'_{1-l}<j_{1-l}$. Then Duplicator responds with a position $j'_l$ so that $i_l <j'_l<j_l$. The current round then finishes with configuration $(j'_0,j'_1)$. 
\end{itemize}
The \emph{winning condition for Duplicator} is defined inductively by the number $k$ of rounds of the EF-game: 
duplicator wins the $0$-round EF-game starting in configuration $(i_0,i_1)$
 if $P_{i_0}=P'_{i_1}$. 
Duplicator wins the $k+1$-round EF-game  starting in configuration $(i_0,i_1)$
if $P_{i_0}=P'_{i_1}$  and for every choice of moves of Spoiler in the first round, Duplicator 
can respond such that Duplicator
wins the $k$-round EF-game starting in the finishing configuration of the first round (\ie, in the above configuration $(j_0,j_1)$ or $(j'_0, j'_1)$).

The \emph{until rank} of an $\ltl$ formula $\varphi$, denoted by $\urk(\varphi)$, is defined inductively on the structure of $\varphi$: 
\begin{itemize}
\item If $\varphi$ is $\true$, $p\in\myprop$ or $x\sim c$, then $\urk(\varphi)=0$. 
\item If $\varphi$ is $\neg\varphi_1$, then $\urk(\varphi)=\urk(\varphi_1)$. 
\item If $\varphi$ is $\varphi_1\wedge\varphi_2$, then $\urk(\varphi)=\max\{\urk(\varphi_1),\urk(\varphi_2)\}$. 
\item If $\varphi$ is $\varphi_1\U\varphi_2$, then $\urk(\varphi)=\max\{\urk(\varphi_1),\urk(\varphi_2)\}+1$. 
\end{itemize}
\begin{theorem}[\cite{DBLP:journals/iandc/EtessamiW00}]
	\label{theorem_etessami}
	Duplicator wins the $k$-round EF-game on $w_0$ and $w_1$ starting in configuration $(0,0)$ if, and only if, for every $\ltl$ formula $\varphi$ with $\urk(\varphi)\le k$, we have  $w_0\ltlmodels\varphi$ if, and only if, $w_1\ltlmodels\varphi$. 
\end{theorem}
Now, let $k\ge 0$, and assume that $w_0 =  u v^{m_0} w$ and $w_1 =  u v^{m_1} w$, where $m_0, m_1 \geq k$, $u$ and $v$ are finite words, and $w$ is an infinite word. 
One can easily prove that Duplicator can win the $k$-round EF-game starting from configuration $(0,0)$.
The point is that Duplicator can enforce that after the first round the new configuration $(i_0,i_1)$
satisfies one of the following two conditions:
\begin{itemize}
\item $w_0[{i_0\!:}] = w_1[{i_1\!:}]$, \ie, after the first round the suffix of $w_0$ starting in position $i_0$ is equal to the suffix of $w_1$ starting in $i_1$. This implies that Duplicator can win the remaining $(k-1)$-round EF-game starting in configuration $(i_0,i_1)$.  
\item $w_0[{i_0\!:}]$ (respectively, $w_1[{i_1\!:}]$) has the form $u' v^{n_0} w$ (respectively, $u' v^{n_1} w$), where $n_0, n_1 \geq k-1$.
	Hence, by induction, Duplicator can win the $(k-1)$-round EF-game from configuration $(i_0,i_1)$. 
\end{itemize}
By Theorem~\ref{theorem_etessami}, we have $u v^{m_0} w\ltlmodels\varphi$ if, and only if,  $u v^{m_1} w\ltlmodels\varphi$ for every $\ltl$ formula $\varphi$ with $\urk(\varphi)\le k$. 
It follows that for two infinite words $u_1^{m_1} u_2^{m_2} \dots u_n^{m_n}  u_{n+1}^\omega$ and $u_1^{m'_1} u_2^{m'_2} \dots u_n^{m'_n}  u_{n+1}^\omega$ satisfying $m_i,m'_i\ge k$ for all $i\in\{1,\dots, n\}$, we have $${u_1^{m_1} u_2^{m_2} \dots u_l^{m_n}  u_{n+1}^\omega\ltlmodels\varphi} \textrm{ if, and only if, }  {u_1^{m'_1} u_2^{m'_2} \dots u_l^{m'_n}  u_{n+1}^\omega\ltlmodels\varphi}$$ for every $\ltl$ formula $\varphi$ with $\urk(\varphi)\le k$.

We use this to prove the lemma as follows: replace every binary encoded exponent
$N_i$ in the word $u_1^{N_1} u_2^{N_2} \dots u_l^{N_n} u_{n+1}^\omega$ by $\min\{N_i, |\psi|\}$. 
By Theorem 3.6 in \cite{DBLP:conf/concur/MarkeyS03}, infinite path checking for $\ltl$ can be reduced in logspace to finite path checking for $\ltl$. Finite path checking for $\ltl$ is in $\textup{AC}^1(\textup{LogDCFL}) \subseteq \ptime$~\cite{DBLP:conf/icalp/KuhtzF09}. 
\end{proof}

\begin{lemma}\label{lemma-tptl-1}
Path checking for $\tptl_{\textup{b}}$ formulas, which do not contain the freeze quantifier $x.(\cdot)$, over
infinite binary encoded data words is in  $\ptime$ (in fact, $\textup{AC}^1(\textup{LogDCFL}))$.
\end{lemma}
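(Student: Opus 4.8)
The plan is to eliminate the constraint subformulae $x\sim c$ by absorbing them into the alphabet, reducing the problem to ordinary $\ltl$ path checking over an infinite word of the shape treated in Lemma~\ref{lemma-ltl-path}. The key observation is that, since $\psi$ contains no freeze quantifier, every register variable retains its initial value $d_0$ throughout the evaluation. Hence for any position $i$ and any constraint $x\sim c$ occurring in $\psi$ we have $(w,i,\bar{\mathsf{0}})\tptlmodels x\sim c$ if, and only if, $d_i-d_0\sim c$, a condition that depends only on the data value $d_i$ (and the fixed number $d_0$).

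First I would introduce, for each distinct constraint $x\sim c$ in $\psi$, a fresh atomic proposition $q_{x\sim c}$, and let $\hat\psi$ be the $\ltl$-formula obtained from $\psi$ by replacing every occurrence of $x\sim c$ with $q_{x\sim c}$; since $\psi$ has no freeze operator, $\hat\psi$ is a genuine $\ltl$-formula over the extended alphabet $2^{\myprop\cup Q}$, where $Q$ is the set of fresh propositions. Given the data word $w=u_1(u_2)^\omega_{+k}$, I decorate each position $i$ by adding $q_{x\sim c}$ to its label exactly when $d_i-d_0\sim c$ holds, yielding a word $\hat w$. A straightforward induction on the structure of $\psi$ then shows $w\tptlmodels\psi$ if, and only if, $\hat w\ltlmodels\hat\psi$.

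The core of the argument is to bring $\hat w$ into the normal form $v_1^{N_1}\cdots v_n^{N_n}\,v^\omega$ required by Lemma~\ref{lemma-ltl-path}, with $n$ polynomial and the $N_i$ binary encoded. Consider the $m$-th copy of $u_2$ in $w$ (so $m\ge 0$ and the copy is $(u_2)_{+mk}$); its $j$-th position carries data value $d^{(u_2)}_j+mk$, and $q_{x\sim c}$ holds there iff $d^{(u_2)}_j+mk-d_0\sim c$. For $k>0$ this is a monotone linear condition in $m$: for $\sim\,\in\{\le,<\}$ it switches from true to false, for $\sim\,\in\{\ge,>\}$ from false to true, and for equality it holds for at most one value of $m$; in each case the unique flip-point can be computed by a binary division with ceiling. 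Collecting the at most $O(|\psi|\cdot|u_2|)$ flip-points over all constraints and all positions $j$ of $u_2$ and sorting them partitions the repetition index $m$ into polynomially many intervals on each of which the decoration of the whole $u_2$-block is constant; beyond the largest flip-point it is constant forever. Writing $\hat u_2^{(m)}$ for the decorated $m$-th block, this exhibits $\hat w$ as $\hat u_1\,(\hat u_2^{(a_0)})^{N_0}\cdots(\hat u_2^{(a_p)})^{N_p}\,(\hat u_2^{(\infty)})^\omega$, precisely the required shape, with binary encoded exponents $N_i$ given by the differences of consecutive flip-points. (For $k=0$ the decoration is identical in every copy and $\hat w=\hat u_1(\hat u_2)^\omega$ directly.)

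Applying Lemma~\ref{lemma-ltl-path} to $\hat w$ and $\hat\psi$ then yields membership in $\ptime$, and in fact in $\mathsf{AC}^1(\mathsf{LogDCFL})$, since the whole transformation---computing the decorated blocks, the flip-points, the exponents, and the periodic tail---is a logspace reduction. The main obstacle is the third step: controlling how the constraint-decoration evolves across the infinitely many shifted copies of $u_2$ and proving that it changes only polynomially often, so that the binary-encoded normal form of Lemma~\ref{lemma-ltl-path} applies; once this stabilization is established the result is immediate.
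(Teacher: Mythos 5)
Your proposal is correct and follows essentially the same route as the paper's proof: both replace each constraint $x\sim c$ by a fresh proposition (exploiting that, absent freeze quantifiers, every register keeps the initial value $d_0$), observe that the decoration of the $m$-th copy of $u_2$ can flip only at polynomially many logspace-computable thresholds in $m$ because the data value $d_j+mk$ is monotone in $m$, and thereby rewrite the decorated word in the form $v_1^{N_1}\cdots v_l^{N_l}$ with $N_l=\omega$ required by Lemma~\ref{lemma-ltl-path}. The only cosmetic difference is that the paper first normalizes all constraints to the forms $x<c$ and $x>c$, whereas you treat equality constraints directly via their (at most two) flip-points, which changes nothing in the polynomial bound.
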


\begin{proof}
	We reduce the question whether $w \models_\tptl \psi$ in logspace to an instance of the special $\ltl$ path-checking problem from
Lemma~\ref{lemma-ltl-path}.
Let $w  = u_{1}(u_{2})_{+k}^{\omega}$. 
We use $w[i]$ to denote the pair $(P_i,d_i)\in 2^{\myprop} \times\N$ occurring at the $i$-th position of $w$. 
Let $n_1 = |u_1|$ and $n_2 = |u_2|$. 
Without loss of generality we may assume that the only register variable that appears in constraint formulas in $\varphi$ is $x$. 
Note that since $\varphi$ does not contain the freeze quantifier, the value of $x$ is always assigned the initial value $d_0$.  

In order to construct an $\ltl$ formula from $\psi$, it remains to eliminate occurrences
of constraint formulas $x \sim c$ in $\psi$.  
Without loss of generality, we may assume that all constraints are of the form $x < c$ or $x > c$.
Let $x \sim_1 c_1, \ldots, x \sim_m c_m$ be a list of all
constraints that appear in $\psi$.
We introduce for every $j\in \{1, \dots,  m\}$ a new atomic
proposition $p_j$, and we define $\myprop' \defeq \myprop \cup \{ p_1, \ldots, p_m \}$.
Let $\psi'$ be obtained from $\psi$ by replacing every occurrence of
$x \sim_j c_j$ by $p_j$, and let  $w' \in (2^{\myprop'})^\omega$
be the $\omega$-word defined by $w'[i] = P_i \cup \{ p_j \mid 1 \leq j \leq m, d_i-d_0 \sim_j c_j \}$.
Clearly $w \models_\tptl \psi$ if, and only if, $w' \ltlmodels \psi'$.
Next we will show that the word $w'$ can be written in the form required in
Lemma~\ref{lemma-ltl-path}.

First of all, we can write $w'$ as
$w' = u'_1 u'_{2,0}  u'_{2,1} u'_{2,2} \dots$,
where $|u'_1| = n_1$ and $|u'_{2,i}| = n_2$.
The word $u'_1$ can be computed in logspace by evaluating all constraints
at all positions of $u_1$. Moreover, every word $u'_{2,i}$
is obtained from $u_2$ (without the data values) by adding the new propositions
$p_j$ at the appropriate positions.
Consider the equivalence relation $\equiv$ on $\mathbb{N}$ with
$a \equiv b$ if, and only if, $u'_{2,a} = u'_{2,b}$.
The crucial observations are that (i) every equivalence class of $\equiv$
is an interval, and (ii)
the index of $\equiv$ is bounded
by $1 + n_2 \cdot m$ (one plus the length $n_2$ of $u_2$ times the number $m$ of constraint formulas). To see this,
consider a position $i\in \{0,\dots,n_2-1\}$ in the word $u_2$ and a constraint $x \sim_j c_j$ for some $j\in\{1,\dots, m\}$. 
Then, the truth value of ``proposition $p_j$ is present at the $i$-th position of $u'_{2,x}$''
switches (from true to false or from false to true) at most once when $x$ grows. The reason for this is that the data value
at position $n_1 + i  + n_2 \cdot x$ is
$d_{n_1 + i  + n_2 \cdot x} = d_{n_1 + i} + k \cdot x$ for $x \geq 0$, i.e., it
grows monotonically with $x$. Hence, the truth value of $d_{n_1 + i} + k \cdot x - d_0 \sim_j c_j$
switches at most once, when $x$ grows.
So, we get at most $n_2 \cdot m$ many ``switching points'' in $\mathbb{N}$
which produce at most $1 + n_2 \cdot m$ many intervals.

Let $I_1, \ldots, I_l$ be a list of all $\equiv$-classes (intervals), where $a < b$ whenever $a \in I_i$, $b \in I_j$ and $i < j$.
The borders of these intervals can be computed in logspace
using arithmetics on binary encoded  numbers (addition, multiplication and division with remainder
can be carried out in logspace on binary encoded
numbers \cite{DBLP:journals/jcss/HesseAB02}). Hence, we can compute in logspace  the lengths $N_i = |I_i|$ of the intervals, where
$N_l = \omega$.
Also, for all $i\in\{1,\dots, l\}$  we can compute in logspace the unique word $v_i$ such that
$v_i = u'_{2,a}$ for all $a \in I_i$. Hence, $w' = u'_1 v_1^{N_1} \dots v_l^{N_l}$.
We can now apply Lemma~\ref{lemma-ltl-path}.
\end{proof}

\noindent
{\em Proof of Theorem~\ref{thm-1-reg-P}.}
Consider an infinite binary encoded data word $w=u_{1}(u_{2})_{+k}^{\omega}$ and a $\tptl^1_{\textup{b}}$ formula $\psi$.
Let $n = |u_1|+|u_2|$.
We check in polynomial time whether $w\models_\tptl\psi$.
A $\tptl$ formula $\varphi$ is \emph{closed} if
every occurrence of a register variable $x$ in $\varphi$ is under the scope of a freeze quantifier $x.(\cdot)$.
The proofs of the following two claims are straightforward:

\medskip
\noindent
\textit{Claim 1}: If $\varphi$ is closed, then for all valuations $\nu, \nu'$, $(w,i,\nu)\models_\tptl\varphi$ if, and only if,  $(w,i,\nu')\models_\tptl\varphi$.

\medskip
\noindent
\textit{Claim 2}: If $\varphi$ is closed and $i\geq|u_{1}|$, then for every valuation $\nu$, $(w,i,\nu)\models_\tptl\varphi$ if, and only if,  $(w,i+|u_{2}|,\nu)\models_\tptl\varphi$.

\medskip
\noindent
By Claim 1 we can write $(w,i)\models_\tptl\varphi$ for $(w,i,\nu)\models_\tptl\varphi$.
It suffices to compute for every (necessarily closed) subformula $x.\varphi$ of $\psi$ the set of all positions $i \in [0, n-1]$
such that $(w,i) \models_\tptl x.\varphi$, or equivalently $w[{i:}] \models_\tptl \varphi$.
We do this in a bottom-up process.
Consider a subformula $x.\varphi$ of $\psi$ and a position
$i \in [0, n-1]$. We have to check whether $w[{i:}] \models_\tptl \varphi$.
Let $x.\varphi_1, \ldots, x.\varphi_l$ be a list of all subformulas
of $\varphi$ that are not in the scope of another freeze quantifier within $\varphi$. We can assume that for
every $s\in\{1,\dots, l\}$ we have already determined the set of positions
$j\in  [0, n-1]$ such that $(w,j) \models_\tptl x.\varphi_s$. We can therefore replace every subformula $x. \varphi_s$
of $\varphi$ by a new atomic proposition $p_s$ and add in the data words $u_1$ (resp., $u_2$) the proposition $p_s$ to all
positions $j$ (resp., $j - |u_1|$) such that $(w,j) \models_\tptl x.\varphi_s$, where $j \in [0, n-1]$. Here, we make use
of Claim~2.  We denote the resulting formula and the resulting data word with $\varphi'$ and $w' = u'_{1}(u'_{2})_{+k}^{\omega}$,
respectively. Next, we explain how to compute from $u'_1$ and $u'_2$ new finite data words $v_1$ and $v_2$
such that $v_{1}(v_{2})_{+k}^{\omega} = w'[{i\!:}]$.  If $i < |u'_1|$  then we take $v_1 = u'_1[{i\!:}]$ and $v_2 = u'_2$.
If  $|u'_1| \leq i \leq n-1$, then we take $v_1 = \varepsilon$  and $v_2 = u'_2[{i\!:}] (u'_2[{:\!i-1}]+k)$, where $u'_2[{:\!i-1}]$ denotes the prefix of $u'_2$ up to position $i-1$. 
Finally, using Lemma~\ref{lemma-tptl-1} we can check in polynomial time whether
$w'[{i\!:}] \models_\tptl \varphi'$.
\qed

\section{Lower Complexity Bounds}
In this section we prove several $\ptime$-hardness and $\pspace$-hardness results for the path-checking problem. 
Together with the upper bounds from Section~\ref{sec-upper-bounds} we obtain sharp complexity results for the various
path-checking problems.

\subsection{$\ptime$-Hardness Results.}

  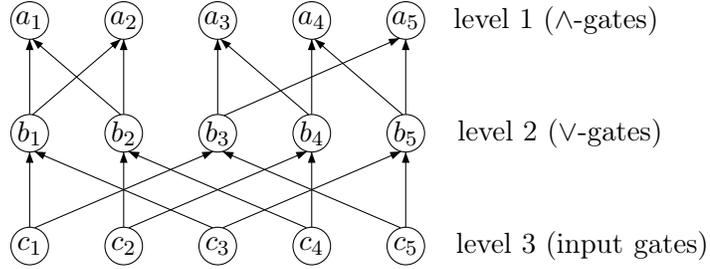
\begin{figure}[t]
    \setlength{\unitlength}{.5mm}
        \begin{center}
        \begin{picture}(160,65)(0,10)
        \gasset{Nw=10,Nh=10}
        \node(a1)(10,70){$a_1$}
        \node(a2)(35,70){$a_2$}
        \node(a3)(60,70){$a_3$}
        \node(a4)(85,70){$a_4$}
        \node(a5)(110,70){$a_5$}

        \node(b1)(10,40){$b_1$}
        \node(b2)(35,40){$b_2$}
        \node(b3)(60,40){$b_3$}
        \node(b4)(85,40){$b_4$}
        \node(b5)(110,40){$b_5$}

        \node(c1)(10,10){$c_1$}
        \node(c2)(35,10){$c_2$}
        \node(c3)(60,10){$c_3$}
        \node(c4)(85,10){$c_4$}
        \node(c5)(110,10){$c_5$}

        \drawedge(b1,a1){}
        \drawedge[sxo=-5,exo=5](b1,a2){}
        \drawedge[sxo=5,exo=-5](b2,a1){}
        \drawedge(b2,a2){}

        \drawedge(b3,a3){}
        \drawedge(b4,a4){}
        \drawedge(b5,a5){}
        \drawedge[sxo=5,exo=-5](b4,a3){}
        \drawedge[sxo=5,exo=-5](b5,a4){}
        \drawedge[sxo=-9,exo=9](b3,a5){}

        \drawedge(c2,b2){}
        \drawedge[sxo=-10,exo=10](c2,b4){}
        \drawedge(c4,b4){}
        \drawedge[sxo=10,exo=-10](c4,b2){}

        \drawedge(c1,b1){}
        \drawedge[sxo=10,exo=-10](c3,b1){}
        \drawedge[sxo=-10,exo=10](c3,b5){}
        \drawedge(c5,b5){}
        \drawedge[sxo=-10,exo=10](c1,b3){}
        \drawedge[sxo=10,exo=-10](c5,b3){}

        \node[Nframe=n](l1)(150,70){level 1 ($\wedge$-gates)}
        \node[Nframe=n](l2)(151,40){level 2 ($\vee$-gates)}
        \node[Nframe=n](l3)(157,10){level 3 (input gates)}
        \end{picture}
        \end{center}
        \caption{\label{fig-circuit} An SAM2-circuit with three levels.}
        \end{figure}

        We prove our $\ptime$-hardness results by a reduction from a restricted version of the Boolean circuit value problem. A \emph{Boolean circuit} is a finite 
        directed acyclic graph, where each node is called a \emph{gate}. 
        An \emph{input gate} is a node with indegree 0.
        All other gates are of a certain type, which is either $\vee$, $\wedge$ or $\neg$ (\ie, the corresponding logical OR, AND and NOT operations). 
        An \emph{output} gate is a node with outdegree 0.  A Boolean circuit is {\it monotone} if it does not have gates of type $\neg$.

        A \emph{synchronous alternating monotone circuit with fanin 2 and fanout 2} (SAM2-circuit, for short) is a monotone circuit divided into levels
        $1, \ldots, l$ for some $l \geq 2$ such that
        the following properties hold:
        \begin{itemize}
          \item All wires go from a gate in level $i+1$ to a gate from level $i$.
          \item All output gates are in  level $1$ and all input gates are in  level $l$.
          \item All gates in the same level are of the same type ($\wedge$, $\vee$ or input)
          	  and the types of the levels between $1$ and $l-1$ alternate between $\wedge$  and $\vee$. 
          \item All gates except the output gates have outdegree 2, and all gates except the input gates have indegree 2.
          	  For all $i\in\{2,\dots,l-1\}$, the two input gates for a gate at level $i$ are different.
        \end{itemize}

        By the restriction to fanin 2 and fanout 2, each level contains the same number of gates. Figure~\ref{fig-circuit} shows an example of an SAM2-circuit; the node names $a_i, b_i, c_i$ will be used later. 
        The \emph{circuit value problem} for SAM2-circuits, called SAM2CVP in \cite{Greenlaw:1995:LPC:203244}, is the following problem: given an SAM2-circuit $\alpha$, inputs $x_1,\dots,x_n\in \{0,1\}$, and a designated output gate $y$, does the output gate $y$ of $\alpha$ evaluate to $1$ on inputs $x_1,\dots,x_n$? 
        The circuit value problem for SAM2-circuits is $\ptime$-complete~ \cite{Greenlaw:1995:LPC:203244}.

        \begin{theorem} \label{thm-mtl-P}
        	Path checking for $\pure\mtl(\F,\X)_{\textup{u}}$ over finite unary encoded pure data words is $\ptime$-hard.
        \end{theorem}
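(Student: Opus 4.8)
\emph{Approach.} The plan is to reduce from the $\ptime$-complete problem SAM2CVP. Given an SAM2-circuit with levels $1,\dots,l$ and a distinguished output gate $y$, I construct in logspace a finite pure data word $w$ and a $\pure\mtl(\F,\X)_u$-formula $\Phi_1$ such that $w\models\Phi_1$ iff $y$ evaluates to $1$. Fix a spacing constant $N\ge 4$ and give every gate $g$ a unique identifier $\mathrm{id}(g)$. The word $w$ is a concatenation of one \emph{record} per gate, listed level by level with level $1$ first and with the record of $y$ at position $0$; thus the record of any gate precedes the records of its two inputs. An internal gate $g$ with inputs $h_1,h_2$ gets the record consisting of a \emph{head} of value $N\cdot\mathrm{id}(g)$, then a pointer of value $N\cdot\mathrm{id}(g)+1$ followed by an address of value $N\cdot\mathrm{id}(h_1)$, then a pointer of value $N\cdot\mathrm{id}(g)+2$ followed by an address of value $N\cdot\mathrm{id}(h_2)$. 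An input gate gets a head of value $N\cdot\mathrm{id}(g)$, a marker of value $N\cdot\mathrm{id}(g)+1$, and a bit cell of value $N\cdot\mathrm{id}(g)+3$ or $N\cdot\mathrm{id}(g)+2$ according to whether its input bit is $1$ or $0$.

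\emph{The formula.} The point is that data values act as addresses, so each level is evaluated by a \emph{single} formula that is reused at many positions through the semantics of $\F$. Let $\eta:=\X_{=1}\true$; since every head is immediately followed by a cell of value one larger, and the residues modulo $N$ above keep addresses and bit cells away from value $\equiv 0$ with a $+1$ successor, the only cells of value $\equiv 0\pmod N$ satisfying $\eta$ are the heads. Put $\Phi_l:=\eta\wedge\X(\X_{=2}\true)$, which holds at an input head iff its bit is $1$. For an internal level $i$ set
\[
  Q_i \;:=\; \X\big(\F_{=0}\,\Phi_{i+1}\big),
\]
and let $\Phi_i:=\eta\wedge\G_{[1,2]}Q_i$ on a $\wedge$-level and $\Phi_i:=\eta\wedge\F_{[1,2]}Q_i$ on a $\vee$-level. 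Evaluated at a head of value $V$, $\F_{[1,2]}$ reaches exactly the two pointer cells of that record (the only later cells at data difference $1$ or $2$, by $N\ge4$); the step $\X$ moves from a pointer to its address; and $\F_{=0}\Phi_{i+1}$ jumps from an address of value $N\cdot\mathrm{id}(h)$ to the unique later head of that value, namely the head of $h$ (the remaining cells of that value are addresses, which fail the leading $\eta$ of $\Phi_{i+1}$). Hence $\G_{[1,2]}Q_i$ says ``both inputs satisfy $\Phi_{i+1}$'' and $\F_{[1,2]}Q_i$ says ``some input satisfies $\Phi_{i+1}$''. Crucially $\Phi_{i+1}$ occurs \emph{once} in $\Phi_i$, so $\lvert\Phi_1\rvert=O(l)$.

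\emph{Correctness and the main obstacle.} A downward induction on $i$ then shows that $(w,p)\models\Phi_i$ at the head $p$ of a level-$i$ gate iff that gate evaluates to $1$; as $y$'s head sits at position $0$, this yields $w\models\Phi_1$ iff $y$ outputs $1$. All data values are bounded by $N\cdot n$ (with $n$ the number of gates) and all interval borders lie in $\{0,1,2\}$, so $w$ and $\Phi_1$ are unary, pure, finite, and logspace-computable, and the formula uses only $\F,\X$ (with $\G_I=\neg\F_I\neg$) and no propositions, as required. The step I expect to be the crux is obtaining a \emph{polynomial-size} formula despite fanout $2$: the obvious level-by-level encoding would refer to each input's subformula separately and blow up to size $2^{\Theta(l)}$. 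This is exactly what the two-cell pointer gadget avoids, letting a single occurrence of $\F_{[1,2]}$ capture both inputs while the address cells redirect, through one fixed jump $\F_{=0}$, to arbitrary shared targets; shared subgates are then re-evaluated by the $\F$-semantics instead of duplicated in the syntax. Checking that the residues modulo $N$ make $\eta$ and the two navigation steps land on precisely the intended cells is the remaining routine bookkeeping.
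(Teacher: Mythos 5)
Your proposal is correct, and while it reduces from the same $\ptime$-complete problem SAM2CVP and shares the paper's top-level skeleton (one constant-size formula per level, $\F$ for $\vee$-levels and $\G$ for $\wedge$-levels, so the formula has size linear in the number of levels), the encoding and navigation mechanism are genuinely different. The paper exploits fanin/fanout $2$ combinatorially: it decomposes the bipartite graph between consecutive levels into cycles, inserts dummy copies of gates to remove crossing edges, and numbers positions so that every wire corresponds to a data difference of exactly $m$ or $m+1$; navigation is then by arithmetic offset, $\F_{[m,m+1]}\X^{m}$ resp.\ $\G_{[m,m+1]}\X^{m}$, jumping from a block encoding levels $i,i{+}1$ into a re-labelled copy of level $i{+}1$ that heads the next block. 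You instead use a generic pointer/dereference scheme: records with head, pointer and address cells whose residues modulo $N$ separate their roles, $\F_{[1,2]}$ resp.\ $\G_{[1,2]}$ to select the two pointers, $\X$ to the address, and $\F_{=0}$ to jump to the unique later cell of equal value that passes the head test $\eta=\X_{=1}\true$ (spurious address cells of the same value fail $\eta$, so the existential $\F_{=0}$ is faithful; the level-by-level ordering guarantees the target head lies strictly in the future, as the strict until semantics requires). I checked the residue bookkeeping with $N\ge 4$ and the downward induction, and they go through, including the degenerate cases (equal inputs of a level-$1$ gate, last record of the word). What each approach buys: the paper's offset method needs the cycle decomposition (including undirected reachability in logspace) and dummy nodes, and intrinsically uses fanout $2$; your addressing gadget skips that preprocessing entirely, yields smaller parameters (data values $O(n)$, interval borders in $\{0,1,2\}$, formula size $O(l)$ rather than $O(nl)$ plus the final disjunction over the set $J$), and would generalize to circuits of unbounded fanout, at the modest cost of using annotated next modalities $\X_{=1},\X_{=2}$ in addition to plain $\X$ — still within $\pure\mtl(\F,\X)_u$ as the paper defines it. One cosmetic nit: your word is pure but not (strictly) monotonic; that matches the theorem statement, which does not claim monotonicity (and indeed cannot, by the cited $\mathsf{AC}^1(\mathsf{LogDCFL})$ upper bound for monotonic words).
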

        \begin{proof}
        	The proof is a reduction from SAM2CVP. 
        	Let $\alpha$ be an SAM2-circuit.  
        	We first encode each pair of consecutive levels of $\alpha$ into a data word, and combine these data words into a data word $w$,
       which is the encoding of the whole circuit. Then we construct a $\pure\mtl(\F,\X)_{\textup{u}}$ formula $\psi$ such that $w \models_\mtl \psi$ if, and only if,  $\alpha$ evaluates to $1$.
       The data word $w$ that we are going to construct contains gate names of $\alpha$ (and some copies of the gates) as atomic propositions.
       These propositions are only needed for the construction. At the end, we  remove all propositions from the data word $w$
       to obtain a pure data word.
       The whole construction can be done in logarithmic space. The reader might look at Example~\ref{ex-circuit}, where the construction is carried out for the circuit shown in Figure~\ref{fig-circuit}.

       Let $\alpha$ be an SAM2-circuit with $l \geq 2$ levels and $n$ gates in each level. By the restriction to fanin 2 and fanout 2, the induced undirected subgraph containing the nodes in level $i$ and level $i+1$ is comprised of several cycles; see Figure~\ref{inducedgra}.
       For instance, for the circuit in Figure~\ref{fig-circuit}, there are two cycles between level $1$ and $2$, and we have the same number of cycles between level $2$ and $3$.

        \begin{figure}[t]
        \setlength{\unitlength}{.7mm}
        \begin{center}
        \begin{picture}(170,40)
        \setlength{\unitlength}{0.7mm}
        \gasset{AHnb=0,Nw=2,Nh=2,ExNL=y}

        \node[NLdist=5](a1)(10,30){$a_{1,1}$}
        \node[NLdist=5](a2)(20,30){$a_{1,2}$}
        \node[NLdist=5](a3)(40,30){$a_{1,j_1}$}
        \imark[iangle=-60,ilength=8](a2)

        \node[NLdist=-5](b1)(10,10){$b_{1,1}$}
        \node[NLdist=-5](b2)(20,10){$b_{1,2}$}
        \node[NLdist=-5](b3)(40,10){$b_{1,j_1}$}
        \imark[iangle=120,ilength=8](b3)

        \drawedge(a1,b1){}
        \drawedge(a1,b2){}
        \drawedge(a2,b2){}
        \drawedge(a3,b3){}
        \drawedge(a3,b1){}
        \drawline[dash={0.5 1}0](26,20)(35,20)

        \node[NLdist=5](2a1)(55,30){$a_{2,1}$}
        \node[NLdist=5](2a2)(65,30){$a_{2,2}$}
        \node[NLdist=5](2a3)(85,30){$a_{2,j_2}$}
        \imark[iangle=-60,ilength=8](2a2)

        \node[NLdist=-5](2b1)(55,10){$b_{2,1}$}
        \node[NLdist=-5](2b2)(65,10){$b_{2,2}$}
        \node[NLdist=-5](2b3)(85,10){$b_{2,j_2}$}
        \imark[iangle=120,ilength=8](2b3)

        \drawedge(2a1,2b1){}
        \drawedge(2a1,2b2){}
        \drawedge(2a2,2b2){}
        \drawedge(2a3,2b3){}
        \drawedge(2a3,2b1){}
        \drawline[dash={0.5 1}0](71,20)(80,20)

        \drawline[dash={0.5 1}0](90,20)(100,20)

        \node[NLdist=5](3a1)(105,30){$a_{h,1}$}
        \node[NLdist=5](3a2)(115,30){$a_{h,2}$}
        \node[NLdist=5](3a3)(135,30){$a_{h,j_h}$}
        \node[Nframe=n](i)(149,30){level $i$}
        \imark[iangle=-60,ilength=8](3a2)

        \node[NLdist=-5](3b1)(105,10){$b_{h,1}$}
        \node[NLdist=-5](3b2)(115,10){$b_{h,2}$}
        \node[NLdist=-5](3b3)(135,10){$b_{h,j_h}$}
        \node[Nframe=n](i1)(153,9){level $i\!+\!1$}
        \imark[iangle=120,ilength=8](3b3)

        \drawedge(3a1,3b1){}
        \drawedge(3a1,3b2){}
        \drawedge(3a2,3b2){}
        \drawedge(3a3,3b3){}
        \drawedge(3a3,3b1){}
        \drawline[dash={0.5 1}0](121,20)(130,20)
        \end{picture}
        \end{center}
        \caption{The induced subgraph between level $i$ and $i+1$}
        \label{inducedgra}
        \end{figure}
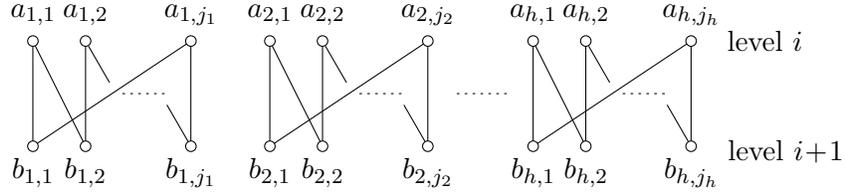

        We can enumerate in logarithmic space the gates of level $i$ and level $i+1$ such that they occur in the order shown in
        Figure~\ref{inducedgra}. To see this, let $a_1, \ldots, a_n$ (respectively, $b_1, \ldots, b_n$) be the nodes in level $i$ (respectively, $i+1$)
        in the order in which they occur in the input description. We start with $b_1$ and enumerate the nodes in the cycle containing
        $b_1$ (from $b_1$ we go to the smaller neighbour among $a_1, \ldots, a_n$. Then from this node the next node on the cycle is uniquely determined since the graph
        has degree 2. Thereby we store the current node in the cycle and the starting node $b_1$. As soon as we come back to $b_1$,
        we know that the first cycle is completed. To find the next cycle, we search for the first node from the list $b_2, \ldots, b_n$
        that is not reachable from $b_1$ (reachability in undirected graphs is in $\ls$
        	), and continue in this way.

        \newcommand{\cycles}{h}
        So, assume that the nodes in level $i$ and $i+1$ are ordered as in Figure~\ref{inducedgra}. Assume we have $\cycles$ cycles. Next, we want to get rid of the crossing edges between the rightmost node in level $i$ and the leftmost node in level $i+1$ in each cycle. For this, 
        we introduce for each cycle two dummy nodes which are basically copies of the leftmost node in each level. Formally,  for each         
        $t\in \{1,\dots,\cycles\}$, add a node $a_{t,1}'$ (respectively, $b_{t,1}'$) after $a_{t,j_t}$ (respectively, $b_{t,j_t}$), and then replace the edge $(a_{t,j_t}, b_{t,1})$ by a new edge $(a_{t,j_t}, b_{t,1}')$.
        In this way we  obtain the graph
        shown in Figure~\ref{PathPic}. Again, the construction can be done in logarithmic space.

        \begin{figure}[t]
        \setlength{\unitlength}{.7mm}
        \begin{center}
        \begin{picture}(170,40)
        \gasset{AHnb=0,Nw=2,Nh=2,ExtNL=y}

        \node[NLdist=2](a1)(10,30){$a_{1,1}$}
        \node[NLdist=2](a2)(20,30){$a_{1,2}$}
        \node[NLdist=2](a3)(40,30){$a_{1,j_1}$}
        \node[Nfill=y,NLdist=2](a4)(50,30){$a_{1,1}'$}
        \imark[iangle=-60,ilength=8](a2)

        \node[NLdist=-9](b1)(10,10){$b_{1,1}$}
        \node[NLdist=-9](b2)(20,10){$b_{1,2}$}
        \node[NLdist=-9](b3)(40,10){$b_{1,j_1}$}
        \node[Nfill=y,NLdist=-9](b4)(50,10){$b_{1,1}'$}
        \imark[iangle=120,ilength=8](b3)

        \drawedge(a1,b1){}
        \drawedge(a1,b2){}
        \drawedge(a2,b2){}
        \drawedge(a3,b3){}
        \drawline[dash={0.5 1}0](26,20)(35,20)

        \node[NLdist=2](2a1)(60,30){$a_{2,1}$}
        \node[NLdist=2](2a2)(70,30){$a_{2,2}$}
        \node[NLdist=2](2a3)(90,30){$a_{2,j_2}$}
        \node[Nfill=y,NLdist=2](2a4)(100,30){$a_{2,1}'$}
        \imark[iangle=-60,ilength=8](2a2)

        \node[NLdist=-9](2b1)(60,10){$b_{2,1}$}
        \node[NLdist=-9](2b2)(70,10){$b_{2,2}$}
        \node[NLdist=-9](2b3)(90,10){$b_{2,j_2}$}
        \node[Nfill=y,NLdist=-9](2b4)(100,10){$b_{2,1}'$}
        \imark[iangle=120,ilength=8](2b3)

        \drawedge(2a1,2b1){}
        \drawedge(2a1,2b2){}
        \drawedge(2a2,2b2){}
        \drawedge(2a3,2b3){}
        \drawline[dash={0.5 1}0](76,20)(85,20)

        \drawedge(b4,a3){}
        \drawedge(2a3,2b4){}

        \drawline[dash={0.5 1}0](105,20)(114,20)

        \node[NLdist=2](3a1)(120,30){$a_{h,1}$}
        \node[NLdist=2](3a2)(130,30){$a_{h,2}$}
        \node[NLdist=2](3a3)(150,30){$a_{h,j_h}$}
        \node[Nfill=y,NLdist=2](3a4)(160,30){$a_{h,1}'$}
        \imark[iangle=-60,ilength=8](3a2)

        \node[NLdist=-9](3b1)(120,10){$b_{h,1}$}
        \node[NLdist=-9](3b2)(130,10){$b_{h,2}$}
        \node[NLdist=-9](3b3)(150,10){$b_{h,j_h}$}
        \node[Nfill=y,NLdist=-9](3b4)(160,10){$b_{h,1}'$}
        \imark[iangle=120,ilength=8](3b3)

        \drawedge(3a1,3b1){}
        \drawedge(3a1,3b2){}
        \drawedge(3a2,3b2){}
        \drawedge(3a3,3b3){}
        \drawedge(3a3,3b4){}
        \drawline[dash={0.5 1}0](136,20)(145,20)
        \end{picture}
        \end{center}
        \caption{The graph after adding dummy nodes}
        \label{PathPic}
        \end{figure}
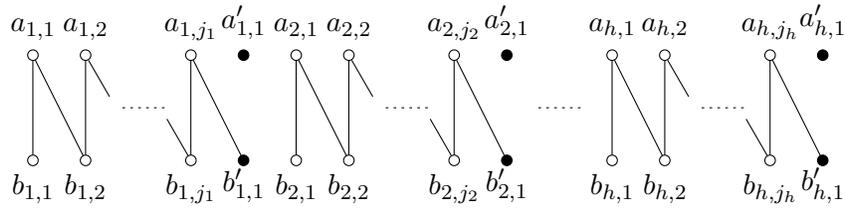

         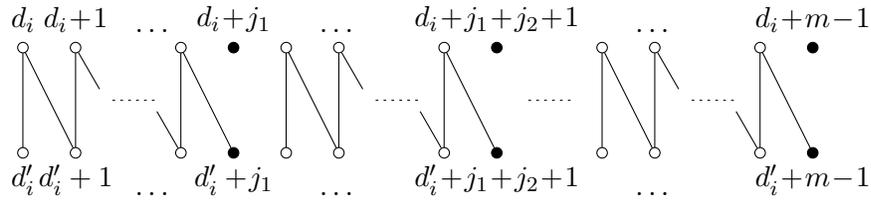
\begin{figure}[t]
        \setlength{\unitlength}{.7mm}
        \begin{center}
        \begin{picture}(170,40)
        \gasset{AHnb=0,Nw=2,Nh=2,ExtNL=y}

        \node[NLdist=2](a1)(10,30){$d_i$}
        \node[NLdist=2](a2)(20,30){$d_i\!+\!1$}
        \node[NLdist=2](a3)(40,30){$\dots\qquad$}
        \node[Nfill=y,NLdist=2](a4)(50,30){$d_i\!+\!j_1$}
        \imark[iangle=-60,ilength=8](a2)

        \node[NLdist=-9](b1)(10,10){$d'_i$}
        \node[NLdist=-9](b2)(20,10){$d'_i+1$}
        \node[NLdist=-9](b3)(40,10){$\dots\qquad$}
        \node[Nfill=y,NLdist=-9](b4)(50,10){$d'_i+\!j_1$}
        \imark[iangle=120,ilength=8](b3)

        \drawedge(a1,b1){}
        \drawedge(a1,b2){}
        \drawedge(a2,b2){}
        \drawedge(a3,b3){}
        \drawline[dash={0.5 1}0](26,20)(35,20)

        \node(2a1)(60,30){}
        \node[NLdist=2](2a2)(70,30){$\dots$}
        \node(2a3)(90,30){}
        \node[Nfill=y,NLdist=2](2a4)(100,30){$d_i\!+\!j_1\!+\!j_2\!+\!1$}
        \imark[iangle=-60,ilength=8](2a2)

        \node(2b1)(60,10){}
        \node[NLdist=-9](2b2)(70,10){$\dots$}
        \node(2b3)(90,10){}
        \node[Nfill=y,NLdist=-9](2b4)(100,10){$d'_i\!+\!j_1\!+\!j_2\!+\!1$}
        \imark[iangle=120,ilength=8](2b3)

        \drawedge(2a1,2b1){}
        \drawedge(2a1,2b2){}
        \drawedge(2a2,2b2){}
        \drawedge(2a3,2b3){}
        \drawline[dash={0.5 1}0](76,20)(85,20)

        \drawedge(b4,a3){}
        \drawedge(2a3,2b4){}

        \drawline[dash={0.5 1}0](105,20)(114,20)

        \node(3a1)(120,30){}
        \node[NLdist=2](3a2)(130,30){$\dots$}
        \node(3a3)(150,30){}
        \node[Nfill=y,NLdist=2](3a4)(160,30){$d_i\!+\!m\!-\!1$}
        \imark[iangle=-60,ilength=8](3a2)

        \node(3b1)(120,10){}
        \node[NLdist=-9](3b2)(130,10){$\dots$}
        \node(3b3)(150,10){}
        \node[Nfill=y,NLdist=-9](3b4)(160,10){$d'_i\!+\!m\!-\!1$}
        \imark[iangle=120,ilength=8](3b3)

        \drawedge(3a1,3b1){}
        \drawedge(3a1,3b2){}
        \drawedge(3a2,3b2){}
        \drawedge(3a3,3b3){}
        \drawedge(3a3,3b4){}

        \drawline[dash={0.5 1}0](136,20)(145,20)
        \end{picture}
        \end{center}
        \caption{The graph after assigning data values to the nodes}
        \label{GateNumber}
        \end{figure}

        By adding dummy nodes, we can assume that for every $i\in\{1,\dots,l-1\}$, the subgraph between level $i$ and $i+1$
        has the same number (say $h$) of cycles (this is only done for notational convenience, and we still suppose that there are $n$ gates in each level).
        Thus, after the above step we have $m = n+h$ many nodes in each level.
        Define $d_i = (i-1)\cdot 2m$ and $d'_i = d_i+m$.
        Next we are going to label the nodes from  Figure~\ref{PathPic}:  in level $i$ (respectively, $i+1$) with the numbers
        $d_i,d_i+1,\dots,d_i+m-1$  (respectively $d'_i,d'_i+1\dots, d'_i+m-1$)
        in this order,  see Figure~\ref{GateNumber}.
        Note that this labelling is the crucial point for encoding the wiring between the gates of two levels: the difference between two connected nodes in level $i$ and level $i+1$ is always $m$ or $m+1$.
        We will later exploit this fact and use the modality $\F_{[m,m+1]}$ (respectively, $\G_{[m,m+1]}$) to jump from a $\vee$-gate (respectively, $\wedge$-gate) in
        level $i$ to a successor gate in level $i+1$.
        We now obtain in logarithmic space the data word $w_i = w_{i,1} w_{i,2}$, where
         \begin{eqnarray*}
         w_{i,1} &=& \begin{cases}
         \displaystyle (a_{1,1},d_i)(a_{1,2},d_i+1)\cdots(a_{1,j_1},d_i+j_1-1) & \\
          \displaystyle (a_{2,1},d_i+j_1+1)(a_{2,2},d_i+j_1+2)\cdots(a_{2,j_2},d_i+j_1+j_2) \cdots & \\
          \displaystyle (a_{h,1},d_i+\sum_{t=1}^{h-1} j_t +h-1)(a_{h,2},d_i+\sum_{t=1}^{h-1} j_t +h)\cdots(a_{h,j_h},d_i+m-2) &
          \end{cases}
          \end{eqnarray*}
          \begin{eqnarray*}
         w_{i,2} &=& \begin{cases}
         \displaystyle (b_{1,1},d'_i)\cdots(b_{1,j_1},d'_i+j_1-1)(b_{1,1}',d'_i+j_1) & \\
          \displaystyle (b_{2,1},d'_i+j_1+1)\cdots(b_{2,j_2},d'_i+j_1+j_2)(b_{2,1}',d'_i+j_1+j_2+1) \cdots & \\
          \displaystyle (b_{h,1},d'_i+\sum_{t=1}^{h-1} j_t +h-1)\cdots(b_{h,j_h},d'_i+m-2)(b_{h,1}',d'_i+m-1)&
          \end{cases}
        \end{eqnarray*}
        which is the encoding of the wires between level $i$ and level $i+1$ from Figure~\ref{GateNumber}.
        Note that the new nodes $a_{1,1}',a_{2,1}',\dots,a_{h,1}'$ in level $i$
        of the graph in Figure~\ref{PathPic} do not occur in $w_{i,1}$.

        Suppose now that for all $i\in\{1,\dots,l-1\}$, the data words $w_i$ is constructed. 
        We combine $w_1, w_2, \dots, w_{l-1}$ to obtain the data word $w$ that encodes the complete circuit
        as follows. Suppose that
        \[
        w_{i,2} = (\tilde  b_1,y_1)\dots(\tilde b_{m},y_{m}) \text{ and } w_{i+1,1} = (b_1,z_1)\dots(b_n,z_n) .
        \]
        Note that every $\tilde b_s$ is either one of the $b_j$ or $b'_j$ (the copy of $b_j$).
        Let
        \[
        v_{i+1,1} = (\tilde  b_1,z'_1) \dots (\tilde b_{m},z'_{m}),
        \]
        where the data values $z'_s$ are determined as follows:
        if $\tilde b_s = b_j$ or $\tilde b_s = b'_j$, then $z'_s  = z_j$.
        Then, the data word $w$ is
        $w = w_{1,1} w_{1,2} v_{2,1} w_{2,2}  \dots v_{l-1,1} w_{l-1,2}$.

        Next, we explain the idea of how to construct the $\mtl$ formula. 
        Consider a gate $a_j$ of level $i$ for some $i\in\{2,\dots,l-1\}$, and assume that level $i$ consists of $\vee$-gates.
        Let $b_{j_1}$ and $b_{j_2}$ (from level $i+1$) be the two input gates for $a_j$.
        In the above data word $v_{i,1}$ there is a unique position
        where the proposition $a_j$ occurs, and possibly a position where the copy $a'_j$ occurs.
        If both positions exist, then they carry the same data value.
         Let us point to one of these positions. Using an $\mtl$ formula, we
        want to branch (existentially) to the positions in the factor $v_{i+1,1}$, where the propositions
        $b_{j_1}, b'_{j_1}, b_{j_2}, b'_{j_2}$ occur (where $b'_{j_1}$ and $b'_{j_2}$ possibly do not exist).
        For this, we use the modality $\F_{[m,m+1]}$. By construction,
        this modality branches existentially to  positions in the factor $w_{i,2}$, where
        the propositions   $b_{j_1}, b'_{j_1}, b_{j_2}, b'_{j_2}$ occur. Then, using the iterated next modality
         $\X^{m}$, we jump to the corresponding positions in $v_{i+1,1}$.

         In the above argument, we assumed that $i\in\{2,\dots, l-1\}$. If $i=1$, then we can argue similarly, if we
         assume that we are pointing to the unique $a_j$-labelled position of the prefix $w_{1,1}$ of $w$.
         Now consider level $l-1$.
        Suppose that
        $$w_{l-1,2} = (\tilde e_1,d_1)\dots (\tilde e_{m},d_{m}).$$
        Let $e_1, \ldots, e_n$ be the original gates of level $l$, which
        all belong to  $\{\tilde e_1, \ldots, \tilde e_{m}\}$, and let $x_i \in \{0,1\}$ be the input value for
        gate $e_i$.
        Define
        \begin{equation} \label{eq-set-I}
        	J=\{ j \mid j \in [1,m], \exists i \in [1,n] : \tilde e_j \in \{ e_i, e'_i \}, x_i = 1 \}.
        \end{equation}
Let the designated output gate be the $k$-th node in level $1$.
        We construct the $\pure\mtl(\F,\X)$ formula $\psi=\X^{k-1} \varphi_{1}$, where, for every $i\in\{1,\dots,l-1\}$, the formula $\varphi_i$  is  inductively defined as follows:
        \[
        	\varphi_{i}\defeq\begin{cases}
        \F_{[m,m+1]} \X^{m} \varphi_{i+1} &\text{if $i< l-1$ and level $i$ is a $\vee$-level,}\\[1mm]
        \G_{[m,m+1]} \X^{m} \varphi_{i+1} &\text{if $i< l-1$ and level $i$ is a $\wedge$-level,}\\[1mm]
        \F_{[m,m+1]}(\bigvee_{j \in J} \X^{m-j} \neg \X \,\true) & \textrm{if } i= l-1 \text{ and level $i$ is a $\vee$-level,}\\[1mm]
        \G_{[m,m+1]}(\bigvee_{j \in J} \X^{m-j} \neg \X \,\true) & \textrm{if } i= l-1 \text{ and level $i$ is a $\wedge$-level.}
        \end{cases}
        \]
        The formula $\neg\X \,\true$ is only true in the last position of a data word. Suppose data word $w$ is the encoding of the circuit.
        From the above consideration, it follows that  $w \models_\mtl \psi$ if, and only if,  the circuit $\alpha$ evaluates to $1$.
        Note that we only use the unary modalities $\F,\G,\X$ and do not use any propositions in $\psi$. We can thus ignore the propositional part in the data word $w$ to get a pure data word. Since the number $m$ is bounded by $2n$, and all data values in $w$ are bounded by $4nl$, where $n$ is the number of gates in each level and $l$ is the number of levels, we can compute the formula $\psi$ and data word $w$ where the interval borders and data values are encoded in unary notation in logarithmic space.
        \end{proof}

        \begin{example} \label{ex-circuit}
        Let $\alpha$ be the SAM2-circuit from Figure~\ref{fig-circuit}. It has 3 levels and 5 gates in each level. Level 1 contains $\wedge$-gates and level 2 contains $\vee$-gates. There are 2 cycles in the subgraph between level 1 and  2, and also 2 cycles in the subgraph between level 2 and 3.
        The encoding for level 1 and level 2 is
        \begin{equation}\label{Enc12}
        \begin{split}
         (a_1,0)(a_2,1)(a_3,3)(a_4,4)(a_5,5)\\
         (b_1,7)(b_2,8)(b_1',9)(b_3,10)(b_4,11)(b_5,12)(b_3',13) &,
        \end{split}
        \end{equation}
        which can be obtained from Figure~\ref{Num12}. The new nodes $a_1'$ and $a_3'$ in level 1 are not used for the final encoding in the data word.
        \begin{figure}[t]
         \setlength{\unitlength}{.5mm}
         \begin{center}
        \begin{picture}(150,55)(10,10)
        \gasset{Nw=10,Nh=10}
        \node(a1)(10,50){$a_1$}
        \node(a2)(35,50){$a_2$}
        \node[Nfill=y,fillgray=0.7](a)(60,50){$a_1'$}
        \node(a3)(85,50){$a_3$}
        \node(a4)(110,50){$a_4$}
        \node(a5)(135,50){$a_5$}
        \node[Nfill=y,fillgray=0.7](a6)(160,50){$a_3'$}

        \node[Nframe=n](n1)(10,60){0}
        \node[Nframe=n](n2)(35,60){1}
        \node[Nframe=n](n3)(60,60){2}
        \node[Nframe=n](n4)(85,60){3}
        \node[Nframe=n](n5)(110,60){4}
        \node[Nframe=n](n6)(135,60){5}
        \node[Nframe=n](n7)(160,60){6}

        \node[Nframe=n](n7)(10,10){7}
        \node[Nframe=n](n8)(35,10){8}
        \node[Nframe=n](n9)(60,10){9}
        \node[Nframe=n](n10)(85,10){10}
        \node[Nframe=n](n11)(110,10){11}
        \node[Nframe=n](n12)(135,10){12}
        \node[Nframe=n](n13)(160,10){13}

        \node(b1)(10,20){$b_1$}
        \node(b2)(35,20){$b_2$}
        \node[Nfill=y,fillgray=0.7](b11)(60,20){$b_1'$}
        \node(b3)(85,20){$b_3$}
        \node(b4)(110,20){$b_4$}
        \node(b5)(135,20){$b_5$}
        \node[Nfill=y,fillgray=0.7](b33)(160,20){$b_3'$}

        \drawedge[AHnb=0](b1,a1){}
        \drawedge[AHnb=0](b2,a2){}
        \drawedge[AHnb=0](b3,a3){}
        \drawedge[AHnb=0](b4,a4){}
        \drawedge[AHnb=0](b5,a5){}

        \drawedge[AHnb=0,sxo=6,exo=-6](b2,a1){}
        \drawedge[AHnb=0,sxo=6,exo=-6](b11,a2){}
        \drawedge[AHnb=0,sxo=6,exo=-6](b4,a3){}
        \drawedge[AHnb=0,sxo=6,exo=-6](b5,a4){}
        \drawedge[AHnb=0,sxo=6,exo=-6](b33,a5){}
        \end{picture}
        \end{center}
        \caption{The labelling for level 1 and  2}
        \label{Num12}
        \end{figure}
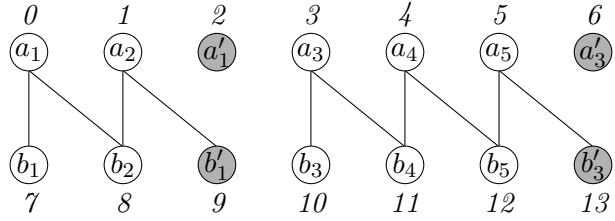
        The encoding for level 2 and 3 is
        \begin{equation}\label{Enc23}
        \begin{split}
        (b_1,14)(b_5,15)(b_3,16)(b_2,18)(b_4,19)\\
        (c_1,21)(c_3,22)(c_5,23)(c_1',24)(c_2,25)(c_4,26)(c_2',27)& ,
        \end{split}
        \end{equation}
        which can be obtained from Figure~\ref{Num23}. We skip the new nodes $b_1'$ and $b_2'$ in level 2 in this encoding.

        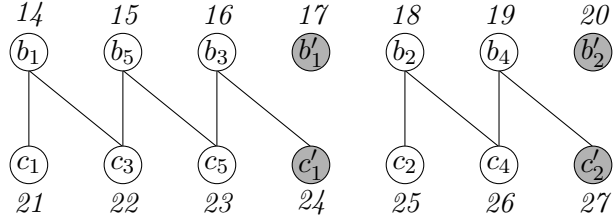
\begin{figure}[t]
         \setlength{\unitlength}{.5mm}
         \begin{center}
        \begin{picture}(150,55)(10,10)
        \gasset{Nw=10,Nh=10}
        \node(b1)(10,50){$b_1$}
        \node(b5)(35,50){$b_5$}
        \node(b3)(60,50){$b_3$}
        \node[Nfill=y,fillgray=0.7](b11)(85,50){$b_1'$}
        \node(b2)(110,50){$b_2$}
        \node(b4)(135,50){$b_4$}
        \node[Nfill=y,fillgray=0.7](b22)(160,50){$b_2'$}

        \node[Nframe=n](n1)(10,60){14}
        \node[Nframe=n](n2)(35,60){15}
        \node[Nframe=n](n3)(60,60){16}
        \node[Nframe=n](n4)(85,60){17}
        \node[Nframe=n](n5)(110,60){18}
        \node[Nframe=n](n6)(135,60){19}
        \node[Nframe=n](n7)(160,60){20}

        \node[Nframe=n](n7)(10,10){21}
        \node[Nframe=n](n8)(35,10){22}
        \node[Nframe=n](n9)(60,10){23}
        \node[Nframe=n](n10)(85,10){24}
        \node[Nframe=n](n11)(110,10){25}
        \node[Nframe=n](n12)(135,10){26}
        \node[Nframe=n](n13)(160,10){27}

        \node(c1)(10,20){$c_1$}
        \node(c3)(35,20){$c_3$}
        \node(c5)(60,20){$c_5$}
        \node[Nfill=y,fillgray=0.7](c55)(85,20){$c_1'$}
        \node(c2)(110,20){$c_2$}
        \node(c4)(135,20){$c_4$}
        \node[Nfill=y,fillgray=0.7](c44)(160,20){$c_2'$}

        \drawedge[AHnb=0](b1,c1){}
        \drawedge[AHnb=0](b2,c2){}
        \drawedge[AHnb=0](b5,c3){}
        \drawedge[AHnb=0](b4,c4){}
        \drawedge[AHnb=0](c5,b3){}

        \drawedge[AHnb=0,sxo=6,exo=-6](c5,b5){}
        \drawedge[AHnb=0,sxo=6,exo=-6](c3,b1){}
        \drawedge[AHnb=0,sxo=6,exo=-6](c55,b3){}
        \drawedge[AHnb=0,sxo=6,exo=-6](c4,b2){}
        \drawedge[AHnb=0,sxo=6,exo=-6](c44,b4){}
        \end{picture}
        \end{center}
        \caption{The labelling for level 2 and 3}
        \label{Num23}
        \end{figure}
        We combine (\ref{Enc12}) and (\ref{Enc23}) to obtain the following data word (\ref{Enc123}) which is the encoding of the circuit $\alpha$. The encoding for level 1 and 2 determines the order of the propositional part of the third line in (\ref{Enc123}), and the encoding for level 2 and 3 determines its data values.
        \begin{equation}\label{Enc123}
            \begin{split}
        (a_1,0)(a_2,1)(a_3,3)(a_4,4)(a_5,5)\\
        (b_1,7)(b_2,8),(b_1',9)(b_3,10)(b_4,11)(b_5,12)(b_3',13)\\
        (b_1,14)(b_2,18)(b_1',14)(b_3,16)(b_4,19)(b_5,15)(b_3',16)\\
        (c_1,21)(c_3,22)(c_5,23)(c_1',24)(c_2,25)(c_4,26)(c_2',27)& .
          \end{split}
        \end{equation}
        Let the designated output gate be $a_3$ in level 1, and assume that the input
        gates $c_1, c_4, c_5$ (respectively, $c_2, c_3$)
        receive the value $0$ (respectively, $1$). Then the set $J$ from \eqref{eq-set-I} is
        $J=\{2,5,7\}$ and the formula $\psi$ is
         \[\psi=\X^2\big(\G_{[7,8]}\X^7\big(F_{[7,8]}\big(\bigvee_{j\in \{2,5,7\}}\X^{7-j} \neg \X\, \true\big)\big)\big).\]
        \end{example}
         \begin{theorem} \label{thm-mtl-P-infinite}
         	 Path checking for $\mtl(\F,\X)_{\textup{u}}$ over infinite unary encoded data words is $\ptime$-hard.
        \end{theorem}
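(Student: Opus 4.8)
The plan is to reduce SAM2CVP to the infinite problem by reusing, essentially verbatim, the data word $w$ and the formula $\psi$ constructed in the proof of Theorem~\ref{thm-mtl-P}, turning $w$ into an infinite data word by appending a suitable tail and replacing the end-of-word detection by a fresh atomic proposition. Recall that the finite reduction produces a pure data word $w = w_{1,1} w_{1,2} v_{2,1} \cdots v_{l-1,1} w_{l-1,2}$ and a formula $\psi = \X^{k-1}\varphi_1$ whose only quantitative modalities are $\F_{[m,m+1]}$ and $\G_{[m,m+1]}$, used to step from level $i$ to level $i+1$, where $m \le 2n$ and all data values of $w$ are bounded by $4nl$. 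The only place where $\psi$ relies on $w$ being finite is the bottom-level subformula $\bigvee_{j\in J}\X^{m-j}\neg\X\,\true$, which uses $\neg\X\,\true$ to mark the last position of $w$ and thereby identify the positions $\tilde e_j$ ($j\in J$) of the true input gates without using any proposition.

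Since we now work over (non-pure) data words, I would mark those true-input positions explicitly. Fix a fresh proposition $p$ and let $w'$ be obtained from $w$ by adding $p$ to every position carrying $\tilde e_j$ with $j\in J$; thus $p$ holds exactly at the true-input positions of the last block $w_{l-1,2}$. Set $D = 4nl + m + 2$ and take as input the infinite data word $u_1(u_2)^{\omega}_{+\kappa}$ with $u_1 = w'$, $u_2 = (\emptyset,D)$ and offset $\kappa = 0$, so that $w'$ is followed by the constant tail $(\emptyset,D)(\emptyset,D)\cdots$. Let $\psi'$ be $\psi$ with the two bottom cases of $\varphi_{l-1}$ replaced by $\F_{[m,m+1]}\, p$ (if level $l-1$ is a $\vee$-level) and $\G_{[m,m+1]}\, p$ (if it is a $\wedge$-level), leaving all higher levels unchanged.

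The key point is that the appended tail is invisible to the navigation. For any position $i$ of $w'$ we have $d_i \le 4nl$, while every tail position has data value $D \ge 4nl+m+2$; hence the data difference to any tail position is at least $m+2 > m+1$ and lies outside $[m,m+1]$. Consequently no $\F_{[m,m+1]}$ and no $\G_{[m,m+1]}$ evaluated inside $w'$ ever quantifies over a tail position, so all level-to-level jumps behave exactly as in the finite word; moreover the $\X$-operators of $\psi'$ are applied only while the navigation is strictly inside $w'$, never at its last position. At the bottom level, $\F_{[m,m+1]}\, p$ (resp.\ $\G_{[m,m+1]}\, p$) holds iff some (resp.\ every) in-range input position carries $p$, i.e.\ is a true input, which is exactly the semantics that $\bigvee_{j\in J}\X^{m-j}\neg\X\,\true$ had over the finite word. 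Invoking the correctness argument of Theorem~\ref{thm-mtl-P}, we conclude that $u_1(u_2)^{\omega}_{+\kappa} \mtlmodels \psi'$ if, and only if, the circuit $\alpha$ evaluates to $1$. All numbers ($D$, the interval borders $m,m+1$, and the data values) are polynomially bounded in $n,l$, so their unary encodings, together with $w'$ and $\psi'$, can be produced in logarithmic space.

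The main obstacle is the correctness of the bottom-level replacement together with the non-interference of the tail: one must check carefully that the constant $D$ is large enough that $\F_{[m,m+1]}$ and $\G_{[m,m+1]}$ never reach the tail (so that, in particular, the universally quantified $\G_{[m,m+1]}\, p$ ranges over precisely the two input positions and their copies, exactly as in the finite construction), and that $p$ marks the positions $\tilde e_j$, $j\in J$, so that it faithfully simulates the end-of-word marker. Once this is verified, the remaining correctness is inherited verbatim from Theorem~\ref{thm-mtl-P}.
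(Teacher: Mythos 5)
Your proposal is correct and takes essentially the same approach as the paper: the paper likewise turns $w$ into an infinite word by appending a constant tail $(p,5ml)^\omega_{+0}$ whose data value is too large for $\F_{[m,m+1]}$/$\G_{[m,m+1]}$ to reach, and replaces the end-of-word test $\neg\X\,\true$ by a test involving a fresh proposition $p$. The only immaterial difference is where $p$ lives: the paper puts $p$ on the tail and keeps the indexed disjunction, using $\bigvee_{j\in J}\X^{m-j}(\neg p\wedge\X p)$ to detect the former last position (so the body of the word stays pure), whereas you mark the true-input positions of $w_{l-1,2}$ with $p$ and simplify the bottom level to $\F_{[m,m+1]}\,p$ resp.\ $\G_{[m,m+1]}\,p$ --- both variants are fine since the theorem does not require pure data words.
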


        \begin{proof}
        	We use an adaptation of the proof for Theorem~\ref{thm-mtl-P}. Let $p$ be an atomic proposition that is not used in the data word $w$ defined in the proof of Theorem~\ref{thm-mtl-P}. Define the infinite data word $w'=w\,(p,5ml)^\omega_{+0}$, and redefine for every $i\in\{1,\dots,l-1\}$ the formula $\varphi_i$ by:
        \[
        	\varphi_{i}\defeq \begin{cases}
        \F_{[m,m+1]} \X^{m} \varphi_{i+1} &\text{if $i< l-1$ and level $i$ is a $\vee$-level,}\\[1mm]
        \G_{[m,m+1]} \X^{m} \varphi_{i+1} &\text{if $i< l-1$ and level $i$ is a $\wedge$-level,}\\[1mm]
        \F_{[m,m+1]}(\bigvee_{j \in I} \X^{m-j} (\neg p\wedge \X p)) & \textrm{if } i= l-1 \text{ and level $i$ is a $\vee$-level,}\\[1mm]
        \G_{[m,m+1]}(\bigvee_{j \in I} \X^{m-j} (\neg p\wedge \X p)) & \textrm{if } i= l-1 \text{ and level $i$ is a $\wedge$-level.}
        \end{cases}
        \]
        Then $w' \models_\mtl \psi$ if, and only if, the circuit $\alpha$ evaluates to $1$.
        \end{proof}
        Note that the construction in the proof of Theorem~\ref{thm-mtl-P} uses  data words that are not monotonic. This is indeed unavoidable:  
        path checking for $\mtl$ over finite monotonic data words is in  $\textup{AC}^1(\textup{LogDCFL})$~\cite{DBLP:conf/icalp/BundalaO14}. 
        In the following, we prove that in contrast to $\mtl$, the path-checking problem for $\tptl^1_{\textup{u}}$ over finite monotonic data words is $\ptime$-hard, even for the pure fragment that only uses only the unary modalities $\F$ and $\X$. 
         \begin{theorem}\label{corPhard}
         	 Path checking for $\pure\tptl^1(\F,\X)_{\textup{u}}$ over finite unary encoded strictly monotonic pure data words is $\ptime$-hard.
        \end{theorem}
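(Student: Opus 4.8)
The plan is to reduce from the $\ptime$-complete circuit value problem for SAM2-circuits, exactly as in the proof of Theorem~\ref{thm-mtl-P}, but now arranging the data word to be \emph{strictly monotonic} and paying for the lost non-monotonicity with a single freeze register. I would keep the whole combinatorial skeleton of that proof: compute in $\ls$ the decomposition of the bipartite graph between each pair of consecutive levels into cycles, insert the dummy nodes so that the wiring becomes the canonical pattern, and build the formula level by level as a family $\varphi_1,\dots,\varphi_{l-1}$, where $\varphi_i$ is evaluated at a level-$i$ gate and branches to the two inputs of that gate in level $i+1$ --- existentially at a $\vee$-level and universally (using $\G=\neg\F\neg$) at a $\wedge$-level --- until $\varphi_{l-1}$ reads the input bits. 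As in the pure finite case, the truth of an input bit is detected by the end-of-word marker $\neg\X\,\true$, so no atomic propositions are needed and the word can be taken pure.

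The one place where I would deviate from Theorem~\ref{thm-mtl-P} is the step that actually moves from a gate to its two inputs. There this is done by the absolute-distance modality $\F_{[m,m+1]}$ followed by the positional shift $\X^{m}$, and it is precisely the shifted, re-valued copy of each level (the block $v_{i+1,1}$, which carries the data values of the next block in the positional order of the previous one) that forces the data values to oscillate and hence to be non-monotonic. Instead I would \emph{freeze} the register at the current gate, $x.(\dots)$, walk into the region of the word that encodes level $i+1$ using $\F$ and $\X$, and only there test the accumulated data difference against a constant by a constraint $x\sim c$. Because the reference point stays pinned at the gate while we navigate, a single strictly increasing assignment of data values to the gates can be made to realise, at every level, the relative distances that the constraints $x\sim c$ are looking for, so that the same physical copy of a level can serve both its role as the inputs of the level above and its role as the gates feeding the level below. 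The data values, the offsets $c$, and the resulting formula are all meant to be polynomially bounded, so that the reduction stays in $\ls$ and everything is unary encoded.

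The main obstacle, and the real content of the proof, is reconciling uniformity with strict monotonicity: the navigation from a gate to its inputs must be given by \emph{one} polynomial-size subformula that works simultaneously for all gates of a level, yet a single strictly increasing layout assigns to each level two incompatible orders (one for its incoming, one for its outgoing wires). This is exactly the difficulty that made the $\mtl$ word non-monotonic, and it cannot be avoided inside $\mtl$, in accordance with the fact that $\mtl$ path checking over monotonic words already lies in $\mathsf{AC}^1(\mathsf{LogDCFL})$; the freeze quantifier, by letting the formula remember the gate's data value across the intervening $\F$/$\X$ steps, is what supplies the missing mechanism. Concretely I expect the two technical points to be (i) choosing the data values so that, from any gate, the positions meeting the relative constraint $x\sim c$ are exactly its two inputs and nothing else (no spurious matches in other levels or inside the same level), for which the cycle layout and a sufficiently separated, strictly increasing value scheme have to be designed together, and (ii) an induction over the levels, mirroring the inductive definition of $\varphi_i$, showing that $w\models_\tptl\psi$ holds if and only if the designated output gate evaluates to $1$.
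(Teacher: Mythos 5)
You have correctly located the crux---over a strictly monotonic word, one per-level subformula must send \emph{every} gate to exactly its own two inputs, and a freeze quantifier with relative constraints is the right mechanism for this---but your proposal stops exactly where the actual proof has to begin, and the concrete plan you sketch would fail as stated. Keeping the cycle/dummy-node skeleton of Theorem~\ref{thm-mtl-P} means keeping \emph{uniform} gate-to-input offsets (the $m$, $m+1$ of that proof), and such uniformity is unachievable monotonically for precisely the reason you concede: a middle level inherits two incompatible orders, one from its incoming and one from its outgoing wiring. Freezing $x$ at the gate does not dissolve this obstruction, because the constants in the single, level-wide subformula are still gate-independent; hence ``the positions meeting the relative constraint $x\sim c$ are exactly the two inputs of the current gate'' cannot hold for all gates of a level simultaneously with one (or two) fixed constants. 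The missing idea is to give every \emph{wire} its own difference and to make differences injective. The paper does this in Theorem~\ref{theorem-P-lower-bound} with the block $w_n=(1,\dots,n,\;(n+1),2(n+1),\dots,n(n+1))$: if $j_1-i_1=j_2-i_2$ with $i_1,i_2$ from the first half and $j_1,j_2$ from the second half, then $i_1=i_2$ and $j_1=j_2$ (base-$(n+1)$ expansion), so the single set $S_j=\{i_2(n+1)-i_1\mid \text{wire }(i_1,i_2)\}$ addresses, from each gate, precisely that gate's inputs; consecutive blocks are shifted by $n(n+2)$ so that cross-layer differences exceed the maximal intra-block difference, and $\X^n$ re-synchronises into the next block. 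Notably, once wires carry distinct differences, the cycle decomposition and dummy nodes become unnecessary and are in fact absent from that construction.

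A second, smaller inaccuracy compounds the first: you speak of testing ``a constraint $x\sim c$'', but one needs a \emph{disjunction} $\bigvee_{s\in S_j} x=s$ over polynomially many wire-specific equality constraints (harmless for $\tptl^1_u$, since all values stay polynomially bounded and unary-encoded). This is exactly why the paper factors the argument: it first proves $\ptime$-hardness for $\pure\rmtl(\F,\X)_u$, where the until modality carries a finite union of intervals, and then applies Proposition~\ref{fact2}, the logspace polynomial-size translation of $\rmtl$ into $\tptl^1$, which produces literally the formula shape you envisage, $x.\F\bigl((\bigvee_{s\in S_j}x=s)\wedge\cdots\bigr)$. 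So your intended mechanism coincides with the paper's endpoint, but your item (i)---the joint design of a strictly increasing layout and of the constraint sets so that no spurious matches arise---is the entire content of the theorem, and leaving it as ``to be designed together'', anchored moreover to a uniform-offset skeleton that cannot support it, is a genuine gap rather than a proof.
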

        Before we prove Theorem~\ref{corPhard}, we prove  $\ptime$-hardness of path checking for some extension of $\mtl$.

        In $\mtl$, the $\U$ modality is annotated by some interval $I$. 
        If, instead, we allow the $\U$ modality to be annotated by a finite union of intervals $I_{1}\cup I_2 \cup \dots\cup I_{n}$, then we call this logic \emph{succinct} $\mtl$, $\rmtl$ for short. 
        Formally, the syntax and semantics of $\rmtl$ is the same as for $\mtl$, except that the set $I$ in $\U_I$ can
be a finite union $I =  I_{1}\cup I_2 \cup \dots\cup I_{n}$ of intervals $I_i \subseteq \mathbb{Z}$.
The corresponding fragments of $\rmtl$ are defined in the expected way. 

        Let $I =  \bigcup_{i=1}^n  I_i$. It is easily seen that ($\equiv$ denotes logical equivalence)
        \begin{gather*}
                \varphi_1\U_I\varphi_2 \equiv   \bigvee_{i=1}^n \varphi_1 \U_{I_i}\varphi_2,  \text{ and }
        \varphi_1\U_{I} \varphi_2 \equiv  x.\varphi_1 \U \big(\big(\bigvee_{i=1}^n x\in I_i\big) \wedge \varphi_2\big).
        \end{gather*}
The following two propositions are easy to prove.

\begin{proposition}
        Each $\rmtl$ formula is equivalent to an $\mtl$ formula which can be exponentially larger.
        \end{proposition}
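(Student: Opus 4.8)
The plan is to define a translation $T$ from $\rmtl$ to $\mtl$ by structural recursion and then verify separately its correctness and its size bound. On atomic formulae $T$ acts as the identity, and $T$ commutes with $\neg$ and $\wedge$; the only interesting case is the constrained until. Writing a union of intervals as $I=\bigcup_{i=1}^{n} I_i$ with each $I_i\subseteq\Z$ a single interval, I set
$$T(\varphi_1\U_I\varphi_2)=\bigvee_{i=1}^{n}\bigl(T(\varphi_1)\,\U_{I_i}\,T(\varphi_2)\bigr).$$
By construction $T(\psi)$ is a genuine $\mtl$-formula, since every remaining until carries a single interval.

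First I would prove $\psi\equiv T(\psi)$ by induction on the structure of $\psi$. The atomic and Boolean cases are immediate from the induction hypotheses, since $\neg$ and $\wedge$ are congruences for logical equivalence. The until case is exactly the equivalence $\varphi_1\U_I\varphi_2\equiv\bigvee_{i=1}^{n}\varphi_1\U_{I_i}\varphi_2$ already stated in the excerpt, combined with the induction hypotheses $\varphi_1\equiv T(\varphi_1)$ and $\varphi_2\equiv T(\varphi_2)$ and the fact that each $\U_{I_i}$ and each Boolean operator respects equivalence. This part is essentially forced by the equivalence we may assume.

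Next I would bound the size of $T(\psi)$, which is where the word \emph{exponentially} enters. The recursion for the until case duplicates \emph{both} operands $n$ times, so a single leaf of the syntax tree of $\psi$ is copied at most $\prod_{v} n_v$ times in $T(\psi)$, where the product ranges over the until-nodes $v$ on the path from that leaf to the root and $n_v$ is the number of intervals annotating $v$. Since all the interval specifications occur literally inside $\psi$, we have $\sum_{v} n_v\leq|\psi|$, and a product of integers with bounded sum is maximized when the factors are small and numerous, giving $\prod_{v} n_v\leq 2^{O(|\psi|)}$. Summing over the at most $|\psi|$ leaves (and accounting for the duplicated interval annotations and operators, which are bounded the same way) yields $|T(\psi)|\leq 2^{O(|\psi|)}$.

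I expect the only real obstacle to be the bookkeeping in this size estimate: one must make the ``duplication along a root-to-leaf path'' argument precise and confirm that the exponent stays \emph{linear} in $|\psi|$, so that the bound is a single exponential rather than a tower. This follows because the blow-up contributed at each nested until is a single multiplicative factor, and these factors multiply along a path whose total interval weight is at most $|\psi|$; the correctness claim, by contrast, is immediate from the provided equivalence.
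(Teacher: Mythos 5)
Your proposal is correct and follows exactly the route the paper intends: the paper states this proposition without proof as an easy consequence of the displayed equivalence $\varphi_1\U_I\varphi_2 \equiv \bigvee_{i=1}^n \varphi_1\U_{I_i}\varphi_2$, which is precisely the recursive translation you define. Your additional bookkeeping for the size bound (multiplicative duplication factors along root-to-leaf paths, with $\prod_v n_v \le 2^{O(|\psi|)}$ since $\sum_v n_v \le |\psi|$) is sound and merely makes explicit what the paper leaves implicit.
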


        \begin{proposition}\label{fact2}
        	For a given  $\rmtl$ formula $\varphi$  one can compute in logspace an equivalent $\tptl^1$ formula of size polynomial in the size of $\varphi$.
        \end{proposition}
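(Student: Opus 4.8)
The plan is to define a syntax-directed translation $\varphi \mapsto \hat\varphi$ that sends each $\smtl$-formula to an equivalent $\tptl^1$-formula, reusing a \emph{single} register variable $x$ throughout, and then to verify that it is logspace-computable with only a linear increase in size. The translation is homomorphic on the Boolean connectives ($\widehat{\neg\varphi}=\neg\hat\varphi$ and $\widehat{\varphi_1\wedge\varphi_2}=\hat\varphi_1\wedge\hat\varphi_2$) and is the identity on $\true$ and atomic propositions. For the constrained until with $I=\bigcup_{i=1}^n I_i$ we follow the equivalence stated just above the proposition and put
\[
\widehat{\varphi_1\U_I\varphi_2} \;=\; x.\Bigl(\hat\varphi_1 \U \bigl((\textstyle\bigvee_{i=1}^n \chi_i)\wedge \hat\varphi_2\bigr)\Bigr),
\]
where each $\chi_i$ is the conjunction of at most two atomic constraints $x\sim c$ expressing membership $x\in I_i$ (for instance $\chi_i=(x\ge a)\wedge(x\le b)$ when $I_i=[a,b]$, dropping a conjunct for an unbounded endpoint and replacing $\le$ by $<$ for an open endpoint).

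The correctness argument, and the reason one register suffices, rests on the following induction hypothesis, proved by structural induction on $\varphi$: for every data word $w$, every position $i$, and \emph{every} register valuation $\nu$, we have $(w,i)\models_\smtl\varphi$ if, and only if, $(w,i,\nu)\tptlmodels\hat\varphi$. The Boolean and atomic cases are immediate. In the until case, $(w,i,\nu)\tptlmodels\widehat{\varphi_1\U_I\varphi_2}$ unfolds, by the semantics of the freeze quantifier, to $(w,i,\nu[x\mapsto d_i])\tptlmodels\hat\varphi_1\U((\bigvee_i\chi_i)\wedge\hat\varphi_2)$. At a candidate witness position $j>i$, the guard $\bigvee_i\chi_i$ is evaluated under $\nu[x\mapsto d_i]$ and therefore holds exactly when $d_j-d_i\in I$, matching the interval condition of $\U_I$; the conjunct $\hat\varphi_2$ and the intermediate obligations $\hat\varphi_1$ are then discharged by the induction hypothesis. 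The delicate point is that $\hat\varphi_1$ and $\hat\varphi_2$ may themselves begin with a freeze $x.$ that overwrites the outer binding of $x$; this causes no harm precisely because the induction hypothesis quantifies over \emph{all} valuations, so each nested freeze rebinds $x$ before any nested constraint is read, while the only place the outer binding $x\mapsto d_i$ is inspected is the guard $\bigvee_i\chi_i$, which is read at $j$ before $\hat\varphi_2$ performs any rebinding. This interference-freeness is exactly what keeps the whole translation inside $\tptl^1$, and I expect this bookkeeping to be the main obstacle; formulating the induction hypothesis for every $\nu$ is what makes it routine.

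For the size and complexity bounds, observe that the until clause replaces $\U_I$ by one freeze, one until, and the guard $\bigvee_{i=1}^n\chi_i$, whose size is linear in the interval specification of $I$ (each $\chi_i$ uses at most two constraints and copies the endpoints of $I_i$ verbatim). Hence $|\hat\varphi|$ is linear in $|\varphi|$, in particular polynomial, with the encoding (unary or binary) of the constants preserved. The map is purely local and syntax-directed, so $\hat\varphi$ can be produced by a single depth-first traversal of the parse tree of $\varphi$: the only state needed is a pointer into the input, which is logarithmic, together with the ability to copy the encoded constants digit by digit. Therefore the transformation is computable in logspace, which completes the plan.
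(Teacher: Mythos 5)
Your proposal is correct and follows essentially the same route as the paper, which proves the proposition simply by recursively applying the displayed equivalence $\varphi_1\U_{I}\varphi_2 \equiv x.\bigl(\varphi_1 \U ((\bigvee_{i=1}^n x\in I_i)\wedge\varphi_2)\bigr)$ with a single reused register $x$ and copying interval endpoints verbatim. Your structural induction with the hypothesis quantified over \emph{all} valuations is a sound elaboration of the interference-freeness that the paper leaves implicit when it declares the proposition easy to prove.
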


        In the following, we prove $\ptime$-hardness of path checking for $\rmtl_{\textup{u}}$ over finite unary encoded strictly monotonic pure data words. 
        Like the $\ptime$-hardness proof for $\mtl$ in Theorem~\ref{thm-mtl-P}, the proof is by reduction from SAM2CVP. 
        Unlike the data word in the proof of Theorem~\ref{thm-mtl-P}, we encode 
        the wires between two levels of an SAM2-circuit by a \emph{strictly monotonic} data word. 
        This is possible due to the succinct usage of unions of intervals annotating the finally and globally modalities in the $\rmtl$ formula $\psi$. 
        The reader might look at Example~\ref{ex-circuit_succinct} below, where the construction is carried out for the circuit shown in Figure~\ref{fig-circuit}. 
        \begin{theorem} \label{theorem-P-lower-bound}
        	Path checking for $\pure\rmtl(\F,\X)_{\textup{u}}$ over finite unary encoded strictly monotonic pure data words is $\ptime$-hard.
        \end{theorem}
        \begin{proof}
        We reduce from SAM2CVP. Let $\alpha$ be an SAM2-circuit with $l \geq 2$ levels
        and $n$ gates in each level.
        The idea will be to encode the wires between two consecutive layers by a suitably shifted
        version of the  data
        word
        $$
        w_n = \prod_{i=1}^n i  \cdot \prod_{i=1}^n i (n+1) = (1,2,\ldots, n, \; 1 \cdot (n+1), 2 \cdot (n+1), \ldots, n \cdot (n+1)).
        $$
        Note that for all $i_1, i_2 \in \{1, \ldots, n\}$
        and $j_1, j_2 \in \{ 1 \cdot (n+1), 2 \cdot (n+1), \ldots, n \cdot (n+1)\}$ we  have the following:
        if $j_1 - i_1 = j_2 - i_2$, then $i_1 = i_2$ and $j_1 = j_2$. This is best seen by viewing numbers
        in their base $(n+1)$ expansion.
        Let us denote with $\Delta = n (n+1) - 1$ the maximal
        difference between a number from
        $\{1, \ldots, n\}$ and a number from $\{ 1 \cdot (n+1), 2 \cdot (n+1), \ldots, n \cdot (n+1)\}$.

       We define the pure and strictly monotonic data word $w_{n,l}$ as
       \begin{equation}
       	       \label{eqnarray_path_tptl1_dw}
        w_{n,l} = \prod_{j=0}^{l-2}  (w_n)_{+ j \cdot n(n+2)}
        \end{equation}
        The offset number $j \cdot n(n+2)$ is chosen such that the difference between a number from
        $\{1, \ldots, n\}$ and a number from $\{1 + j\cdot n(n+2), \ldots, n+ j\cdot n(n+2)\}$ is larger than
        $\Delta$ for every $j \geq 1$.

        Note that the unary encoding of the data word $w_{n,l}$ can be computed in logspace from the circuit.
        For each $j\in\{1,\dots, l-1\}$, define
        $$
        S_j=\{ i_2 (n+1) - i_1 \mid  \textrm{ the }i_{1}{\text{-th}} \textrm{ gate in level }
        j \textrm{ connects to}
         \textrm{ the } i_{2}{\text{-th}} \textrm{ gate in level }j+1\} .
        $$
        Suppose $o_k$, for some $k\in\{1,\dots, n\}$, is the designated output gate.  Let $I$ be the set of all $i \in [1,n]$ such that the $i$-th gate in layer $l$ is set to the
        Boolean value $1$.
        We construct the $\rmtl$ formula $\psi=\X^{k-1} \varphi_{1}$, where $\varphi_j$, for all $j\in\{1,\dots,l-1\}$, is defined inductively as follows:
        $$
        \varphi_{j}=\begin{cases}
        \F_{S_{j}} \X^n \varphi_{j+1} &\textrm{if } j< l-1 \textrm{ and level $j$ is a $\vee$-level,}\\
        \G_{S_{j}} \X^n \varphi_{j+1} &\textrm{if } j< l-1 \textrm{ and level $j$ is a $\wedge$-level,}\\
        \F_{S_{j}}(\bigvee_{i \in I} \X^{n-i} \neg \X \,\true) & \textrm{if } j= l-1 \textrm{ and level $j$ is a $\vee$-level,}\\
        \G_{S_{j}}(\bigvee_{i \in I} \X^{n-i} \neg \X \,\true) & \textrm{if } j= l-1 \textrm{ and level $j$ is a $\wedge$-level.}
        \end{cases}
        $$
        The purpose of the prefix $\X^n$ in front of $\varphi_{j+1}$ is to move from a certain position within
        the second half of the $j$-th copy of $w_n$ to the corresponding position within the first half
        of the $(j+1)$-th copy of $w_n$ in $w_{n,l}$.

        It is straightforward to check that $w_{n,l} \models \psi$ if, and only if,   the circuit $\alpha$ evaluates to~$1$.
              \end{proof}
              
              \begin{example} \label{ex-circuit_succinct}
              	      We illustrate the proof of Theorem~\ref{theorem-P-lower-bound} with the SAM2 from Figure~\ref{fig-circuit}. 
	The formula in equation \eqref{eqnarray_path_tptl1_dw} yields
	$$w_{5,3} =
        (1,2,3,4,5, \; 6,12,18,24,30) (36,37,38,39,40 \; 41,47,53,59,65). 
        $$
        We compute 
        \[S_1 = \{ 5,11,4,10,15, 21, 20,26, 13, 25\} \mbox{ and } S_2 = \{ 5,17,10,22,3,27,8,20,13,25\}.\] 
        Note that the values in $S_1$ correspond to all distances between gates in level 1 which are wired to gates in level 2. 
        Assuming that $a_3$ is the designated output gate and that $c_1,c_4,d_5$ ($c_2, c_3$, respectively) receive the input value $0$ ($1$, respectively), we have $I=\{2,3\}$ and we obtain the formula
       $$
       \psi = \X^2 \G_{S_1} \X^5\ \F_{S_2} (\X^3\neg\X\true \vee \X^2\neg\X\true).$$
                 \end{example}

        \noindent
        \emph{Proof of Theorem~\ref{corPhard}.} The theorem is a direct consequence of 
        	Theorem~\ref{theorem-P-lower-bound} and Proposition~\ref{fact2}.  \qed
        	
        	\medskip
        	\noindent
        	Similarly to the proof of Theorem~\ref{thm-mtl-P-infinite}, we can adapt the proof of Theorem~\ref{theorem-P-lower-bound}
        and extend the results to infinite data words.
        \begin{theorem}
        	Path checking for $\rmtl(\F,\X)_{\textup{u}}$ and $\tptl(\F,\X)^1_{\textup{u}}$ over infinite unary encoded data words are $\ptime$-hard.
        \end{theorem}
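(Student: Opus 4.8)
The plan is to mirror the passage from Theorem~\ref{thm-mtl-P} to Theorem~\ref{thm-mtl-P-infinite}: I keep the reduction from SAM2CVP of Theorem~\ref{theorem-P-lower-bound} essentially unchanged and only replace the ``last position'' detector $\neg\X\,\true$ by a detector for the border between the finite circuit encoding and an appended infinite tail. Concretely, let $w_{n,l}$, the interval sets $S_j$, and the number $\Delta = n(n+1)-1$ be as in the proof of Theorem~\ref{theorem-P-lower-bound}, and recall that every $S_j$ satisfies $S_j \subseteq \{1,\dots,\Delta\}$. I would pick a fresh atomic proposition $p$ and a data value $D$ strictly larger than $\max(w_{n,l}) + \Delta$, and set
$$
w' = w_{n,l}\,(p,D)^\omega_{+0},
$$
and obtain the formula from the one in Theorem~\ref{theorem-P-lower-bound} by replacing, in the two cases for $j=l-1$, every occurrence of $\neg\X\,\true$ by $\neg p \wedge \X p$.

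The first thing I would check is that the infinite tail never interferes with the interval jumps. Since the tail uses offset $0$, all its data values equal $D$, so the difference between any position of $w_{n,l}$ and any tail position is at least $D - \max(w_{n,l}) > \Delta \geq \max S_j$, while the difference between two tail positions is $0$; in either case it lies in no $S_j$. Hence, for every $j$, a jump $\F_{S_j}$ or $\G_{S_j}$ taken inside the circuit part behaves exactly as over the finite word $w_{n,l}$: it reaches precisely the intended successor-gate positions and never enters the tail, so in particular the universal modality $\G_{S_j}$ incurs no spurious obligations on tail positions.

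Next I would verify that $\neg p \wedge \X p$ correctly replaces $\neg\X\,\true$. By construction $p$ occurs exactly at the tail positions, so $\neg p \wedge \X p$ is true precisely at the last position of the finite prefix $w_{n,l}$, which is exactly where $\neg\X\,\true$ held in the finite setting; for a position landed in by $\X^{n-i}$ that has been pushed into the tail the conjunct $\neg p$ fails, and for one still inside $w_{n,l}$ the conjunct $\X p$ fails unless it is the border. Thus $\X^{n-i}(\neg p \wedge \X p)$ holds at a second-half position of the last copy of $w_n$ iff that position encodes the $i$-th input gate of level $l$, just as before, and tracing the descent through the levels as in Theorem~\ref{theorem-P-lower-bound} yields $w' \models \psi$ if, and only if, the circuit $\alpha$ evaluates to $1$. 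Since $w'$ and $\psi$ are computable from $\alpha$ in logarithmic space and all numbers stay polynomially bounded (hence unary encodable), this establishes $\ptime$-hardness for $\rmtl(\F,\X)_u$ over infinite unary encoded data words.

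Finally, the claim for $\tptl(\F,\X)^1_u$ would follow by applying Proposition~\ref{fact2} to $\psi$, which yields in logspace an equivalent $\tptl^1$-formula of polynomial size. Inspecting the translation $\F_I\chi \equiv x.\F((\bigvee_i x\in I_i)\wedge\chi)$ (and its dual for the derived $\G_I$), one sees it introduces only the freeze quantifier, a single register variable, and the modalities $\F$ and $\X$ already present, so the result lies in $\tptl(\F,\X)^1$ with polynomially bounded, hence unary-encodable, constants. I expect the main obstacle to be the interaction of the infinite tail with the universal modality $\G_{S_j}$; the choice $D > \max(w_{n,l}) + \Delta$ is precisely what neutralises it, and confirming that no $S_j$-difference can reach the tail is the one place where the argument differs substantively from the finite case.
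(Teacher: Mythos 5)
Your proposal is correct and matches the paper's intended argument: the paper proves this theorem with a one-sentence sketch referring to the adaptation used for Theorem~\ref{thm-mtl-P-infinite}, namely appending a constant-valued tail $(p,D)^\omega_{+0}$ with $D$ too large for any $S_j$-jump to reach it and replacing $\neg\X\,\true$ by $\neg p \wedge \X p$, which is exactly what you do, including the transfer to $\tptl(\F,\X)^1_u$ via Proposition~\ref{fact2}. Your explicit verification that $S_j \subseteq \{1,\dots,\Delta\}$ excludes both the tail-to-tail difference $0$ and the circuit-to-tail differences exceeding $\Delta$ supplies precisely the details the paper leaves implicit.
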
	
\subsection{$\pspace$-Hardness Results}
Next we prove three $\pspace$ lower bounds, which complete the picture about the  complexity of the path-checking problem for $\mtl$ and $\tptl$.

\begin{theorem}\label{theorem-PSPACE-lower-bound}
	Path checking for $\pure\tptl(\F)_{\textup{u}}$ over finite unary encoded strictly monotonic pure data words is $\pspace$-hard.
    \end{theorem}
    \begin{proof}
    	    The proof is a reduction 
	    from the quantified Boolean formula problem: does a given closed formula $\Psi =Q_{1}x_{1}\dots Q_{n}x_{n}\phi$, where  $Q_{i}\in\{\forall,\exists\}$, and $\phi$ is a quantifier-free  propositional formula, evaluate to $\true$? 	    
	    This problem is $\pspace$-complete~\cite{GareyJohnson1979}.

	    Let $\Psi=Q_{1}x_{1}\dots Q_{n}x_{n}\phi$ be an instance of the quantified Boolean formula problem. 
        We construct the finite pure strictly monotonic data word
        \[
        w = 0,1,2, \ldots, 2n-1,2n,2n+1.
        \]
        For every $i\in\{1,\dots,n\}$, the subword $2i-1,2i$ is used to quantify over the Boolean variable $x_i$.

        For the formula, we use a register variable $x$ and register variables $x_i$ corresponding to the variables used in $\Psi$, for every $i\in\{1,\dots,n\}$. 
        Intuitively, if we assign to register variable $x_i$ the
        data value $2i-b$, then the corresponding Boolean variable $x_i$ is set to $b \in \{0,1\}$.

        We define the $\pure\tptl(\F)_{\textup{u}}$ formula $x.x_1.x_2.\dots x_n.\Psi'$,  where $\Psi'$ is defined inductively by the following rules.
        \begin{itemize}

          \item If $\Psi=\forall x_i \Phi$, then $\Psi'=\G ( (x_i = 2i-1 \vee x_i = 2i) \rightarrow x_i .\Phi' )$.
          \item If $\Psi=\exists x_i \Phi$, then $\Psi'=\F ( (x_i = 2i-1 \vee x_i = 2i) \wedge x_i .\Phi' )$.
          \item If $\Psi$ is a quantifier-free formula, then
         $$\Psi'=\F(x=2n+1 \wedge \Psi[x_{1}/x_{1}=2n,\dots,x_{i}/x_{i}=2(n-i)+2,\dots,x_{n}/x_{n}=2]).$$ Here, $\Psi[x_{1}/x_{1}=a_0,\dots,x_{n}/x_{n}=a_{n}]$ denotes the
         $\tptl$ formula obtained from $\Psi$ by replacing
         every occurrence of $x_i$ by $x_i = a_i$ for every $i\in\{1,\dots, n\}$.
           \end{itemize}
          Recall that the subformula $x_i = 2i-1 \vee x_i = 2i$  is \texttt{true} if, and only if, the difference
          between the current data value and the value to which $x_i$ is bound (and which, initially, is set to $0$) is $2i-1$ or $2i$.
          Hence, the subformula is only \true \ at the two positions where the data values are $2i-1$ and $2i$, respectively.
          Clearly, $\Psi$ is $\true$ if, and only if, $w \models_\tptl x.\Psi'$.
    \end{proof}
    
 The \emph{quantified subset sum problem}~\cite{Travers_2006_CMP_12387481238761} (QSS, for short) is to decide for a given sequence $a_{1},a_{2},\dots,a_{2n},b\in \N$ of binary encoded numbers, whether 
 $$\forall x_{1}\in\{0,a_1\}\exists x_{2}\in\{0,a_2\}\dots\forall x_{2n-1}\in\{0,a_{2n-1}\}\exists x_{2n}\in\{0,a_{2n}\} : \sum_{i=1}^{2n}x_{i} =b
 $$
 holds. This problem is $\pspace$-complete~\cite{Travers_2006_CMP_12387481238761}.

    	    We define a variant of QSS, called \emph{positive quantified subset sum problem} (PQSS, for short), 
    	    in which for a given sequence $a_{1},a_{2},\dots,a_{2n},b \in \mathbb{N} \setminus \{0\}$ of binary encoded numbers, we want to decide whether 
	    $$\forall x_{1}\in\{1,a_1\}\exists x_{2}\in\{1,a_2\}\dots\forall x_{2n-1}\in\{1,a_{2n-1}\}\exists x_{2n}\in\{1,a_{2n}\} : \sum_{i=1}^{2n} x_i =b .$$
    	    One can easily see that QSS and PQSS are polynomial time-interreducible, and thus PQSS is $\pspace$-complete.

    	    \begin{theorem}\label{theorem-PSPACE-lower-bound-2}
    	    	    Path checking for  $\pure\tptl^2(\F)_{\textup{b}}$ over the infinite strictly monotonic pure data word
    	    	    $w = 0 (1)^\omega_{+1} = 0,1,2,3,4,\ldots$   is $\pspace$-hard.
    \end{theorem}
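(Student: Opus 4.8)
The plan is to reduce from the positive quantified subset sum problem (PQSS), which was just shown to be $\pspace$-complete. The key feature I would exploit is that the data word $w = 0,1,2,3,\dots$ is \emph{fixed} with $d_i = i$ at every position $i$. Hence, for a register $z$ that was last frozen at position $q$ and is now evaluated at position $p$, a constraint $z = c$ holds if and only if $p - q = c$; that is, a value-constraint of size $c$ corresponds exactly to advancing $c$ positions. The whole idea is to simulate the running sum $\sum_i x_i$ by the current position, using one register to mark the origin (position $0$) and a second register to measure the individual summands, re-freezing it after each step.

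Concretely, given a PQSS instance with numbers $a_1,\dots,a_{2n},b$ (odd indices universally, even indices existentially quantified, each $x_i \in \{1,a_i\}$), I would keep a register $y$ frozen once at position $0$ to record the origin, and a register $x$ that is repeatedly re-frozen to measure each summand. Existential choices are realized by $\F$ and universal choices by $\G$ (available as $\neg\F\neg$), both guarded by the test $x = 1 \vee x = a_i$: since distances in $w$ are unique, the only future positions satisfying this guard, relative to the current one, are exactly those at offset $1$ and $a_i$, i.e.\ the two admissible values of $x_i$. Formally I would set $\Phi_{2n+1} := (y = b)$ and, for $1 \le i \le 2n$,
\[
\Phi_i := \begin{cases} \G\bigl((x=1 \vee x=a_i)\rightarrow x.\Phi_{i+1}\bigr) & i \text{ odd},\\[1mm] \F\bigl((x=1 \vee x=a_i)\wedge x.\Phi_{i+1}\bigr) & i \text{ even},\end{cases}
\]
and output the $\pure\tptl^2(\F)_b$-formula $y.x.\Phi_1$. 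Reading a step $\Phi_i$ at a position $p$ where $x$ has just been frozen, the guard selects the positions $p+1$ and $p+a_i$, the subformula $x.(\cdot)$ re-freezes $x$ there, and the construction recurses; after all $2n$ steps the current position equals $\sum_i x_i$, and the final test $y = b$ (with $y$ still frozen at $0$) checks $\sum_i x_i = b$. The resulting formula is pure, uses only $\F$ and $\G$, exactly two registers, and binary-encoded constants, and is clearly computable in logspace.

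The main obstacle, and the reason both the binary encoding and the $\tptl^2$ bound matter, is to argue that two registers genuinely suffice: the trick is that stepping only ever needs a single register in the freeze/test/re-freeze pattern, which frees the second register to persist as the origin for the final sum test. I would establish correctness by induction on $i$, proving that $(w,p,\nu)\models_\tptl \Phi_i$ with $\nu(x)=p$ and $\nu(y)=0$ holds if and only if the PQSS game restricted to $x_i,\dots,x_{2n}$ admits a winning strategy with $p + \sum_{j\ge i} x_j = b$; the quantifier alternation of the prefix matches the $\G/\F$ alternation, and the guard's exactness rests on the uniqueness of distances in $w$. Two minor points remain to be checked: that $\F$ is strict ($j>i$) while each summand is at least $1$, so every step truly advances; and that, since the $a_i$ may be exponential in the input length, the visited positions can lie exponentially far out, which is precisely why the infinite word together with binary constants is essential and why the earlier unary-encoding lower bound (Theorem~\ref{theorem-PSPACE-lower-bound}) does not subsume this two-register statement.
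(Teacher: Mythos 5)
Your proposal is correct and is essentially the paper's own proof: the same reduction from PQSS over the fixed word $0,1,2,\ldots$, with one register frozen once at the origin for the final test $=b$ and a second register re-frozen in a freeze/guard/re-freeze pattern, where the guard $(\cdot=1 \vee \cdot=a_i)$ under strict $\F$/$\G$ simulates the existential/universal choice of each summand. The only differences are cosmetic — you place the freeze quantifier inside, after the guard ($\G((x=1\vee x=a_i)\rightarrow x.\Phi_{i+1})$), whereas the paper writes $y.\G((y=1\vee y=a_i)\rightarrow\varphi_{i+1})$ with the re-freeze at the head of the next formula, and your registers $x,y$ swap roles relative to the paper's.
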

    \begin{proof}
    	     The theorem is proved by a reduction from PQSS.
    	     Given an instance $a_{1},a_{2},\dots,a_{2n},b$ of PQSS, we construct the $\tptlpure^2(\F)_{\textup{b}}$ formula
        $x.\varphi_1$, where the formula $\varphi_i$, for every $i\in\{1,\dots,,2n+1\}$, is defined inductively by
        \[
        \varphi_{i}= \begin{cases}
        y.\G((y=1\vee y=a_i)\rightarrow \varphi_{i+1}) & \text{for $i <2n$ odd}, \\
        y.\F((y=1\vee y=a_i)\wedge \varphi_{i+1}) & \text{for $i \leq 2n$ even}, \\
        x=b & \text{for $i = 2n+1$}.
        \end{cases}
        \]

        The intuition is the following: note that in the data word $w$ the data value is increasing by one in each step.
        Assume we want to evaluate $y.\G((y=1\vee y=a_i)\rightarrow \varphi_{i+1})$ in a position where the data
        value is currently $d$. The initial freeze quantifier sets $y$ to $d$. Then, $\G((y=1\vee y=a_i)\rightarrow \varphi_{i+1})$
        means that in every future position, where the current data value is either $d+1$ (in such a position $y=1$ holds by the $\tptl$-semantics)
        or $d+a_i$ (in such a position $y=a_i$ holds), the formula $\varphi_{i+1}$ has to hold. In this way, we simulate
        the quantifier $\forall x_i \in \{1,a_i\}$. At the end, we have to check that the current data value is $b$, which can be done with the constraint $x=b$ (note that $x$ is initially set to $0$ and never reset).
        One
        can show that $(0)^\omega_{+1}\models_\tptl x.\varphi_{1}$ if, and only if, $(a_{1},a_{2},\dots,a_{2n},b)$ is a positive instance of PQSS.     	    
    \end{proof}

    \begin{theorem}\label{theorem-PSPACE-lower-bound-3}
         Path checking for $\fltl^2$ over infinite binary encoded pure data words is
         $\pspace$-hard.
 \end{theorem}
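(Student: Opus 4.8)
The plan is to reduce from the quantified subset sum problem QSS (equivalently PQSS), which is $\pspace$-complete, reusing the quantifier skeleton of the proof of Theorem~\ref{theorem-PSPACE-lower-bound-2}: a universal quantifier $\forall x_i$ becomes a $\G$, an existential quantifier $\exists x_i$ becomes an $\F$, and the body of the $i$-th quantifier freezes a register to realise the binary choice $x_i$. The essential new difficulty is that $\fltl$ may only use constraints of the form $x=0$, i.e.\ it can merely test whether the current data value coincides with a value stored earlier in a register; in particular it can test neither $y=a_i$ nor $x=b$, which were the arithmetic tests driving the proof of Theorem~\ref{theorem-PSPACE-lower-bound-2}. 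Hence the whole reduction must be arranged so that every arithmetic assertion is turned into such an equality test.

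First I would fix the representation of the running partial sum. Instead of the plain staircase $0,1,2,\dots$ I would use a (non-monotone) periodic word $w=u_1(u_2)^\omega_{+k}$ and exploit its key feature: the data value at a fixed position inside the period grows by exactly $k$ after each repetition, so the value occurring at period-position $j$ in the $t$-th copy of $u_2$ is $\mathrm{base}_j+t\cdot k$. Consequently two visited positions carry the same data value precisely when the difference of their base values equals $(t'-t)\,k$, so \emph{equality of data values encodes a linear relation} between the numbers of periods advanced. I would use the two registers as follows: register $x$ is frozen once, at the start, at a position whose value encodes the target $b$, and is never reset; register $y$ is (re)frozen locally inside each quantifier to set up and check the current choice. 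The base values in $u_2$ and the offset $k$ would be chosen so that, after the choices $x_1,\dots,x_{2n}$ are made, the current data value equals the value anchored in $x$ \emph{iff} $\sum_i x_i=b$. The only constraint needed for the final test is then $x=0$, which is legal in $\fltl$; since the word is non-monotone the running value can actually reach the anchored target, and since $k\neq 0$ the match forces the intended equation.

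The main obstacle, and the part requiring the real work, is to force each quantifier step to contribute \emph{exactly} one of the two legal amounts associated with $x_i$ (a unit step or the binary number $a_i$) using only equality-to-a-frozen-value tests, rather than the forbidden disjunction $y=1\vee y=a_i$. My approach would be to build, for each variable $i$, a small gadget inside the period in which the admissible target positions are exactly those whose data value, after the advance, matches the value frozen in $y$ at the decision point; the periodic $+k$ structure is what makes ``the value $a_i$ periods ahead'' (respectively one period ahead) recoverable by an equality test, so that $\G$ (for $\forall x_i$) ranges over exactly the two admissible successors while $\F$ (for $\exists x_i$) guesses one of them, with the equality filter discarding every other position. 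The two delicate points are (i) certifying a ``jump by the binary number $a_i$'' by a single equality match, where one register is too weak and the second register together with the offset $k$ is essential, and (ii) keeping both $w$ and the formula of polynomial size under binary encoding, which is why $a_i$ and $b$ must enter only through base values and through $k$, never through unary padding. I would then conclude, exactly as in Theorem~\ref{theorem-PSPACE-lower-bound-2}, that $w\tptlmodels x.\varphi_1$ if and only if the QSS instance is positive. Finally, since a deterministic one-counter machine produces precisely such a binary-encoded periodic data word, this reduction simultaneously re-proves, and sharpens to two registers, the $\pspace$-hardness of $\fltl$ model checking over deterministic one-counter machines from the cited work.
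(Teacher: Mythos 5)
Your skeleton coincides with the paper's: the paper also reduces from PQSS, anchors register $x$ at the very first position to the target value $b$ and checks $x=0$ at the end, refreezes $y$ once per quantifier step, and realises ``advance the running sum by $1$ or by $a_i$'' through a $+1$-shifted periodic word, concretely $(r,b)\bigl((q,0)\prod_{i=1}^{2n}(p,1)(p,a_i)(r,0)\bigr)^{\omega}_{+1}$: the running sum $s$ is the index of the current period copy, and freezing $y$ at a choice position carrying value $s+x_i$ and then jumping to a future copy-start position satisfying $y=0$ advances the copy index by exactly $x_i\in\{1,a_i\}$. All of this, including the binary-size accounting, is in your sketch.

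However, the step you yourself flag as ``the real work'' hides a genuine gap, and the mechanism you propose for it would fail as described. You want the $\G$ of a universal quantifier to ``range over exactly the two admissible successors \dots{} with the equality filter discarding every other position''. But the two successors carry the \emph{different} values $s+1$ and $s+a_i$, while a register frozen at the decision point supports only the single test $y=0$, i.e., one value class; no one equality filter can select both, and the disjunction $y=1\vee y=a_i$ is exactly what $\fltl$ forbids. Selecting the two positions purely positionally, by $\X^{c_1}$ and $\X^{c_2}$ with an explicit conjunction or disjunction, duplicates the continuation $\varphi_{i+1}$ at every level and blows the formula up exponentially --- your own point (ii). The paper's resolution is positional but until-based: it introduces auxiliary propositions and the guarded modalities $\F_p\psi = p\,\U\,(p\wedge\psi)$ and $\G_p\psi=\neg\F_p\neg\psi$, so that a \emph{single} occurrence of the continuation covers both choice positions (the surrounding $(q,\cdot)$ and $(r,0)$ letters delimit the $p$-block), together with prefixes $\X^{3(i-1)}$ to address variable $i$'s slot; the equality test is reserved for the deterministic jump $\F(q\wedge y=0\wedge\cdots)$, where the $q$-guard is again indispensable because other positions in the word can share the value $s+x_i$. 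Finally, since the theorem concerns \emph{pure} data words, the paper needs a second stage you never address: it eliminates the propositions by encoding $q$ and $p$ as the local value patterns $(0,1,1)$ and $(0,0,0)$, recognised by freeze-and-$\X$ formulae $\varphi_q,\varphi_p$, with modified modalities $\F_p',\G_p'$ and adjusted offsets $\X^{9(i-1)}$. Your plan correctly identifies where the difficulty sits, but the equality-only gadget it promises cannot supply the required confinement of $\F/\G$ to the two successors, nor the identification of gadget positions in a propositionless word.
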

        \begin{proof}
        	The proof is by a reduction from PQSS.         	
        	We first prove the claim for infinite binary encoded data words. 
        	Then we show how one can change the proof to obtain the result for \emph{pure} data words.

        	Given an instance $a_{1},a_{2},\dots,a_{2n},b$ of PQSS, we define the infinite data word
        	\begin{align*}
        	w \defeq (r,b) \bigg( (q,0) \prod_{i=1}^{2n} (p, 1) (p, a_i) (r, 0)  \bigg)_{+1}^\omega
        	\end{align*}       
        For definining the $\fltl^2$ formula, we first define 
        for every $\fltl^2$ formula $\psi$ the auxiliary formulas
        \begin{align*}
        	\F_{p} \psi  \defeq p \U (p \wedge \psi) &  \text{ and }  \G_{p} \psi  \defeq   \neg \F_{p} \neg \psi .
      \end{align*}      
        The formula $\F_{p} \psi$ holds in position $i$ if, and only if, there exists a future position $j>i$ such that $\psi$ holds, and $p$ holds in  all positions $k\in\{i+1,\dots, j\}$. 
        Note that the formula $\G_p\psi$ is equivalent to $\neg p\R(p\to\psi)$; 
        it thus holds in a position $i$ if, and only if, for all future positions $j>i$, $\psi$ holds in position $j$ whenever $p$ holds in all positions $k\in\{i+1,\dots,j\}$.

        Define for every $i\in\{1,\dots,2n\}$ the $\fltl^2$ formula $\varphi_i$ as follows: 
        \begin{align*}
        	\varphi_i \defeq
        \begin{cases}
          \X^{ 3(i-1)} \G_{p} y. \F (q \wedge y=0 \wedge \varphi_{i+1})  & \text{for $i<2n$ odd}, \\
          \X^{3(i-1)}\F_{p} y. \F (q \wedge y=0 \wedge \varphi_{i+1})  & \text{for $i \leq 2n$ even}, \\
           x=0 & \text{for $i=2n+1$}.
        \end{cases}
        \end{align*}    
Finally set $\varphi=x.y.\X \varphi_1$.

Note that in $\varphi$, we require the auxiliary formulas $\F_p\psi$ and $\G_p\psi$ to hold in $w$ \emph{only} at positions in which also $q$ holds, \ie, at positions corresponding to the beginning of the periodic part of $w$. 
Plainly put, formula $\F_p\psi$ holds at some position in $w$ in which also $q$ holds, if, and only if,  $\psi$ holds at the next \emph{or} the next but one position; analogously, the formula $\G_p\psi$ holds, if, and only if, $\psi$ holds at the next \emph{and} the next but one position. 
        Note that we resign from using the next modality here to avoid an exponential blow-up of the formula.

        We explain the idea of the reduction. Assume $w\models_\tptl\varphi$.  
        Then, formula $\varphi_1$ holds at the beginning of the first iteration of the periodic part of $w$, \ie, at the position with letter $(q,0)$. 
        By formula $\G_py.(q\wedge y=0 \wedge \varphi_2)$, we know that 
        $y.(q\wedge y=0 \wedge \varphi_2)$ holds both at the position with letter $(p,1)$ \emph{and} at the position with letter $(p,a_1)$ (universal quantification by the $\G_p$-modality). 
        This is the case if, and only if, 
        $\varphi_2$ holds at the beginning of the $2$nd iteration of the periodic part of $w$ (\ie, the position with letter $(q,1)$) \emph{and}
        at the beginning of the $(a_1+1)$-th iteration of the periodic part of $w$ (the position with letter $(q,a_1)$). 
        In the former case, we can conclude that the formula $y.\F(q\wedge y=0\wedge \varphi_3)$ holds at the position with letter $(p,1+1)$ \emph{or} at the position with letter $(p,1+a_2)$ (existential quantification of the $\F_p$-modality); in the second case, the formula $y.\F(q\wedge y=0\wedge\varphi_3)$ holds at the position with letter $(p,a_1+1)$ \emph{or} at the position with letter $(p,a_1+a_2)$ (again, existential quantification of the $\F_p$-modality). Note how the intermediate sums computed so far, \ie, ($1+1$ or $1+a_2$, and, $a_1+1$ or $a_1+a_2$), are stored in the corresponding data values with proposition $p$.

        Note that the register variable $x$ is set to the first data value occurring in $w$, which is $b$. 
        Hence, in the formula $\varphi_{2n}$, the constraint $x=0$ expresses  that the current data
        value has to be $b$.

        Clearly, one can prove that $w \models_\tptl x.y.\X \varphi_1$ if, and only if, $a_{1},a_{2},\dots,a_{2n},b$ is a positive instance of PQSS.

        Next, we explain how we can encode the propositional variables occurring in $w$ and $\varphi$ by data values, to obtain the result for pure data words and the pure logics. 
        Note that $r$ does not occur in $\varphi$. 
        It thus suffices to encode $q$ and $p$, respectively, which we do by  $(0,1,1)$ and $(0,0,0)$, respectively. 
        We obtain the pure data word $w'$ from $w$ by replacing all occurrences of $p$ and $q$ by their corresponding data values, and by removing $r$, as follows, where we underline the data values that occurred in the $w$: 
        \begin{align*}
        	w' \defeq \underline{b}\, \bigg( 0,1,1,\underline{0}, \prod_{i=1}^{2n} (0,0,0, \underline{1}, 0,0,0, \underline{a_i}, \underline{0})  \bigg)_{+1}^\omega .
        	\end{align*}
    Define $\varphi_q = x.\X(\neg (x=0)\wedge x.\X(x=0))$ and $\varphi_p = x. \X(x=0 \wedge \X(x=0)) $. We replace the formula $\F_p \psi$ by
    \[
    \F_p' \psi = [\varphi_p \vee \X^3(\varphi_p\wedge \X \neg \varphi_p) \vee \X^2(\varphi_p\wedge \X \neg \varphi_p) \vee \X(\varphi_p\wedge \X \neg \varphi_p)]\U [\varphi_p \wedge \psi]
    \]
    and define $\G_p' \psi= \neg \F_p' \neg \psi$.
    Then we define:
        \[
        	\varphi_i' \defeq
        \begin{cases}
          \X^{9(i-1)}\, \G_{p}'\, \X^3 y. \F (\varphi_q  \wedge \X^4 \varphi_p \wedge \X^3 (y=0) \wedge \X^3\varphi_{i+1}')  & \text{for $i<2n$ odd}, \\
          \X^{9(i-1)}\, \F_{p}'\, \X^3 y. \F (\varphi_q  \wedge \X^4 \varphi_p \wedge \X^3 (y=0) \wedge \X^3\varphi_{i+1}')  & \text{for $i \leq 2n$ even}, \\
           x=0 & \text{for $i=2n+1$}.
        \end{cases}
        \]
        Analysing the formulas yields that $w' \models_\tptl x.y.\X^4\,\varphi_1'$ if, and only if,  
        $a_{1},a_{2},\dots,a_{2n},b$ is positive instance of PQSS.
        \end{proof}

\section{Model Checking for Deterministic One-Counter Machines}

\label{sec-one-counter}
A \emph{one-counter machine} (OCM, for short) is a nondeterministic finite-state machine extended with a single counter that takes values in the non-negative integers. 
Formally, a one-counter machine is a tuple $\A=(Q, q_0, E)$, where $Q$ is a finite set of control states, $q_0 \in Q$ is the initial control state, 
and $E\subseteq Q\times \Op \times Q$ is a finite set of labelled edges, where $\Op=\{\zero\}\cup\{\add(a)\mid a\in\Z\}$. 
We use the operation $\zero$ to test whether the current value of the counter is equal to zero, and we use $\add(a)$ for adding $a$ to the current value of the counter. 
A \emph{configuration} of the one-counter machine $\A$ is a pair $(q,c)$, where $q\in Q$ is a state and $c\in\N$ is the current value of the counter. 
We define a transition relation $\to_{\A}$ over the set of all configurations by $(q,c)\to_{\A}(q',c')$ if, and only if,  there is an edge $(q, \op, q')\in E$ 
and one of the following two cases holds: 
\begin{itemize}
\item $\op=\zero$ and  $c=c'=0$,
\item $\op=\add(a)$ and  $c'=c+a \geq 0$. 
\end{itemize}
If $\A$ is clear from the context, we write $\to$ for $\to_{\A}$.
A \emph{finite computation} of $\A$ is a finite sequence  $(q_0,c_0)(q_1,c_1)\dots(q_n,c_n)$ over $Q\times\N$ such that $c_0=0$ and $(q_i,c_i)\to(q_{i+1},c_{i+1})$ for all $i\in\{0,\dots,n-1\}$, and 
such that there does not exist  a configuration $(q,c)$ with $(q_n, c_n) \to (q, c)$.
We identify such a computation with the finite data word of the same form. 
An \emph{infinite computation} of $\A$ is an infinite sequence  $(q_0,c_0)(q_1,c_1)\dots$ over $Q\times\N$ such that, again, $c_0 = 0$, and $(q_i,c_i)\to(q_{i+1},c_{i+1})$ for all $i\ge 0$. 
We identify such a computation with the infinite data word of the same form. 
A {\em deterministic one-counter machine} $\A = (Q,q_0,E)$, briefly DOCM, is an OCM such that 
for every configuration $(q,c)$ with $(q_0,0) \to^* (q,c)$ 
there is \emph{at most one} configuration $(q',c')$ such that $(q,c)\to(q',c')$. 
This implies that $\mathcal{A}$ has a unique (finite or infinite) computation, which we denote by $\comp(\A)$, and which we view
as a data word as explained above. 
For complexity considerations, it makes a difference whether the numbers $a\in\Z$ in operations $\add(a)$ occurring at edges in $\A$ are encoded in unary or in binary. We will therefore speak of unary encoded (resp., binary encoded) OCMs in the following.
Let $\alpha(\A)$ be the largest number $a$ such that $\add(a)$ appears in an edge from $E$.
We use the following lemma from \cite{DBLP:journals/tcs/DemriLS10}

\begin{lemma}[\mbox{\cite[Lemma 9]{DBLP:journals/tcs/DemriLS10}}] \label{lemma-DLS}
Let $\mathcal{A}=(Q, q_0, E)$ be a DOCM. Then, the following holds:
\begin{itemize}
\item If $\comp(\A)$
	is infinite then $\comp(\A) = u_1 (u_2)^\omega_{+k}$ with
	$0 \leq k \leq |Q|$ and $|u_1 u_2| \leq \alpha(\A) \cdot  |Q|^3$.
\item  If  $\comp(\A)$
	is finite then  $|\comp(\A)| \leq \alpha(\A) \cdot |Q|^3$.
\end{itemize}
\end{lemma}

\begin{proof}
The first statement is shown in  \cite{DBLP:journals/tcs/DemriLS10} for a DOCM $\A$ with the operations
$\zero$, $\add(-1)$ and $\add(1)$ (and hence $\alpha(\A) = 1$). To get the full first statement of the lemma,
it suffices  to simulate every $\add(a)$ operation by at most $\alpha(\A)$ many operations $\add(-1)$ or $\add(1)$.
To the resulting DOCM one can then apply \cite[Lemma 9]{DBLP:journals/tcs/DemriLS10}.

The second statement
is implicitly shown in the proof of  \cite[Lemma 9]{DBLP:journals/tcs/DemriLS10}. Like for the proof of the first statement, it
suffices to consider a DOCM such that $a \in \{1,-1\}$ for all instructions $\add(a)$ and afterwards multiply the length of 
$\comp(\A)$  by $\alpha(\A)$. Assume that $\comp(\A)$ is finite and let $\comp(\A)=(q_0,c_0)(q_1,c_1)\dots(q_n,c_n)$.
Consider $i<j$ such that $c_i = c_j = 0$. By \cite[Lemma 10]{DBLP:journals/tcs/DemriLS10}, we must have $j-i \leq |Q|^2$. 
Moreover, there can be at most $|Q|$ many $i \geq 0$ such that $c_i = 0$. Let $k$ be maximal such that $c_k = 0$. 
From the previous discussion, we get $k \leq |Q|^2  (|Q|-1)$.

Assume that there exist $j > i > k$ with $q_i = q_j$. If $c_i \leq c_j$ then $\comp(\A)$ would be infinite,
and  if $c_i > c_j$ then the counter would hit zero again, which contradicts the choice of $k$.   It follows
that  $n \leq k + |Q|-1$  and hence $|\comp(\A)| = n+1 \leq k + |Q| \leq |Q|^2  (|Q|-1) + |Q| \leq |Q|^3$.
\end{proof}

For unary encoded DOCMs we make use of the following result:

\begin{lemma} \label{lemma-doca-to-data-word}
For a given unary encoded DOCM $\mathcal{A}=(Q, q_0, E)$ one can check in logspace, whether $\comp(\A)$ is finite or infinite.
Moreover, the following holds:
\begin{itemize}
\item If  $\comp(\A)$
	is finite, then the corresponding data word in unary encoding can be computed in logspace.
\item If $\comp(\A)$ is infinite, then one can compute in logspace two unary encoded data words
$u_1$ and $u_2$ and a unary encoded number $k$ such that $\comp(\A) = u_1 (u_2)^\omega_{+k}$.
\end{itemize}
\end{lemma}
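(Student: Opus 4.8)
The plan is to exploit two facts about the unique computation $\comp(\mathcal A)$ of a DOCM: that it is ultimately periodic with a possible linear drift of the counter, and that the lengths of the aperiodic prefix and the period, the offset, and all counter values occurring before the period begins can be bounded polynomially in $|\mathcal A|$. Once these bounds are in place, everything is produced by a logspace transducer that merely re-simulates $\mathcal A$ from $(q_0,0)$ whenever it needs to recompute a configuration, storing only a current configuration and a step counter; the decision ``finite vs.\ infinite'' falls out of the same simulation.

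First I would isolate the key structural observation. Let $A=\max\{\,|a| : (q,\add(a),q')\in E\,\}$, which is polynomial in $|\mathcal A|$ since updates are unary. For any configuration $(q,c)$ with $c>A$ no zero-test is enabled and every decrement $\add(a)$ is enabled (as $c+a>0$); hence the set of enabled edges at $(q,c)$ depends only on $q$ and not on the precise value $c$, as long as $c>A$. By determinism this set has size at most one, so above the threshold $A$ the successor is \emph{state-determined and translation invariant}: from $(q,c)$ it is $(q',c+a)$ for the unique add-edge leaving $q$, or undefined. Consequently, while the counter stays above $A$ the sequence of states follows a fixed function $Q\to Q$, which reaches a cycle of length $p\le|Q|$; the net counter change over one such state-cycle is a fixed integer $\Delta$, and the whole suffix of the run is a fixed $p$-block repeated with its counter values shifted by $\Delta$ each time.

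Next I would fix a polynomial threshold $T=2A(|Q|+1)+1$ and run $\mathcal A$ step by step. While $c\le T$ the reachable configurations live in $Q\times\{0,\dots,T\}$, a set of polynomial size $|Q|(T{+}1)$; hence within that many steps the run must either deadlock (the finite case, where the whole word is then printed in unary), revisit a configuration (so by determinism $\comp(\mathcal A)=u_1(u_2)^\omega$ with $k=0$), or cross $T$. If it crosses $T$, a short calculation shows the counter stays above $A$ for the next $|Q|$ steps, so by the observation above a state repeats; let $\Delta$ be the associated net counter change. If $\Delta\ge 0$ the large-counter regime persists forever and $\comp(\mathcal A)=u_1(u_2)^\omega_{+\Delta}$ with $k=\Delta\ge 0$; if $\Delta<0$ the counter strictly decreases over each period and therefore returns, within polynomially many steps and without ever exceeding a polynomial bound $P'=O(A|Q|^2)$, to the region $c\le T$. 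Iterating this, one obtains the dichotomy: either a configuration eventually repeats (yielding $k=0$), or a nonnegative drift is detected; crucially, as long as no nonnegative drift occurs the counter never leaves $[0,P']$, so by pigeonhole a configuration must recur within $|Q|(P'{+}1)$ steps, and the resolution is localized in a polynomial window. This simultaneously establishes the finite/infinite decision and all the claimed polynomial bounds.

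Finally I would phrase this as a logspace computation. A configuration $(q,c)$ with $c\le P'$ fits in $O(\log|\mathcal A|)$ bits, as does a step counter bounded by $|Q|(P'{+}1)$, so the simulation uses logarithmic work space; wherever an earlier configuration is needed (to test whether a configuration or a state recurs) it is recomputed by re-simulating from $(q_0,0)$, costing only logarithmic space and polynomial time. The indices $i<j$ delimiting $u_1$ and $u_2$ and the offset $k=\Delta$ are found this way and written, with all counter values in unary (polynomially long), on the write-only output tape. The main obstacle is not the periodicity as such but the \emph{uniform} polynomial bound on the length of the aperiodic prefix and on every counter value arising before the period starts; this rests entirely on the translation invariance of the successor above the threshold together with the fact that a genuinely unbounded counter forces a strictly positive drift, so that the only way for the run to stay bounded is to cycle—which the pigeonhole argument then confines to a polynomial window.
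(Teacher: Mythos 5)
Your proposal is correct, and the logspace machinery at its core is the same as the paper's: simulate $\mathcal{A}$ while storing only the current configuration (counter in binary) and a step counter, recompute earlier configurations by restarting the simulation from $(q_0,0)$ whenever a comparison is needed, and convert binary counter values to unary on output. Where you genuinely diverge is in how the polynomial bound on the prefix, the period, and the offset is obtained. The paper simply cites the result of Demri, Lazic and Sangnier that an infinite $\comp(\mathcal{A})$ has the form $u_1(u_2)^\omega_{+k}$ with $k \leq |Q|$ and $|u_1u_2| \leq |Q|^3$ (with a footnote asserting the argument survives the generalization from $\pm 1$ updates to arbitrary unary updates), and then only needs to simulate for $|Q|^3$ steps, checking for a pair $(q,m)$, $(q,n)$ with $m \leq n$. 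You instead prove the structural bound from scratch: above the threshold $A$ (the maximal update magnitude) no zero-test is enabled and every add-edge is, so by determinism the successor along the run is state-determined and translation invariant, forcing a state-cycle with a fixed drift $\Delta$; the case analysis on the sign of $\Delta$, together with the pigeonhole argument confining all negative-drift excursions to $Q \times [0, O(A|Q|^2)]$, yields the dichotomy and the polynomial window. This buys you two things the paper's proof does not provide: a self-contained argument independent of the cited bound, and a treatment that handles arbitrary unary updates directly rather than by the footnoted adaptation. The cost is that your bounds involve $A$ (harmless here, since unary encoding makes $A$ polynomial in the input size) and that your case analysis of excursions is more intricate than the paper's single $|Q|^3$-step sweep; both routes end at the same logspace transducer.
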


\begin{proof}
In order to check whether $\comp(\A)$ is infinite, it suffices by Lemma~\ref{lemma-DLS}
to simulate $\A$ for at most $\alpha(\A) \cdot |Q|^3$ many steps. For this we store 
(i) the current configuration
$(q,c)$ with $c$ encoded in binary notation and (ii) a step counter $t$ in binary notation, which is initially zero and incremented
after each transition of $\A$. The algorithm stops if $(q,c)$ has no successor configuration or $t$ reaches the value $\alpha(\A) \cdot |Q|^3$. 
In  the latter case, $\comp(\A)$ is infinite.
Note that the counter value $c$ is bounded by 
$\alpha(\A)^2 \cdot |Q|^3$.
Therefore, logarithmic space suffices to store $(q,c)$ and $t$.
More precisely, $(q,c)$ can be stored with $4 \log |Q| + 2 \log \alpha(\A)$ bits (note that in the input representation, $\alpha(\A)$ is represented
with $\alpha(\A)$ bits) and $t$ can be stored with $3 \log |Q| + \log \alpha(\A)$ bits.
In a similar way, we can produce the data word $\comp(\A)$ itself in logarithmic space. We only have to print
out the current configuration in each step. Internally, our machine stores counter values in binary encoding. Since we want to output the 
data word in unary encoding, we transform the binary encoded counter values into unary encoding, which
can be done with a logspace machine. In case $\comp(\A)$ is finite, the machine outputs the unary encoding
of $\comp(\A)$ in this way. In case $\comp(\A)$ is infinite, we can output with a first logspace machine
the unary encoded data word consisting of the first $\alpha(\A) \cdot |Q|^3$ many configurations. From this data word,
a second logspace machine can easily compute two unary encoded data words
$u_1$ and $u_2$ and a unary encoded number $k$ such that $\comp(\A) = u_1 (u_2)^\omega_{+k}$.
We then use the fact that the composition of two logspace machines can be compiled into one logspace machine.
\end{proof}
In order to extend Lemma~\ref{lemma-doca-to-data-word} to binary encoded DOCMs, we need SLPs:

\begin{lemma} \label{lemma-doca-to-data-word-binary}
For a given binary encoded DOCM $\mathcal{A}$ one can check in polynomial time, whether $\comp(\A)$ is finite or infinite.
Moreover, the following holds:
\begin{itemize}
\item If  $\comp(\A)$
	is finite, then an SLP $\mathcal{G}$ with $\val(\mathcal{G}) = \comp(\A)$ can be computed
	in polynomial time.
\item If $\comp(\A)$ is infinite, then one can compute in  polynomial time  two 
	SLPs $\mathcal{G}_1$ and $\mathcal{G}_2$ and a binary encoded number $k$ such that 
  $\comp(\A) = \val(\mathcal{G}_1) (\val(\mathcal{G}_2))^\omega_{+k}$.
\end{itemize}
\end{lemma}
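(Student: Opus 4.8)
The plan is to mimic the structural analysis behind Lemma~\ref{lemma-doca-to-data-word}, but to record the (now possibly exponentially long) data word $\comp(\mathcal{A})$ symbolically by an SLP instead of writing it out letter by letter. The starting observation is that, since $\mathcal{A}$ is deterministic, whenever the counter value is strictly positive no $\zero$-edge can fire, so the applicable transition depends only on the current control state. Hence, as long as the counter stays positive, the sequence of control states is the orbit of a deterministic function on the finite set $Q$, and therefore consists of a tail of length at most $|Q|$ followed by a cycle of length at most $|Q|$ that is repeated. During one traversal of such a state cycle the counter changes by a fixed integer offset $\delta$ (the sum of the $\add$-values along the cycle), and $\delta$ is a binary number of polynomial bit length.

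First I would cut $\comp(\mathcal{A})$ into \emph{phases}: a phase is a maximal block during which the control states follow one fixed tail-plus-cycle pattern, ending at the first moment the counter becomes small enough that the behaviour changes, i.e.\ when a decrement would drive the counter negative, when a $\zero$-edge becomes enabled (counter $0$), or when a previously seen configuration reappears. Within a phase the data word has the shape $v\,(v)_{+\delta}(v)_{+2\delta}\cdots(v)_{+(N-1)\delta}$ for a polynomially bounded cycle word $v$ (preceded by a bounded tail), where $N$ is the number of cycle repetitions. The value $N$ is the solution of a single linear (in)equality over the integers describing when the counter first crosses the critical threshold, and is obtained by one integer division; since addition, multiplication and division on binary numbers are in logspace \cite{DBLP:journals/jcss/HesseAB02}, the binary encoding of $N$ is computed in polynomial time. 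The key point is that such a geometric repetition is \emph{exactly} representable by a polynomial-size SLP using repeated doubling: from a variable $A_t$ producing $v\,(v)_{+\delta}\cdots(v)_{+(2^t-1)\delta}$ one sets $\rhs(B_t) = A_t + 2^t\delta$ and $\rhs(A_{t+1}) = A_t B_t$, so that $O(\log N)$ variables suffice and the offsets $2^t\delta$ fit the binary $+d$ right-hand sides; for general $N$ one composes the $A_t$ according to the binary expansion of $N$.

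Next I would bound the number of phases by a polynomial in $|Q|$. The counter can return to value $0$ (where a $\zero$-test may fire) at most $|Q|$ times before some configuration $(q,0)$ repeats and forces genuine periodicity, and within each positive excursion the state behaviour is a single lasso, yielding only boundedly many phases per excursion; this is precisely the eventually-periodic-with-offset behaviour underlying the bounds $|u_1u_2|\le|Q|^3$, $k\le|Q|$ of \cite{DBLP:journals/tcs/DemriLS10} and the structure of succinct one-counter automata \cite{DBLP:conf/icalp/GollerHOW10,DBLP:conf/concur/HaaseKOW09}. To decide finiteness one simulates phase by phase for this polynomial number of phases: $\comp(\mathcal{A})$ is finite iff a configuration with no enabled outgoing edge is reached, and infinite iff the simulation enters a phase whose state cycle has offset $\delta\ge 0$ and never again enables a $\zero$-edge, so that the cycle repeats forever. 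In the finite case $\mathcal{G}$ is the concatenation of the phase-SLPs. In the infinite case $\mathcal{G}_1$ is the concatenation of the SLPs of the phases preceding the final repeating one, $\mathcal{G}_2$ is the SLP for one state cycle of that final phase, and $k=\delta$ is its per-cycle offset, giving $\comp(\mathcal{A}) = \val(\mathcal{G}_1)(\val(\mathcal{G}_2))^\omega_{+k}$.

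The main obstacle I expect is the phase-counting argument together with the exact computation of the repetition count $N$ at each phase boundary: one must argue that, despite exponentially long phases, only polynomially many qualitatively different phases arise before periodicity or halting, and one must handle the boundary cases precisely — in particular distinguishing a phase that ends because a $\zero$-test fires (counter reaches $0$) from one that ends because a decrement is blocked, and correctly treating the within-cycle counter fluctuation and the final partial cycle. Once this bookkeeping is set up, assembling the SLPs and the number $k$ is routine, and feeding $\mathcal{G}_1,\mathcal{G}_2,k$ into Theorem~\ref{Path check TPTL-upper-PSPACE-SLP} yields the intended model-checking application.
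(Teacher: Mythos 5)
Your overall strategy coincides with the paper's: simulate the DOCM; observe that while the counter stays positive the control states trace a lasso of length at most $|Q|+1$; if the cycle offset is nonnegative the run is immediately ultimately periodic, and if it is negative, pump the cycle a number $N$ of times obtained from a single integer division; represent the progression $v\,(v)_{+\delta}(v)_{+2\delta}\cdots$ by a squaring SLP --- your recurrence $\rhs(B_t)=A_t+2^t\delta$, $\rhs(A_{t+1})=A_tB_t$ is literally the recurrence $U_{n+1}=U_n(U_n)_{+2^nk}$ of Lemma~\ref{lemma-SLP-iteration}, including the composition along the binary expansion of $N$; and bound the number of phases by repetition among configurations with counter value $0$. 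The paper organizes this last point as a recursive procedure $\comp(q)$ together with an auxiliary graph $G$ on $Q\cup\{f\}$ whose cycle detection plays exactly the role of your ``some configuration $(q,0)$ repeats'' argument; that difference is organizational, not mathematical.

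There is, however, one step that fails as written. The paper's SLPs only admit right-hand sides $B+d$ with $d\in\mathbb{N}$ (and this nonnegativity is used downstream, e.g.\ in the bottom-up computation of Lemma~\ref{lemma-min-max} feeding Theorem~\ref{Path check TPTL-upper-PSPACE-SLP}). But in precisely the situation where you need the doubling construction --- a pumped-down cycle --- the offset $\delta$ is \emph{negative}, so your right-hand sides $A_t+2^t\delta$ are not legal SLP rules; your remark that ``the offsets $2^t\delta$ fit the binary $+d$ right-hand sides'' is false there. The paper's Lemma~\ref{lemma-SLP-iteration} closes exactly this hole by a reversal trick you are missing: for $k<0$ set $v=(u_{+(m+1)k})^{\mathrm{rev}}$ and $l=-k>0$, use the identity $\prod_{i=1}^m u_{+ik}=\bigl(\prod_{i=1}^m v_{+il}\bigr)^{\mathrm{rev}}$, build the SLP for the increasing progression, and then reverse the SLP by replacing every right-hand side $BC$ by $CB$. (Alternatively one could allow $d\in\mathbb{Z}$ in SLPs, but then the downstream results must be re-verified, and you would need to say so.) Two smaller inaccuracies: in the infinite case where periodicity is forced by a repeated configuration $(q,0)$, the period is the concatenation of \emph{all} phase words along that zero-to-zero cycle and the offset is $k=0$, not ``one state cycle of the final phase'' with $k=\delta$; and your opening claim that with positive counter ``the applicable transition depends only on the current control state'' is not a consequence of $\zero$-edges being disabled alone --- a blocked $\add(a)$-edge with $a<0$ can make the successor counter-dependent --- but follows from determinism at reachable configurations, since a second edge enabled at the larger counter value of a repeated state would yield two successors there; this deserves a sentence.
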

For the proof of Lemma~\ref{lemma-doca-to-data-word-binary} we need the following lemma:

\begin{lemma} \label{lemma-SLP-iteration}
	Let $u$ be a finite data word, and let $m \in \N$ and $k \in \Z$ be  binary encoded numbers such that 
$d + ik \geq 0$ for all data values $d$ occuring in $u$ and all $0 \leq i \leq m$. From $u$, $m$, and $k$
one can construct in polynomial time an SLP for the data word 
$\prod_{i=1}^m u_{+ik}$.
\end{lemma}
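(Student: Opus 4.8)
The plan is to produce the SLP by iterated doubling (``repeated squaring for concatenation''), based on the two recurrences
\[
P_{2n} = P_n \, (P_n)_{+nk} \qquad\text{and}\qquad P_{n+1} = P_n\, u_{+(n+1)k},
\]
where $P_n \defeq \prod_{i=1}^n u_{+ik} = u_{+k}\,u_{+2k}\cdots u_{+nk}$ is the word in question, so that $P_m$ is the target. Both identities follow at once from $(u_{+a})_{+b} = u_{+(a+b)}$ and from the fact that shifting distributes over concatenation. They let me build $P_m$ from $O(\log m)$ variables by scanning the binary digits of $m$ from the most significant to the least significant bit. Writing $m = \sum_{s=0}^{t} b_s 2^s$ with $b_t = 1$, I maintain a running value $n$ (initialised to $1$ after reading $b_t$, with a variable $W$ satisfying $\val_{\mathcal{G}}(W) = P_1 = u_{+k}$) and, upon reading the next bit $b_s$, I first set $\rhs(D) = W + nk$ and $\rhs(W') = W D$ (so that $\val_{\mathcal{G}}(W') = P_{2n}$); if $b_s = 1$ I additionally set $\rhs(E) = U + (2n+1)k$ and $\rhs(W'') = W' E$, where $U$ is a base variable with $\val_{\mathcal{G}}(U) = u$, and update $n \mathrel{:=} 2n+1$. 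Since every intermediate $n$ is bounded by $m$, the shift constants $nk$ and $(2n+1)k$ are binary numbers of polynomially many bits, each obtained by a single multiplication; there are only $O(\log m)$ variables, so the whole SLP has polynomial size and is produced in polynomial time.

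The one delicate point is that the SLP formalism permits right-hand sides $B+d$ only with $d\in\N$, whereas the shifts $nk$ above are negative when $k<0$. For $k\ge 0$ there is nothing to do: all shifts $nk$, $(2n+1)k$ are non-negative and $u$ is a legal data word. The hard part is therefore the case $k<0$, which I handle by reindexing the product around its \emph{minimal} block. Put $\kappa = -k > 0$ and $v \defeq u_{+mk}$. By the hypothesis $d+ik\ge 0$ applied at $i=m$, every data value of $v$ is non-negative, so $v$ is a legal data word that is computed explicitly from $u,m,k$ in polynomial time (subtract $m\kappa$ from each value). Since $u_{+ik} = v_{+(m-i)\kappa}$, substituting $j=m-i$ gives
\[
P_m = \prod_{i=1}^{m} u_{+ik} = v_{+(m-1)\kappa}\,v_{+(m-2)\kappa}\cdots v_{+\kappa}\,v_{+0} =: R_m ,
\]
a descending progression all of whose shift constants are non-negative. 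The analogues of the doubling recurrences are $R_{2n} = (R_n)_{+n\kappa}\, R_n$ and $R_{2n+1} = v_{+2n\kappa}\, R_{2n}$ (note the reversed concatenation order), and both again use only non-negative shifts, so the same most-significant-to-least-significant scan yields a polynomial-size SLP for $R_m = P_m$.

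To finish, I would verify the two pairs of recurrences by unfolding the definitions and checking the left and right sides agree block by block, and check that throughout the construction every shift constant and every data value produced remains non-negative — which is exactly what the non-negativity hypothesis guarantees. I expect the main obstacle to be not the doubling itself, which is the standard logarithmic-depth concatenation trick, but arranging the construction so that the SLP never requires a negative shift; the reindexing through the minimal block $v = u_{+mk}$ is the device that achieves this, and it is legitimate precisely because the hypothesis $d+ik\ge 0$ (for $0\le i\le m$) keeps $u_{+mk}$ a valid data word.
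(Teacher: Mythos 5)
Your proof is correct, and while the iterated-doubling core coincides with the paper's, your treatment of the case $k<0$ takes a genuinely different route. The paper also first handles $k>0$ via the recurrence $U_{n+1}=U_n (U_n)_{+2^nk}$ for dyadic blocks $U_n=\prod_{i=1}^{2^n}u_{+ik}$, then combines these precomputed powers along the binary expansion of $m$ from the least significant bit upward ($V_{j+1}=V_j (U_{n_{j+1}})_{+m_jk}$); your most-significant-bit square-and-multiply scan is a routine variant of this and equally gives $O(\log m)$ rules. The real divergence is in the negative case: the paper sets $v=(u_{+(m+1)k})^{\text{rev}}$ and $l=-k>0$, uses the identity $\prod_{i=1}^m u_{+ik}=\bigl(\prod_{i=1}^m v_{+il}\bigr)^{\text{rev}}$ to reduce to the already-solved positive case, and then reverses the resulting SLP by the syntactic transformation that replaces every right-hand side $BC$ by $CB$. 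You instead reindex the product around the minimal block $v=u_{+mk}$ and build the descending product $v_{+(m-1)\kappa}\cdots v_{+0}$ directly, via the recurrences $R_{2n}=(R_n)_{+n\kappa}R_n$ and $R_{2n+1}=v_{+2n\kappa}R_{2n}$ with reversed concatenation order, which I have checked are correct. The paper's device buys black-box reuse of the positive-step construction at the cost of introducing word reversal; note that its auxiliary word $v=(u_{+(m+1)k})^{\text{rev}}$ can have negative data values (the hypothesis only covers $i\le m$), which is harmless only because $v$ itself is never materialized as a variable --- only the shifts $v_{+il}$ with $i\ge 1$ are. Your reindexing avoids the reversal operation altogether and keeps every intermediate word manifestly a valid data word with non-negative values and every shift constant in $\N$, which is exactly the constraint the SLP formalism imposes; this makes your argument slightly more self-contained, at the cost of maintaining a second pair of (order-reversed) recurrences rather than reducing to the first.
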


\begin{proof}
	We first consider the case $k > 0$. First, assume that $m = 2^n$ for some $n \geq 0$.
If $U_n = \prod_{i=1}^{2^n} u_{+ik}$ then we obtain the following recurrence: 
$$
U_0 = u_{+k} \text{ and } U_{n+1} = U_n (U_n)_{+2^n k} .
$$
This recurrence can be directly translated into an SLP.
Second, assume that $m$ is not necessarily a power of two and let
$m = 2^{n_1} + 2^{n_2} + \dots + 2^{n_l}$ be the binary expansion of $m$,
where $n_1 <  n_2 < \dots < n_l$.
Let $m_j = 2^{n_1} + 2^{n_2} + \dots + 2^{n_j}$ for $1 \leq j \leq l$.
If  $V_j = \prod_{i=1}^{m_j} u_{+ik}$ then we obtain the following recurrence: 
$$
V_1 = U_{n_1}  \text{ and } V_{j+1} = V_j  (U_{n_{j+1}})_{+ m_j k} . 
$$
Again, this recurrence can be directly translated into an SLP.

Let us finally show how to reduce the case $k < 0$ to the case $k > 0$ (the case $k=0$ is easier).
Assume that $k < 0$ and
let $v = (u_{+(m+1)k})^{\text{rev}}$, where $w^{\text{rev}}$ denotes the data word $w$ reversed. 
Then, for $l = -k>0$ we get the following identity:
$$
\prod_{i=1}^m u_{+ik} = (\prod_{i=1}^m v_{+il})^{\text{rev}} .
$$ 
From an SLP for $\prod_{i=1}^m v_{+il}$ it is easy to compute in polynomial time an 
SLP for $ (\prod_{i=1}^m v_{+il})^{\text{rev}}$: one just has to replace every right-hand side
of the form $BC$ by $CB$. This shows the lemma.
\end{proof}
\noindent
{\em Proof of Lemma~\ref{lemma-doca-to-data-word-binary}.}
Fix a DOCM $\A=(Q, q_0, E)$. 
We define below a procedure $\comp(q)$, where $q \in Q$. This procedure constructs an SLP
for the unique computation that starts in the configuration $(q,0)$. Hence the call $\comp(q_0)$
computes an SLP for $\comp(\A)$. The algorithm stores as an auxiliary data structure a graph $G$ with vertex set $Q \cup \{ f \}$ 
(where $f \not\in Q$) which initially is the empty graph and to
which edges labelled with data words are added. These data words will be represented by
SLPs.
The idea is that an edge $q \xrightarrow{u} p$ is added
if the data word $u$ represents a computation of $\A$ from configuration $(q,0)$ to configuration $(p,0)$.
The overall algorithm terminates as soon as (i) an edge to node $f$ is added or (ii) the graph $G$ contains a cycle.

The call $\comp(q)$ starts simulating $\A$ in the configuration $(q,0)$. Let $(p_i,c_i)$
be the configuration reached after $i \geq0$ steps (thus, $p_0 = q$). 
After at most $|Q|+1$ transitions, one of the following  situations has to occur:
\begin{itemize}
\item There exists $0 \leq i \leq |Q|$ such that that the computation terminates with $(p_i,c_i)$. We 
add the edge $q \xrightarrow{u} f$ to the graph $G$, where $u = (p_0,0) (p_1, c_1) \dots (p_i,c_i)$ and the call
$\comp(q)$ terminates.
\item There exists $1 \leq i \leq |Q|+1$ such that $c_i = 0$. We add the edge $q \xrightarrow{u} p_i$
to the graph $G$, where $u = (p_0,0)  (p_1, c_1) \dots (p_{i-1},c_{i-1})$ and call
$\comp(p_i)$.
\item $c_i > 0$ for all $1 \leq i \leq |Q|+1$ and there exists $1 \leq j < l \leq |Q|+1$ such that $p_j = p_l$ and
$c_j \leq c_l$. We 
add the edge $q \xrightarrow{u} f$ to the graph $G$, where $u = u_1 (u_2)^\omega_{+k}$,
$u_1 = (p_0,0)  (p_1, c_1) \dots (p_{j-1},c_{j-1})$, $u_2 = (p_j,c_j)  (p_{j+1}, c_{j+1}) \dots (p_{l-1},c_{l-1})$
and $k = c_l - c_j$, and  the call
$\comp(q)$ terminates.
\item $c_i > 0$ for all $1 \leq i \leq |Q|+1$ and there exists $1 \leq j < l \leq |Q|+1$ such that $p_j = p_l$ and
$c_j > c_l$. Let $k = c_l - c_j < 0$. We compute in polynomial time the binary encoding of the 
largest number $m \geq 0$ such that
$c_i + m k > 0$ for all $j \leq i \leq l-1$. That means that the computation of $\A$ starts from $(q,0)$ 
with the data word $u \prod_{i=0}^m v_{+ik}$, where
$$
u = (p_0,0) (p_1, c_1) \dots (p_{j-1}, c_{j-1}) \text{ and } v = (p_j,c_j)  (p_{j+1}, c_{j+1}) \dots (p_{l-1},c_{l-1}).
$$
Moreover, by the choice of $m$, 
there exists an $i \in [j, l-1]$ such that $c_i + (m+1) k \leq 0$. Let $i$ be minimal with this property and define
$$
w = (p_j,c_j+(m+1)k) \dots (p_{i-1},c_{i-1}+(m+1)k) .
$$
This implies that the computation of $\A$ starts from $(q,0)$ 
with the data word $u (\prod_{i=0}^m v_{+ik})  w$. Moreover, if $c_i + (m+1) k < 0$ the computation
terminates in the configuration $(p_{i-1},c_{i-1}+(m+1)k)$  and we add to $G$ an edge from $q$ to $f$
labelled with $u (\prod_{i=0}^m v_{+ik})  w$. On the other hand, if $c_i + (m+1) k = 0$ then we add to $G$
an edge from $q$ to $p_i$ labelled with $u (\prod_{i=0}^m v_{+ik}) w$ and call $\comp(p_i)$.
\end{itemize}
We did not make the effort to make the above four cases non-overlapping; ties are broken in an arbitrary way.
As mentioned above we start the overall algorithm with the call $\comp(q_0)$ and terminate as soon as 
(i) an edge to node $f$ is added or (ii) the graph $G$ contains a cycle. This ensures that there is a unique 
path in $G$ starting in $q_0$ that either ends in $f$ (in case (i)) or enters a cycle (in case (ii)).
In case (i) the data word $\comp(\A)$
is $u_1 u_2 \dots u_n$, where the data words $u_1, u_2, \ldots, u_n$ label the path from $q_0$ to $f$
(note that $u_n$ can be an infinite data word). 
In case (ii) the data word $\comp(\A)$ is $u_1 u_2 \dots u_n (u_{n+1} \dots u_m)^\omega_{+0}$, where 
the data words $u_1, u_2, \ldots, u_n$ label the path from $q_0$ to the first state  on the cycle,
and $u_{n+1}, \ldots, u_m$ label the cycle. Finally, note that all data words that appear as labels in $G$
can be represented by SLPs that can be computed in polynomial time. For the label
$u (\prod_{i=0}^m v_{+ik}) w$ this follows from Lemma~\ref{lemma-SLP-iteration}.
\qed

\medskip
\noindent
The algorithm from the above proof can be also used to check in polynomial time whether a given binary encoded OCM is
deterministic, which is not clear from our definition of DOCMs. 
We run the same algorithm as in the above proof but check in every step whether more than one successor configuration exists. 

Let $\logic$ be one of the logics considered in this paper.
The {\em model-checking problem for $\logic$ over unary (resp., binary) encoded DOCMs} 
asks for a given unary (resp., binary) encoded DOCM $\mathcal{A}$ 
and a formula $\varphi \in \logic$,
whether $\comp(\A) \models_{logic} \varphi$ holds.
Based on Lemma~\ref{lemma-doca-to-data-word} we can now easily show:

\begin{theorem} \label{lemma-reduction-infinite-path-DOCM}
	Let $\logic$ be one of the logics considered in this paper. The model-checking problem for $\logic$ over unary encoded DOCMs
	is equivalent with respect to logspace reductions to the path-checking problem for $\logic$ over infinite unary
encoded data words.
\end{theorem}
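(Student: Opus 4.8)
The plan is to establish logspace interreducibility in both directions, using Lemma~\ref{lemma-doca-to-data-word} for the forward direction and an explicit DOCM construction for the backward direction.

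For the reduction from model checking to path checking, I would proceed as follows. Given a unary encoded DOCM $\mathcal{A}$ and a formula $\varphi \in \mathsf{L}$, first apply Lemma~\ref{lemma-doca-to-data-word} to decide in logspace whether $\comp(\mathcal{A})$ is finite or infinite. In the infinite case the same lemma yields, in logspace, unary encoded $u_1, u_2$ and a unary encoded $k$ with $\comp(\mathcal{A}) = u_{1}(u_{2})^{\omega}_{+k}$; the data word given by $(u_1,u_2,k)$ together with $\varphi$ is the desired instance of infinite path checking, and correctness is immediate since $\comp(\mathcal{A}) \models_{\mathsf{L}} \varphi$ by definition. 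In the finite case, Lemma~\ref{lemma-doca-to-data-word} produces the finite computation in logspace, and I would pad it into an infinite data word by repeating a fresh neutral suffix and relativising $\varphi$ to the original prefix (a standard logspace transformation), thereby again obtaining an infinite path-checking instance.

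For the converse direction I would turn an infinite path-checking instance $w = u_{1}(u_{2})^{\omega}_{+k}$ into a DOCM whose unique computation is essentially $w$. The machine is a lasso: a chain of states reads off the positions of $u_1$ and a cycle of states reads off the positions of $u_2$, where every edge carries the operation $\add(d_{i+1}-d_i)$ for the successive data values, and the single wrap-around edge of the cycle carries $\add(e_0 - e_{n-1} + k)$, so that one traversal of the cycle raises the counter by exactly $k$ (here $u_2 = e_0 \cdots e_{n-1}$). Since each state has exactly one outgoing edge and the counter always equals the corresponding (non-negative) data value, this machine is deterministic and its computation has data-value sequence equal to $w$, except that a computation must start with counter value $0$. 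To repair this I would exploit that all logics considered compare only \emph{differences} of data values, so that $w \models_{\mathsf{L}} \varphi$ is invariant under a global shift of the data values (this is exactly what underlies the relative semantics and Lemma~\ref{equivalence of semantics}); I prepend a counter ``ramp'' $0,1,\dots,d_0-1$ in front of $w$ (realised by a chain of $\add(1)$ edges starting from $(q_0,0)$, of polynomial length since the encoding is unary) and compensate in the formula by prefixing $\X^{d_0}$ together with a freeze $x_1.\cdots x_r.$ of the register variables occurring in $\varphi$, so that evaluation effectively starts at the first genuine position of $w$ with the correct initial valuation.

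The main obstacle is precisely this first-value-zero normalisation: making the constructed machine's computation start at counter value $0$ while keeping all counter values non-negative and preserving the meaning of $\varphi$ under a logspace, logic-preserving formula transformation. For the full logics $\mtl$ and $\tptl$ and their $\X$-containing fragments the ramp-plus-$\X^{d_0}$ device works directly; for the weaker pure and $\F$-only fragments one must check that the analogous normalisation stays inside the fragment (or observe that the data words arising in the corresponding lower-bound proofs already begin at $0$). Verifying determinism of the lasso machine, the net period increment $k$, and the logspace bound on all constructions is then routine.
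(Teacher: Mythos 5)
Your proposal takes essentially the same route as the paper: the forward direction is exactly an application of Lemma~\ref{lemma-doca-to-data-word} (your explicit padding of the finite-computation case into an infinite instance is fine, and is in fact more detailed than the paper, which leaves this implicit), and the backward direction builds the same lasso-shaped DOCM whose counter values trace $w=u_1(u_2)^\omega_{+k}$, with the wrap-around edge $\add(e_0-e_{n-1}+k)$ realizing the period offset. There is, however, one genuine omission. A path-checking instance is a data word over $2^\myprop\times\N$, and $\psi$ may mention propositions from $\myprop$, whereas the positions of $\comp(\A)$ carry control states rather than the original propositional labels. Your construction only matches the \emph{data values}, so as written the equivalence $w\models\psi$ iff $\comp(\A)\models\psi'$ fails for any non-pure $\psi$. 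The paper's proof makes the missing step explicit: replace every proposition $p$ of $\psi$ by the disjunction of those states of $\A$ that sit at positions where $p$ holds, which is sound because no state repeats among the first $|u_1u_2|$ positions of $\comp(\A)$ and the cycle states recur with exactly the period of $w$'s propositional labelling. Your argument needs this substitution added; it is routine, but without it the reduction is incorrect.

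On the initial-value normalisation, your solution is heavier than necessary, though not wrong. Since edges may carry $\add(a)$ for an arbitrary unary-encoded $a\in\Z$, a single first edge $\add(d_0)$ out of the initial configuration $(q_0,0)$ yields the counter sequence $0,d_0,d_1,\dots$, i.e., $w$ with a single dummy zero prepended; this is what the paper does, and it replaces your ramp of $d_0$ unit increments and the prefix $\X^{d_0}$ by a single $\X$ (together with the re-freezing of the registers for $\tptl$, which you correctly identify as needed because the initial valuation is tied to the first data value). With only one dummy position the global-shift-invariance argument is also unnecessary: the suffix from position $1$ \emph{is} $w$. Your caveat about $\F$-only and pure fragments is legitimate, and the paper glosses over it with the phrase that one ``easily'' obtains $\psi'$; the single-dummy-position variant at least minimises what the relativisation must express.
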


\begin{proof}
	The reduction from the model-checking problem for $\logic$ over DOCMs to the  path-checking problem for $\logic$ over infinite unary
encoded data words follows from Lemma~\ref{lemma-doca-to-data-word}. For the other direction take a unary encoded infinite data word
 $w = u_1 (u_2)^\omega_{+k}$ and a formula $\psi \in \logic$. 
 It is straightforward to construct (in logspace) from $u_1, u_2$ a unary encoded DOCM $\A$ such that
 the infinite sequence of counter values produced by $\A$ is equal to the sequence of data values in $w$ with an initial
 $0$ (that comes from the initial configuration $(q_0,0)$) added.
 Moreover, no state of $\mathcal{A}$ repeats among the first  $|u_1u_2|$ many positions in $\comp(\A)$.
 Hence, by replacing every proposition in $\psi$ by a suitable disjunction of states of $\A$  
 we easily obtain a formula 
 $\psi' \in \logic$ such that $w \models \psi$ if, and only if,  $\comp(\A) \models \psi'$.
 \end{proof}
By Theorem~\ref{lemma-reduction-infinite-path-DOCM}, the left diagram from Figure~\ref{fig-paths} also shows the complexity
results for $\tptl$-model checking over DOCMs.

Finally, for binary encoded DOCMs, Lemma~\ref{lemma-doca-to-data-word-binary} and 
Theorem~\ref{Path check TPTL-upper-PSPACE-SLP} directly imply the following result
($\pspace$-hardness follows by a reduction similar to the one from the proof of Theorem~\ref{lemma-reduction-infinite-path-DOCM}):

\begin{theorem} \label{lemma-reduction-infinite-path-DOCM-binary}
	The model-checking problem for $\tptl$ over binary encoded DOCMs is $\pspace$-complete.
\end{theorem}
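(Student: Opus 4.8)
The plan is to prove membership and hardness separately, obtaining both from results already established in the excerpt.

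For the $\pspace$ upper bound I would proceed in two stages. First, given a binary encoded DOCM $\mathcal{A}$, apply Lemma~\ref{lemma-doca-to-data-word-binary} to compute, in polynomial time, either an SLP for the finite computation $\comp(\mathcal{A})$, or two SLPs $\mathcal{G}_1,\mathcal{G}_2$ together with a binary encoded number $k$ such that $\comp(\mathcal{A}) = \val(\mathcal{G}_1)(\val(\mathcal{G}_2))^\omega_{+k}$. Since these SLPs are produced in polynomial time, they have polynomial size. Second, feed this SLP-representation of $\comp(\mathcal{A})$ together with the input formula into the algorithm of Theorem~\ref{Path check TPTL-upper-PSPACE-SLP}, which decides path checking for $\tptl_b$ over (finite or infinite) SLP-encoded data words in $\pspace$. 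As the first stage runs in polynomial time and the second in polynomial space on a polynomially sized input, the composition stays in $\pspace$.

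For $\pspace$-hardness I would reduce from the path checking problem for $\tptl_b$ over infinite binary encoded pure data words, which is $\pspace$-hard because it subsumes path checking for $\fltl^2$ from Theorem~\ref{theorem-PSPACE-lower-bound-3} (note that $\fltl$ is a fragment of $\tptl$). The reduction mirrors the one in the proof of Theorem~\ref{lemma-reduction-infinite-path-DOCM}, but now with binary encoding. Given an instance $w = u_1(u_2)^\omega_{+k}$ and a formula $\psi$, I would build a binary encoded DOCM $\mathcal{A}$ whose counter sequence reproduces the data values of $w$ with an extra initial $0$: a chain of fresh states traces $u_1$ using transitions $\add(d_{i+1}-d_i)$ (binary encoded, possibly negative, but keeping the counter non-negative since every $d_i \geq 0$), and a cycle of fresh states traces $u_2$, with the operations along the cycle summing to $k$, so that each repetition shifts the values by $+k$ and exactly realises the offset. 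Determinism is immediate since every constructed state has outdegree one, and $\mathcal{A}$ has one state per position of $u_1 u_2$, hence polynomial size. Finally the input formula is adjusted by prefixing one $\X$ to skip the spurious initial $0$; in the pure case no propositions occur, and otherwise one replaces each proposition by the disjunction of the (distinct) states at which it holds, exactly as in Theorem~\ref{lemma-reduction-infinite-path-DOCM}. This yields $\comp(\mathcal{A}) \models_\tptl \psi'$ if, and only if, $w \models_\tptl \psi$.

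The determinism and non-negativity checks are routine; the only points requiring genuine care are on the upper-bound side, namely that Lemma~\ref{lemma-doca-to-data-word-binary} really delivers \emph{polynomial-size} SLPs — this is where the binary-encoded counter offsets get compressed via Lemma~\ref{lemma-SLP-iteration} — and that the period offset $k$ of the machine's computation is faithfully captured in the form $\val(\mathcal{G}_1)(\val(\mathcal{G}_2))^\omega_{+k}$, so that Theorem~\ref{Path check TPTL-upper-PSPACE-SLP} applies verbatim to $\comp(\mathcal{A})$.
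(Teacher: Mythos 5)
Your proposal is correct and follows the paper's own route exactly: the upper bound composes Lemma~\ref{lemma-doca-to-data-word-binary} (polynomial-time computation of an SLP-representation of $\comp(\mathcal{A})$, in the finite as well as the infinite periodic case) with the $\pspace$ path checking algorithm of Theorem~\ref{Path check TPTL-upper-PSPACE-SLP}, and the lower bound is, as the paper itself states, a reduction ``similar to the one from the proof of Theorem~\ref{lemma-reduction-infinite-path-DOCM}'' instantiated with binary encoded updates, which is precisely what you carry out. One nitpick: for an \emph{arbitrary} formula, prefixing a single $\X$ does not repair the initial register valuation, which in $\comp(\mathcal{A})$ is frozen to the spurious initial $0$ rather than to $d_0$, so in general one should take $\X(x_1.x_2.\cdots x_r.\,\psi)$; for the concrete hardness instance from Theorem~\ref{theorem-PSPACE-lower-bound-3}, whose formula immediately refreezes $x$ and $y$, your simpler adjustment already suffices.
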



\section{Summary and Open Problems}

Figure~\ref{fig-paths} collects our complexity results for path-checking problems. We use $\tptl^{\ge 2}$ to denote the fragments in which at least $2$ registers are used. 
We observe that in all cases whether data words are pure or not does not change the complexity. 
For finite data words, the complexity does not depend upon the encoding of data words (unary or binary), and for $\tptl$ and $\rmtl$, it does not depend on whether a data word is monotonic or not. 
In contrast to this, for infinite data words, these distinctions indeed influence the complexity: for binary encoded data words the complexity picture looks different from the picture for unary encoded or (quasi-)monotonic data words.

We leave open the precise complexity of the model-checking problem for $\mtl$ and $\tptl^1$ over DOCMs. Model checking $\mtl$ over data words that are represented by SLPs is $\pspace$-complete (and this even holds for $\ltl$, see~\cite{DBLP:conf/concur/MarkeyS03}), but the SLPs that result from DOCMs have a very simple form (see the proof of Lemma 6.2). This may be helpful in proving a polynomial time upper bound. 

Kuhtz proved in his thesis \cite{Kuh10} that the tree checking problem for $\textup{CTL}$ (i.e., the question whether
a given $\textup{CTL}$ formula holds in the root of a given tree)  can be solved in $\textup{AC}^2(\textup{LogDCFL})$. 
One might try to combine this result with the $\textup{AC}^1(\textup{LogDCFL})$-algorithm of 
Bundala and Ouaknine \cite{DBLP:conf/icalp/BundalaO14} for $\mtl$ path checking over monotonic data words. 
There is an obvious $\textup{CTL}$-variant of $\mtl$ that
might be called $\textup{MCTL}$, which to our knowledge has not been studied so far. 
Then, the question is whether the tree checking problem for $\textup{MCTL}$ is in $\textup{AC}^2(\textup{LogDCFL})$. Here
the tree nodes are labelled with data values and this labelling should be monotonic in the sense that if $v$ is a child
of $u$, then the data value of $v$ is at least as large as the data value of $u$. 

Similarly to $\mtl$, also $\tptl$ has a $\textup{CTL}$-variant. One might try
extend our polynomial time path checking algorithm for the one-variable $\tptl$-fragment to this $\textup{CTL}$-variant. 

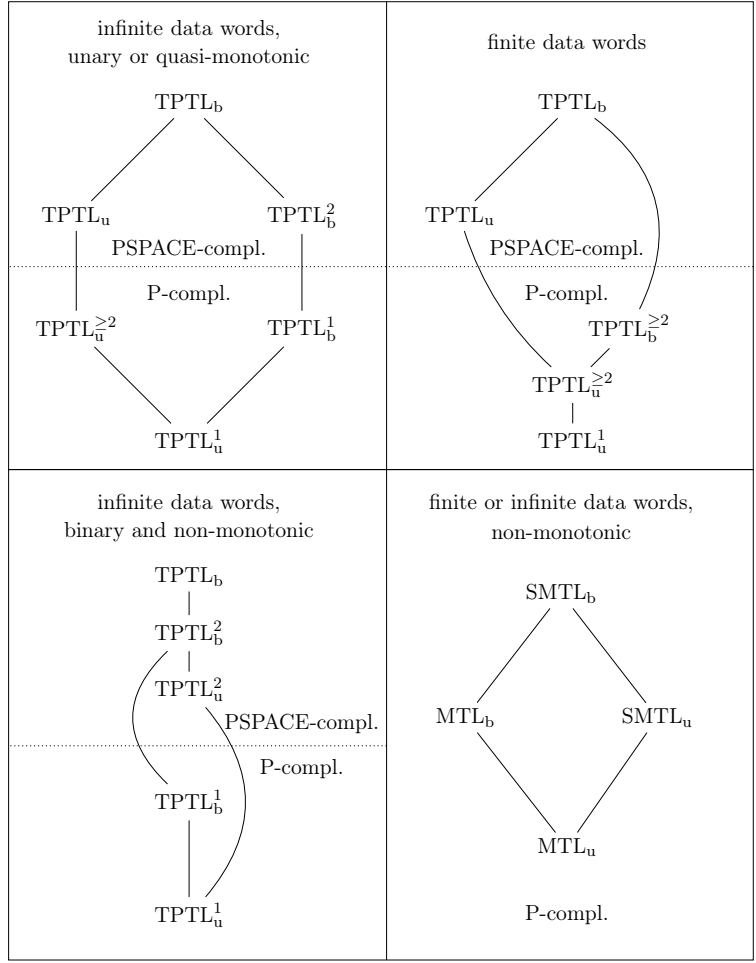
\begin{figure}[H]
\begin{center}
\setlength{\unitlength}{1mm}
 \scalebox{.75}{\begin{picture}(128,160)(0,-84)
    \gasset{Nframe=n,AHnb=0,Nadjust=wh,Nadjustdist=1}
    \node(1u-inf)(30,2){$\tptl^1_{\textup{u}}$}
    \node(ru-inf)(10,22){$\tptl^{\ge 2}_{\textup{u}}$}
    \node(1b-inf)(50,22){$\tptl^1_{\textup{b}}$}
    \node(u-inf)(10,42){$\tptl_{\textup{u}}$}
    \node(2b-inf)(50,42){$\tptl^2_{\textup{b}}$}
     \node(b-inf)(30,62){$\tptl_{\textup{b}}$}

     \node(1)(30,75){infinite data words,}
     \node(12)(30,70){unary or quasi-monotonic}
     \node(3)(30,-9){infinite data words,}
     \node(32)(30,-14){binary and non-monotonic}
     \node(2)(97,73){finite data words}
      \node(4)(30,28){$\ptime$-compl.}
      \node(5)(97,28){$\ptime$-compl.}
      \node(8)(50,-56){$\ptime$-compl.}
      \node(6)(30,36){$\pspace$-compl.}
      \node(7)(97,36){$\pspace$-compl.}
      \node(9)(50,-48){$\pspace$-compl.}


%

      \node(13)(97,-82){$\ptime$-compl.} 
      \node(14)(96,-9){finite or infinite data words,}
      \node(142)(96,-14){non-monotonic} 
      \node(sb)(96,-25){$\smtl_{\textup{b}}$} 
      \node(su)(113,-47){$\smtl_{\textup{u}}$} 
      \node(mb)(79,-47){$\mtl_{\textup{b}}$} 
      \node(mu)(97,-70){$\mtl_{\textup{u}}$} 
      \drawedge(sb,su){}
      \drawedge(sb,mb){}
      \drawedge(mb,mu){}
      \drawedge(su,mu){}

    \drawedge(1u-inf,ru-inf){}
    \drawedge(1u-inf,1b-inf){}
    \drawedge(ru-inf,u-inf){}
    \drawedge(u-inf,b-inf){}
    \drawedge(1b-inf,2b-inf){}
    \drawedge(2b-inf,b-inf){}

    \node(1u-f)(98,2){$\tptl^1_{\textup{u}}$}
    \node(ru-f)(98,12){$\tptl^{\ge 2}_{\textup{u}}$}
    \node(rb-f)(108,22){$\tptl^{\ge 2}_{\textup{b}}$}
    \node(u-f)(78,42){$\tptl_{\textup{u}}$}
    \node(b-f)(98,62){$\tptl_{\textup{b}}$}
    \drawedge(1u-f,ru-f){}
    \drawedge(ru-f,rb-f){}
    \drawedge[curvedepth=3](ru-f,u-f){}
    \drawedge(u-f,b-f){}
    \drawedge[curvedepth=-10](rb-f,b-f){}

    \drawrect[Nframe=y](-2,-3,130,80)
    \drawrect[Nframe=y](-2,-90,130,-3)
    \drawline(65,-90)(65,80)
    \drawline[dash={0.2 0.5}0](-2,33)(130,33)
    \drawline[dash={0.2 0.5}0](-2,-52)(65,-52)

    \node(1u-inf')(30,-82){$\tptl^1_{\textup{u}}$}
    \node(1b-inf')(30,-62){$\tptl^1_{\textup{b}}$}
    \node(2u-inf')(30,-42){$\tptl^2_{\textup{u}}$}
    \node(2b-inf')(30,-32){$\tptl^2_{\textup{b}}$}
     \node(b-inf')(30,-22){$\tptl_{\textup{b}}$}

    \drawedge(1u-inf',1b-inf'){}
    \drawedge[curvedepth=10](1b-inf',2b-inf'){}
    \drawedge[curvedepth=-10](1u-inf',2u-inf'){}
    \drawedge(2u-inf',2b-inf'){}
    \drawedge(2b-inf',b-inf'){}

 \end{picture}}
\end{center}
\caption{\label{fig-paths} Complexity results of path checking}
\end{figure}

%






\end{document}